\documentclass[aip,jmp,amsmath,amssymb,reprint,onecolumn]{revtex4-1}
\usepackage{amsmath,amssymb,amsthm}
\usepackage{bbm}
\usepackage{graphicx}
\usepackage{placeins}
\usepackage{xcolor}
\usepackage{hyperref}
\usepackage{tikz}
\usetikzlibrary{calc,decorations,decorations.text,decorations.pathmorphing,shapes.callouts,shapes.symbols}
\hypersetup{colorlinks=true,citecolor=black,filecolor=black,linkcolor=black,urlcolor=black
    colorlinks=true, linktoc=all, linkcolor=blue,}
   

\newcommand{\EE}{\mathbb{E}}
\newcommand{\bbP}{\mathbb{P}}

\newcommand{\spacecase}[0]{\vspace{0.3cm} \\}
\newcommand{\Tr}{{\rm Tr}}

\newtheorem{theorem}{Theorem}
\newtheorem{prop}[theorem]{Proposition}
\newtheorem{definition}[theorem]{Defintition}
\newtheorem{lemma}[theorem]{Lemma}
\newtheorem{conjecture}[theorem]{Conjecture}
\newtheorem{hypothesis}[theorem]{Hypothesis}

\def\E{\mathbb{E}}

\def\cE{\mathcal E}
\newcommand{\tbf}[1]{{\bold{#1}}}

\begin{document}

\title{Storage capacity in symmetric binary perceptrons \vspace{0.5cm}}
\author{Benjamin Aubin}
\affiliation{Institut de physique th\'eorique, Universit\'e Paris
  Saclay, CNRS, CEA Saclay, F-91191 Gif-sur-Yvette, France
}
\author{Will Perkins}
\affiliation{Department of Mathematics, Statistics and Computer Science, University of Illinois, Chicago, USA}

\author{Lenka Zdeborov\'a}
\affiliation{Institut de physique th\'eorique, Universit\'e Paris
  Saclay, CNRS, CEA Saclay, F-91191 Gif-sur-Yvette, France
}

\date{\today}
\begin{abstract}
We study the problem of determining the capacity of the binary perceptron for two
  variants of the problem where the corresponding constraint is
  symmetric. We call these variants the rectangle-binary-perceptron (RPB)
  and the $u-$function-binary-perceptron (UBP). We show that, unlike for the
  usual step-function-binary-perceptron, the critical capacity in these
  symmetric cases is given by the annealed computation in a large region of parameter space (for all rectangular constraints and for narrow enough $u-$function constraints,
 $K<K^*$).  We prove this fact (under two natural assumptions) using the first and second moment
methods.  We further use the second moment method to conjecture that solutions of the symmetric binary perceptrons are organized in a so-called
frozen-1RSB structure, without using the replica method. We then use
the replica method to estimate the capacity threshold for the UBP case
when the $u-$function is wide $K>K^*$. We conclude that
full-step-replica-symmetry breaking would have to be evaluated in
order to obtain the exact capacity in this case.
\end{abstract}
\maketitle

\section{ Introduction }

In this paper we revisit the problem of computing 
the capacity of the binary perceptron \cite{1,2} for storing random
patterns. This problem lies at the core of early statistical physics studies
of neural networks and their learning and generalization properties,
for reviews see
e.g. \cite{watkin1993statistical,seung1992statistical,engel2001statistical,NishimoriBook01}.
While the perceptron problem is motivated by studies of simple
artificial neural
networks as discussed in detail in the above literature, in this
paper we view it as a random constraint satisfaction problem (CSP) where the
vector of binary  weights $\textbf{w} \in \{ \pm 1\}^N$ (a \textit{solution}) must satisfy $M$
{\it step} constraints of the type 
\begin{equation}
           \sum_{i=1}^N  X_{\mu i} w_i \ge K\, ,  \label{eq_constraints}
\end{equation}
where $\mu=1,\dots,M$, $K\in \mathbbm{R}$ is the \textit{threshold}, the random variables $X_{\mu i}$ are  $iid$ Gaussian
variables with zero mean and variance $1/N$, and the rows of the matrix
$\tbf{X}\in \mathbb{R}^{M \times N}$ are called patterns.  
We define an indicator function associated to the perceptron with
a step constraint as $\varphi^s(z) =
\mathbbm{1}_{\displaystyle z \geq K }$.

We say that a given vector $\textbf{w}$ is a solution of the
perceptron instance if all  $M$ constraints given by eq.~\eqref{eq_constraints}
are satisfied. The {\it storage capacity} is then defined similarly to
the satisfiability threshold in random constraint satisfaction
problems: we denote the constraint density as $\alpha \equiv M/N$
and define the storage capacity $\alpha_c(K)$ as the infimum
of densities $\alpha$ such that in the limit $N\to \infty$, 
 with high probability (over the choice of the matrix $\tbf{X}$) there are no solutions. It is natural to conjecture that the converse also holds, i.e. the storage capacity $\alpha_c(K)$ equals the supremum of $\alpha$  such  that in the limit $N\to \infty$ solutions exist with high probability.  In this case we would say the storage capacity is a \textit{sharp threshold}.

Gardner and Derrida in their paper \cite{1} assume the
storage capacity $\alpha_c(K)$ is a sharp threshold and they apply the replica calculation to
compute it, but reach a result inconsistent with a simple upper bound
obtained by the first moment method. M\'ezard and Krauth \cite{2} found
a way to obtain a consistent prediction from the replica calculation
and concluded that the storage capacity $\alpha^s_c(K)$ for the step binary perceptron
(SBP), i.e. associated to the constraint $\varphi^s$, is given by the largest $\alpha$ for which the following quantity, the  \textit{entropy} in physics, is positive:
\begin{equation}
\phi_{\rm RS}^{s}(\alpha,K) = \textrm{extr}_{q_0,\hat{q}_0}\left\{ \frac{1}{2}\left(q_0 - 1\right)\hat{q}_0 +  \int Dt \log \left[ 2 \cosh \left( t\sqrt{\hat{q}_0}\right)   \right ]
 + \alpha \int Dt \log\left[  \int^{\infty}_{\frac{K-t\sqrt{q_0}}{\sqrt{1-q_0}}} Du      \right] \right\} \, , 
 \label{RS_capacity}
\end{equation}
where $Dt = \frac{e^{-t^2/2}}{\sqrt{2\pi}} dt$ is a normal Gaussian
measure, and "$\rm extr$" means that the
expression is evaluated where the derivatives on the curl-bracket, with
respect to $q_0\ge 0$ and $\hat q_0 \ge 0$, are zero.

Several decades of subsequent research in the statistical physics of disordered systems are consistent with the conjectured M\'ezard-Krauth formula for the storage capacity of the binary perceptron. 
Despite the simplicity of the above conjecture and decades of impressive progress in the mathematics of spin glasses and related problems, (see e.g. \cite{talagrand2006parisi,talagrand2003spin,8,achlioptas2011solution,panchenko2014parisi,ding2015proof} and many others), the storage capacity of the binary perceptron remains an open mathematical problem. In fact, even the very existence of a sharp threshold, i.e. the fact that in the limit $N\to \infty$ the probability that patterns can be stored drops sharply from one to zero at the capacity, is an open problem. 
Up to very recently only widely non-matching upper bounds and lower
bounds for the storage capacity of the binary perceptron were available
\cite{kim1998covering,stojnic2013discrete}. As the present work was
being finalized Ding and Sun \cite{ding2018capacity} proved in a remarkable paper a lower
bound on the capacity that matches the
Krauth and Mezard conjecture (note that much like Theorem~\ref{thmMain} below, the main theorem in~\cite{ding2018capacity} depends on a numerical hypothesis).  A matching upper bound remains an open challenge in
mathematical physics and probability theory. 

In this paper we introduce two simple \textit{symmetric} variants of the binary
perceptron problem.  Let $z_\mu (\textbf{w}) = \sum_{i=1}^N  X_{\mu i} w_i$.  For a threshold $K\in \mathbbm{R}^+$, we consider
two different types of symmetric constraints: 
\begin{itemize}
	\item The rectangle binary perceptron (RBP) requires $|z_\mu|\le K, 
          \forall \mu=1,\dots,M$. Its associated indicator function is $\varphi^r(z) = \mathbbm{1}_{\displaystyle |z| \leq K }$.
	\item The $u$-function binary perceptron (UBP)
          requires $|z_\mu|\ge K, \forall \mu=1,\dots,M $. Its associated indicator function is $\varphi^u(z) = \mathbbm{1}_{\displaystyle |z| \geq K }$.
\end{itemize}
These constraints are symmetric in the sense that if $\textbf{w}$ is a
solution then $-\textbf{w}$ is a solution as well. Our motivation
behind these symmetric variants of the perceptron is that this
symmetry simplifies greatly the mathematical treatment of the problem,
while keeping the relevant physical properties intact. Thus, results
that remain open questions for the canonical perceptron can be established
rigorously for these symmetric versions. Symmetric perceptron models are also directly related to the  problem of determining the
discrepancy of a random matrix or set system 
\cite{BansalSpencer19}, a problem of interest in combinatorics. 

 
The main result of the present paper, presented in section~\ref{section:proof}, is a proof, subject to a numerical hypothesis, of a formula for the storage
capacity, defined in the same way as for the step-function binary
perceptron above. In particular, we show that in these
symmetric variants the first
moment upper bound (corresponding to the annealed capacity in physics)
on the storage capacity is tight (except for $K > K^* \simeq 0.817$
for the UBP case). We prove this statement using the
second moment method. 
We note that the existing physics literature on perceptron-like
problem contains other cases of models where the first moment upper
bound on the storage capacity was observed to be tight, in particular the parity
machine \cite{opper1995statistical}, and the reversed-wedge binary
perceptron \cite{bex1995storage,hosaka2002statistical}. Those works,
however, rely on the comparison of the first moment bound on the capacity with the
result of the replica method, rather than providing a rigorous justification. 

To formally state our main result,
let $Z \sim  \mathcal{N}(0,1)$, and for $K \in \mathbbm{R}^{+}$ let 
$p_{r,K}= \bbP[|Z| \le K]$ and $p_{u,K} = \bbP[|Z| \ge K]$. 
\begin{itemize}
	\item The storage capacity for the rectangle binary perceptron is:
	\begin{equation}
 \alpha_c^{r}(K) = \frac{-\log(2)}{\log(p_{r,K})} \hspace{0.5cm}
 \forall K \in \mathbb{R}^+  \, . \label{cap_rec}
\end{equation}
 \item The storage capacity for the $u-$function binary perceptron is: 
 \begin{equation}
      \alpha_c^{u}(K)= \frac{-\log(2)}{\log(p_{u,K})} \hspace{0.5cm}  {\rm for }\hspace{0.2cm} 0<K < K^* \simeq 0.817\, . \label{cap_ss}
\end{equation}
\end{itemize}
The constant $K^* \simeq 0.817 $ stems from the properties of the
second moment entropy eq.~\eqref{main:AT_second_moment}. In the physics terms it is defined as the point of intersection between the annealed
capacity $\alpha_a^{u}(K)$ and the local stability of the RS solution $\alpha_{\rm AT}^{u}(K)$ eq.~\eqref{main:AT_crossover_RS}. That is,  $K^*$ is the solution of the following equation:
\begin{equation}
        \pi p_{u,K}^2e^{K^2}\log(p_{u,K}) =-2 \log (2) K^2\, .
         \label{AT_crossover_RS}
\end{equation}

The two symmetric variants of the perceptron problem  considered here share many
of the intriguing geometric properties of the original step-function
binary perceptron problem. Most significant  is  the conjectured frozen-1RSB \cite{2} nature of the
space of solutions that splits into well separated clusters of vanishing entropy at
any $\alpha>0$. Remarkably, this frozen-1RSB property can  be deduced
from the form of the second moment entropy as we explain in section
\ref{section:frozen}. Our justification of the frozen-1RSB property does not rely on
the replica method and is hence of independent interest. 

For the UBP and $K > K^*$, the second-moment proof technique fails, and
this failure marks tightly the onset of the replica symmetry breaking
region. In that region, we evaluate the one-step replica symmetry
breaking (1RSB) approximation for the storage capacity, but conclude
that full-step replica symmetry breaking (FRSB) would be needed to
obtain the exact result. While the FRSB equations can be written along
the lines of \cite{20}, they are more involved than the ones for the
Sherrington-Kirkpatrick model \cite{parisi1979infinite,parisi1980sequence,parisi1980order}, and solving them numerically or
getting additional insight from them is a challenging task left for
future work. We present the replica analysis in section~\ref{section:replicas}. Table \ref{tab_summary} contains the summary of our main results along with the
predictions for the step-function perceptron. 

\setlength{\tabcolsep}{11pt}
\renewcommand{\arraystretch}{1.5}
\begin{table}
\centering
\begin{tabular}{|c||c|c|c|c|}
\hline
Binary perceptron & Constraint & Constraint function & Range of $K$ & Storage capacity \\
\hline
Step-function   & $z \geq K$      & $\varphi^s(z)=\mathbbm{1}_{\displaystyle z \geq K } $ \vspace{0.01cm}     & $\forall K \in {\mathbb R}$  & RS eq.~\eqref{RS_capacity}     \\
\hline
Rectangle   & $|z| \leq K$     & $\varphi^r(z) =\mathbbm{1}_{\displaystyle |z| \leq K } $ \vspace{0.01cm}      & $\forall K  \in {\mathbb R}^+$ & Annealed eq.~\eqref{cap_rec}       \\
\hline
$U$-function   & $|z|\geq K$   &  $\varphi^u(z)=\mathbbm{1}_{\displaystyle |z| \geq K } $ \vspace{0.01cm}     & $ 0< K<K^*=0.817$ & Annealed eq.~\eqref{cap_ss}
                                              \\
\hline
$U$-function &  $|z|\geq K$   &  $\varphi^u(z)=\mathbbm{1}_{\displaystyle |z| \geq K } $ \vspace{0.01cm}    & $\forall K>K^*=0.817$ & FRSB?           \\
\hline
\end{tabular}
\caption{This table summarizes results for storage capacity in binary
  perceptrons with different types of constraints. The result for canonical step-function is from \cite{2}. The results for the rectangle and $u$-function
  are obtained in this paper.}
\label{tab_summary}
\end{table}

Finally let us comment on the simpler and more commonly considered case of spherical perceptron where the binary constraint on the vector $\tbf{w}$ is replaced by the spherical constraint $\tbf{w}^\intercal\tbf{w} = \sum_{i=1}^N w^2_i = N$. For $K=0$ the spherical perceptron reduces to the famous problem of intersection of half-spaces with capacity $\alpha_c=2$ as solved by Wendell~\cite{wendel1962problem} and Cover \cite{cover1965geometrical}. For $K>0$ the Gardner-Derrida solution \cite{1}
 is correct as proven in \cite{shcherbina2003rigorous,stojnic2013another}. For $K<0$ the situation is more challenging and FRSB is needed to compute the storage capacity; for recent progress in physics see \cite{franz2016simplest,20}, while mathematical considerations about this case were presented in \cite{stojnic2013negative}.

\section{ Proof of correctness of the annealed capacity}
\label{section:proof}

To state the main results precisely we introduce some definitions.  Let $\tbf{X}({N,M})$ be the random $M \times N$ pattern matrix. Define the partition functions
\begin{align*}
	\mathcal Z_r(\tbf{X}) = \displaystyle \sum_{ \textbf{w} \in \{\pm 1\}^N} \prod_{\mu = 1}^M  \varphi^r ( \displaystyle z_\mu (\textbf{w}) ) 
	\hspace{0.3cm} \textrm{ and }\hspace{0.3cm}
	\mathcal Z_u(\tbf{X}) = \displaystyle \sum_{ \textbf{w} \in \{\pm 1\}^N} \prod_{\mu = 1}^M \varphi^u ( \displaystyle z_\mu (\textbf{w}) ) \,,
\end{align*}


which count respectively the number of solutions for the rectangle and $u-$function constraints respectively.
 Let  $\cE^r(N,M)$ and  $\cE^u(N,M)$ be the events that $\mathcal Z_r(\tbf{X})\ge1$  and $\mathcal Z_u(\tbf{X})\ge1$. 
 We  formally define the storage capacity.
\begin{definition}
The storage capacity $\alpha_c^r(K)$ is 
\begin{align*}
\alpha_c^r(K) &= \inf \{ \alpha: \lim_{N \to \infty} \bbP[ \cE^r(N,\lfloor \alpha N \rfloor ) ]   =0 \} \,,
\end{align*}
and likewise for $\alpha_c^u(K)$.
\end{definition}

It is believed that there is a sharp threshold for the existence of solutions.  
\begin{conjecture}
\label{conjSharp}
The storage capacity is a sharp threshold:
\begin{align*}
\alpha^r_c(K) &= \sup \{ \alpha: \lim_{N \to \infty} \bbP[ \cE^r(N,\lfloor \alpha N \rfloor ) ]   =1 \} \,,
\end{align*}
and likewise for $\alpha_c^u(K)$.
\end{conjecture}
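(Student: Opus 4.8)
The plan is to prove sharpness by a sandwiching argument. Because adding a pattern can only destroy solutions, the event $\cE^r(N,M)$ is monotone decreasing in $M$, so in the limit $\bbP[\cE^r(N,\lfloor \alpha N\rfloor)]$ is a non-increasing function of $\alpha$; consequently the set of $\alpha$ on which the probability tends to $1$ is an interval $[0,\alpha_1)$, the set on which it tends to $0$ is $(\alpha_c,\infty)$, and $\alpha_1\le\alpha_c$. It therefore suffices to show $\alpha_1 = \alpha_c$, i.e.\ that for every $\alpha<\alpha_c$ one in fact has $\bbP[\cE^r]\to 1$ (and symmetrically for the UBP with $K<K^*$). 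The upper side is already in hand: for $\alpha>\alpha_c$ the first moment gives $\EE \mathcal Z_r\to 0$, so Markov's inequality yields $\bbP[\cE^r]\to 0$, which is exactly how $\alpha_c$ is identified with the annealed value \eqref{cap_rec} in Theorem~\ref{thmMain}.

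The work is all on the lower side, and I would split it into two steps. First, for $\alpha<\alpha_c$ the second moment computation underlying Theorem~\ref{thmMain} delivers, via Cauchy--Schwarz (Paley--Zygmund), the bound
\begin{equation*}
\bbP[\cE^r(N,M)] \;=\; \bbP[\mathcal Z_r \ge 1] \;\ge\; \frac{(\EE \mathcal Z_r)^2}{\EE \mathcal Z_r^2} \;\ge\; c(\alpha,K) > 0,
\end{equation*}
so that solutions exist with probability bounded away from zero. Second, I would upgrade this constant probability to $1$ using sharp-threshold theory for monotone symmetric events. The rows $X_\mu$ are i.i.d., and $\cE^r$ is invariant under permutations of the patterns and monotone in their number, so the Margulis--Russo formula together with a Bourgain--Friedgut type criterion applies: such a property has a sharp threshold unless its indicator is well approximated by a function of a bounded number of patterns. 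This ``locality'' alternative can be ruled out directly, since any fixed finite collection of constraints $|z_\mu|\le K$ is simultaneously satisfiable with probability tending to $1$ (a single random $\tbf{w}$ satisfies each with probability $p_{r,K}$ and these events decorrelate as $N\to\infty$); unsatisfiability is an inherently global effect requiring $\Theta(N)$ interacting constraints. Feeding the constant lower bound from the first step into the sharp-threshold dichotomy then forces $\bbP[\cE^r]\to 1$ throughout $[0,\alpha_c)$.

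The main obstacle is precisely the passage from $\Omega(1)$ probability to probability $1$, because the Paley--Zygmund constant $c(\alpha,K)$ is genuinely bounded away from $1$: the ratio $\EE \mathcal Z_r^2/(\EE \mathcal Z_r)^2$ tends to a constant strictly greater than $1$ rather than to $1$, reflecting the expectation that $\mathcal Z_r/\EE \mathcal Z_r$ converges to a non-degenerate (conjecturally log-normal) limit driven by fluctuations in the number of satisfied constraints. A bare second moment therefore cannot by itself certify \emph{whp} existence, and one must either (i) make the sharp-threshold machinery fully rigorous in this Gaussian-disorder setting, controlling the influence of individual patterns near criticality, or (ii) replace the plain second moment by a conditional or truncated second moment, or a small-subgraph-conditioning argument, that pins down the lower tail of $\log \mathcal Z_r$ and shows $\bbP[\mathcal Z_r = 0]\to 0$ directly. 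Carrying out (i) or (ii) rigorously---rather than only to the constant-probability level that the present moment calculations deliver---is exactly the reason the statement is recorded here as a conjecture.
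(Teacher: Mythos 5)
The statement you are addressing is recorded in the paper as Conjecture~\ref{conjSharp}, not as a theorem: the paper offers no proof of it, and explicitly remarks that the missing ingredient is ``a version of Friedgut's sharp threshold result suitable for perceptrons,'' which combined with Theorem~\ref{thmMain} would yield the conjecture. Your outline is therefore not a competing proof but essentially a fuller articulation of the paper's own intended route: monotonicity of $\cE^r(N,M)$ in $M$, the first moment for the upper side, Paley--Zygmund for $\liminf_N \bbP[\cE^r]>0$ below the annealed threshold, and a sharp-threshold boost to push this to probability $1$. You also correctly diagnose why the bare second moment cannot do more: the ratio $\EE[\mathcal Z_r^2]/(\EE[\mathcal Z_r])^2$ converges to a constant strictly larger than $1$ (the $c_2$ of Lemma~\ref{lemLaplace}), so $\mathcal Z_r/\EE[\mathcal Z_r]$ is expected to have a non-degenerate limit and Paley--Zygmund saturates at a constant below $1$.

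The genuine gap is the sharp-threshold step, and it is worth being precise about why it does not close as written. The Bourgain--Friedgut criteria and the Margulis--Russo formula are formulated for monotone events on a product space $\{0,1\}^n$ with respect to a single density parameter $p$; here the randomness is a Gaussian matrix and the natural parameter is the \emph{number} of i.i.d.\ rows $M$, a setting in which no off-the-shelf coarse-threshold dichotomy is available. Your ``locality'' argument --- that any bounded collection of constraints is satisfiable whp, so the event cannot be determined by $O(1)$ patterns --- is the right heuristic for ruling out the junta alternative, but it is not the hypothesis of any theorem that applies to this model; making it one is precisely the open problem the paper flags. Your alternative (ii), a truncated or conditional second moment controlling the lower tail of $\log \mathcal Z_r$, is also plausible but is not carried out. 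In short: the proposal correctly identifies the obstruction and the two candidate ways around it, concedes that neither is executed, and thus leaves the statement exactly where the paper leaves it --- as a conjecture supported by, but not provable from, the moment computations of Section~\ref{section:proof}.
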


The corresponding conjecture for the random k-SAT model is the celebrated `satisfiability threshold conjecture' proved for $k$ large by Ding, Sly, and Sun~\cite{ding2015proof}.

Next, couple two standard Gaussians $Z_1, Z_\beta$ by letting $Z$ and $Z'$ be independent standard Gaussians and setting $Z_1 = \sqrt{\beta} Z + \sqrt{1-\beta} Z'$ and $Z_\beta = \sqrt{\beta} Z - \sqrt{1-\beta} Z'$. 
Let 
\begin{align}
\begin{cases}
q_{r,K}(\beta) &= \bbP[ |Z_1 | \le K \wedge |Z_\beta| \le K ] = q_K
(\beta)\,,  \spacecase 
q_{u,K}(\beta) &=  \bbP[ |Z_1 | \ge K \wedge |Z_\beta| \ge K ] = 1 -
2p_{r,K} + q_K (\beta)  \label{qK}\,, 
\end{cases}
\end{align}

with $q_K(\beta)$ the probability that two standard Gaussians with correlation $2\beta-1$ are both at most $K$ in absolute value, that is:
\begin{align*}
		q_K (\beta) &=  \frac{1}{2\pi}
                              \int_{-K}^{K} dy
                              \int_{\frac{-K+(1-2\beta)y}{2\sqrt{\beta(1-\beta)}}}^{\frac{K+(1-2\beta)y}{2\sqrt{\beta(1-\beta)}}}
                              e^{-\frac{x^2+y^2}{2}} dx \, .      
\end{align*}

Note that $q_{t,K}(1) = p_{t,K}$ and $q_{t,K}(1/2) = p_{t,K}^2$ for $t \in \{r,u\}$.
We now  introduce the functions that dictate the effectiveness of the second moment bound.   Let
\begin{align}
F_{r,K, \alpha}(\beta) &=  H(\beta) +  \alpha \log q_{r,K}(\beta) \\ \vspace{0.3cm}
F_{u,K,\alpha}(\beta) &=  H(\beta) +  \alpha \log q_{u,K}(\beta)   \label{FK}
\end{align}
where $H(\beta) = -\beta \log \beta -(1-\beta) \log (1-\beta)$ is the Shannon entropy function.

We state a numerical hypothesis in terms of the derivatives of these two functions. 
\begin{hypothesis}
\label{hypo}
For all choices of $K>0$ and $\alpha>0$ so that $F''_{r,K,\alpha}(1/2) <0$, there is exactly one $\beta \in (1/2,1)$ so that $F'_{r,K,\alpha}(\beta) =0$.  The same holds for $F_{u,K,\alpha}$.
\end{hypothesis}

Our main theorem is a proof, under Hypothesis~\ref{hypo}, that the storage capacity is given by the annealed computation.
\begin{theorem}
\label{thmMain}
Under the assumption of Hypothesis~\ref{hypo}, the following hold. 
\begin{enumerate}
\item For all $K>0$, we have $\alpha_c^r(K) = -\log(2) / \log (p_{r,K})$. 
\item  For all $K \in (0, K^*)$, we have $\alpha_c^u(K) = -\log(2) / \log (p_{u,K})$.
\end{enumerate}
\end{theorem}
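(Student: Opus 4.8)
The plan is to pin down both sides of the threshold with the first and second moment methods, the standard route to a sharp satisfiability threshold sitting at an annealed value. I treat the two models $t \in \{r,u\}$ in parallel, abbreviating $p := p_{t,K}$, $q(\beta) := q_{t,K}(\beta)$, $F := F_{t,K,\alpha}$, $\mathcal{Z} := \mathcal{Z}_t(\tbf{X})$, and writing $\alpha_a := -\log 2/\log p$ for the annealed value. For the upper bound on the capacity I would start from the first moment. For fixed $\tbf{w}$ the coordinates $z_\mu(\tbf{w})$ are independent standard Gaussians, so each constraint holds with probability $p$ and $\EE[\mathcal{Z}] = 2^N p^M = \exp\{N(\log 2 + \alpha \log p)\}$. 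Since $\log p < 0$, this tends to $0$ precisely when $\alpha > \alpha_a$, and Markov's inequality gives $\bbP[\cE(N,\lfloor \alpha N\rfloor)] = \bbP[\mathcal{Z}\ge 1] \to 0$. Thus every $\alpha > \alpha_a$ lies in the defining set of the capacity, whence $\alpha_c \le \alpha_a$.

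For the matching lower bound I would compute the second moment. Grouping pairs $(\tbf{w},\tbf{w}')$ by the number $k$ of agreeing coordinates gives overlap $\tbf{w}\cdot\tbf{w}'/N = 2\beta-1$ with $\beta = k/N$, so for each $\mu$ the pair $(z_\mu(\tbf{w}),z_\mu(\tbf{w}'))$ is centered Gaussian with unit variances and covariance $2\beta-1$; by the definition of $q(\beta)$ both constraints hold with probability $q(\beta)$. Independence across $\mu$ together with the count $2^N\binom{N}{k}$ of such pairs yields $\EE[\mathcal{Z}^2] = 2^N \sum_{k=0}^N \binom{N}{k}\, q(k/N)^M$. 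Using $\binom{N}{k} = e^{N H(k/N)+o(N)}$, this equals $e^{o(N)}2^N\sum_k \exp\{N F(k/N)\}$, so the exponential rate is set by $\max_\beta F(\beta)$, and as $F$ is symmetric about $\beta = 1/2$ it suffices to control $F$ on $[1/2,1]$. Since $q(1/2)=p^2$ one checks $2^N e^{NF(1/2)} = (\EE\mathcal{Z})^2$, so the method succeeds exactly when $\beta=1/2$ is the global maximizer of $F$.

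The heart of the argument, and the step I expect to be the main obstacle, is proving that $\beta=1/2$ is the global maximizer of $F$ on $[1/2,1]$; the integral $q(\beta)$ has no closed form, a direct convexity analysis is hopeless, and this is exactly what Hypothesis~\ref{hypo} circumvents. By symmetry $F'(1/2)=0$ automatically. I would first check the local stability condition $F''(1/2)<0$, so that $1/2$ is a strict local maximum; a short computation gives $F''(1/2) = -4 + \alpha (\log q)''(1/2)$, and it is the vanishing of this quantity at $\alpha=\alpha_a$ that defines $K^*$ via eq.~\eqref{AT_crossover_RS}. For the RBP it holds for all $K>0$, while for the UBP it holds only when $K<K^*$; this is the source of the $K^*$ restriction. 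Second, since $\log 2 + \alpha \log p > 0$ for $\alpha<\alpha_a$, the endpoint obeys $F(1) = \alpha\log p < \log 2 + 2\alpha\log p = F(1/2)$. Hypothesis~\ref{hypo} then supplies the global information: granted $F''(1/2)<0$, there is exactly one critical point $\beta^* \in (1/2,1)$. Since $F$ is decreasing immediately to the right of $1/2$ and $\beta^*$ is the only interior critical point, $F$ is monotone decreasing on $(1/2,\beta^*)$ and monotone on $(\beta^*,1)$, so $\max_{[1/2,1]}F = \max\{F(1/2),F(1)\} = F(1/2)$, with $\beta=1/2$ the unique maximizer.

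With $\beta=1/2$ identified as the unique, non-degenerate global maximum, I would finish by Laplace's method. Contributions from $\beta$ bounded away from $1/2$ are exponentially smaller, while near $\beta=1/2$ the local Gaussian behaviour governed by $F''(1/2)<0$, combined with the central binomial estimate for $\binom{N}{k}$, sums to a constant multiple of $2^N q(1/2)^M$; hence $\EE[\mathcal{Z}^2] \le C\,(\EE\mathcal{Z})^2$ for a finite constant $C=C(K,\alpha)$. The Paley--Zygmund inequality then gives $\bbP[\mathcal{Z}\ge 1] \ge (\EE\mathcal{Z})^2/\EE[\mathcal{Z}^2] \ge 1/C > 0$ uniformly in $N$, so $\liminf_N \bbP[\cE(N,\lfloor\alpha N\rfloor)] > 0$ for every $\alpha<\alpha_a$. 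In particular the limit is not $0$, so no $\alpha<\alpha_a$ belongs to the defining set of the capacity, giving $\alpha_c \ge \alpha_a$. Together with the first-moment bound $\alpha_c \le \alpha_a$ this yields $\alpha_c = \alpha_a = -\log 2/\log p$ in the stated ranges: all $K>0$ for the RBP, and $0<K<K^*$ for the UBP.
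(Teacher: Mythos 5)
Your proposal is correct and follows essentially the same route as the paper: Markov's inequality on the first moment for the upper bound, then the overlap decomposition $\E[\mathcal Z^2]=2^N\sum_k\binom{N}{k}q(k/N)^{M}$, the check that $F(1/2)>F(1)$ and $F''(1/2)<0$ under the stated restrictions on $\alpha$ and $K$, Hypothesis~\ref{hypo} to promote the local maximum at $\beta=1/2$ to a global one, the Achlioptas--Moore Laplace lemma to get $\E[\mathcal Z^2]\le C(\E\mathcal Z)^2$, and Paley--Zygmund to conclude. The only cosmetic difference is that you leave $F''(1/2)=-4+\alpha(\log q)''(1/2)$ implicit where the paper evaluates it explicitly as $4\bigl(-1+\tfrac{2}{\pi}\alpha K^2e^{-K^2}/p^2\bigr)$, but the logic is identical.
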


Under our definition of $\alpha_c^r(K)$ and $\alpha_c^u(K)$, we must prove two statements to show that $\alpha_c^r(K) = -\log(2) / \log (p_{r,K})$ (and similarly for $\alpha_c^u(K)$).  We use the first moment method to show that for $\alpha> -\log(2) / \log (p_{r,K})$, \\$\lim_{N \to \infty} \Pr(\cE^r(N,M)) =0$; then we use the second moment method to show that for $\alpha < -\log(2) / \log (p_{r,K})$, $\liminf_{N \to \infty} \Pr(\cE^r(N,M)) >0$ (a result analogous to what Ding and Sun prove for the more challenging step binary perceptron~\cite{ding2018capacity}).  Conjecture~\ref{conjSharp} asserts the stronger statement that for $\alpha < -\log(2) / \log (p_{r,K})$, $\lim_{N \to \infty} \Pr(\cE^r(N,M)) =1$.  


\subsection{First moment upper bound}
\label{proof:first_moment}

\begin{prop}
$ $ 
\begin{enumerate}
	\item If $\alpha > \alpha_a^{r}(K) =\frac{-\log(2)}{\log(p_{r,K}) } $, then whp there is no satisfying assignment to the binary perceptron with the rectangle activation function.
	\item If $\alpha > \alpha_a^{u}(K) =\frac{-\log(2)}{\log(p_{u,K}) } $, then whp there is no satisfying assignment to the binary perceptron with the $u$-function activation function.
	\end{enumerate}  
\end{prop}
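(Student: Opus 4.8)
The plan is to apply the first moment method: if the expected number of solutions tends to zero, then by Markov's inequality solutions fail to exist with high probability. Concretely, for the rectangle case I would show that $\E[\mathcal{Z}_r(\tbf{X})] \to 0$ whenever $\alpha > -\log(2)/\log(p_{r,K})$, which combined with the bound $\bbP[\cE^r(N,M)] = \bbP[\mathcal{Z}_r \ge 1] \le \E[\mathcal{Z}_r]$ immediately gives the claim. The $u$-function case is then identical after replacing $p_{r,K}$ by $p_{u,K}$.

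The key computation is the evaluation of $\E[\mathcal{Z}_r]$. By linearity of expectation,
\begin{equation*}
\E[\mathcal{Z}_r(\tbf{X})] = \sum_{\tbf{w} \in \{\pm 1\}^N} \E\left[ \prod_{\mu=1}^M \varphi^r(z_\mu(\tbf{w})) \right].
\end{equation*}
The crucial structural observation is that for any fixed $\tbf{w} \in \{\pm 1\}^N$ the field $z_\mu(\tbf{w}) = \sum_i X_{\mu i} w_i$ is Gaussian with mean zero and variance $\frac{1}{N}\sum_i w_i^2 = 1$, since $w_i^2 = 1$; thus $z_\mu(\tbf{w}) \sim \mathcal{N}(0,1)$ exactly, independently of the particular $\tbf{w}$. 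Moreover, the fields $z_\mu(\tbf{w})$ for distinct $\mu$ are independent because the rows of $\tbf{X}$ are independent.

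Consequently each factor has expectation $\E[\varphi^r(z_\mu(\tbf{w}))] = \bbP[|Z| \le K] = p_{r,K}$, and by independence across $\mu$ the product has expectation $p_{r,K}^M$. Since this value does not depend on $\tbf{w}$ and there are $2^N$ choices, I obtain
\begin{equation*}
\E[\mathcal{Z}_r(\tbf{X})] = 2^N p_{r,K}^M = \exp\left\{ N\left[ \log 2 + \tfrac{M}{N}\log p_{r,K} \right] \right\}.
\end{equation*}
With $M = \lfloor \alpha N \rfloor$ and $\log p_{r,K} < 0$, the bracketed exponent is negative precisely when $\alpha > -\log(2)/\log(p_{r,K})$, so the expectation decays exponentially in $N$ and Markov's inequality closes the argument.

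Since the first moment bound holds unconditionally and is genuinely just a direct calculation, I do not anticipate any real obstacle here. The only point requiring care is verifying that $z_\mu(\tbf{w})$ is \emph{exactly} standard Gaussian, which relies on the binary normalization $\|\tbf{w}\|^2 = N$ together with the variance $1/N$ of the entries; this is exactly what makes the per-constraint satisfaction probability equal to $p_{r,K}$ with no correction terms, so that all $2^N$ terms contribute equally. The substantive difficulty of the paper lies entirely in the matching second moment lower bound, not in this proposition.
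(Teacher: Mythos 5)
Your proposal is correct and is essentially the paper's own argument: both compute the first moment $\E[\mathcal{Z}_r] = 2^N p_{r,K}^{\alpha N}$ (the paper collapses the sum over $\tbf{w}$ to the single representative $\tbf{1}$, exactly as your observation that each term contributes equally justifies) and then apply Markov's inequality. The points you flag as needing care — that $z_\mu(\tbf{w})$ is exactly standard Gaussian and that the constraints are independent across $\mu$ — are precisely the facts the paper uses implicitly.
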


\begin{proof}
We give the proof for the rectangle function as the proof for the $u$-function is identical. Let $\epsilon = \alpha - \alpha_a^{r}(K)  >0$. Let  $\tbf{1}$ denote the vector of dimension $N$ with all $1$ entries. 
\begin{align*}
\bbP[ \cE^r (N, \alpha N) ] &\le \E [\mathcal{Z}_{r}(\tbf{X}(N,\alpha N))] = 2^N \E \left [ \prod_{\mu =1}^{\alpha N} \mathbbm{1} _{\left |z_{\mu}(\tbf{1}) \right| \le K} \right] = 2^N p_{r,K}^{\alpha N} = \exp(N (\log(2)+\alpha \log( p_{r,K})  )) \\
&= \exp (N \epsilon \log (p_{r,K}))   \to 0  \text{ as } N \to \infty \,.
\end{align*}
\end{proof}

\subsection{Second moment lower bound}

\begin{prop} 
\label{prop2ndMoment}
$ $
\begin{enumerate}
\item If $\alpha < \frac{-\log(2)}{\log(p_{r,K}) }$, then  
$$ \liminf_{N\to \infty} \bbP[  \cE^r (N, \alpha N)  ] > 0 .$$ 
\item If $K < K^*$ and $\alpha <\frac{-\log(2)}{\log(p_{u,K}) }$, then 
$$ \liminf_{N \to \infty} \bbP[  \cE^u (N, \alpha N)   ]  >0.$$ 
\end{enumerate}
\end{prop}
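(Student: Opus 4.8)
The plan is to apply the second moment method to $\mathcal Z_r$ (and, identically, to $\mathcal Z_u$), using the Cauchy--Schwarz consequence $\bbP[\mathcal Z_r \ge 1] \ge (\E \mathcal Z_r)^2 / \E[\mathcal Z_r^2]$, and then to show the right-hand side is bounded below by a positive constant uniformly in $N$. Since the first-moment computation already gives $\E \mathcal Z_r = 2^N p_{r,K}^{\alpha N}$, the entire task reduces to estimating $\E[\mathcal Z_r^2]$.

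First I would expand the second moment over ordered pairs $(\tbf w, \tbf w')$. Because the rows of $\tbf X$ are independent and $\varphi^r$ is symmetric, each row contributes the joint probability that the correlated pair $(z_\mu(\tbf w), z_\mu(\tbf w'))$ lies in $[-K,K]^2$; these two standard Gaussians have correlation equal to the overlap $\rho = \tfrac1N \langle \tbf w, \tbf w'\rangle$. Writing $\beta = (1+\rho)/2$ for the fraction of agreeing coordinates, this probability is exactly $q_{r,K}(\beta)$. Grouping pairs by $\beta$ and using $\binom{N}{k}$ for the number of $\tbf w'$ agreeing with a fixed $\tbf w$ in $k$ coordinates yields
\begin{equation*}
\E[\mathcal Z_r^2] = 2^N \sum_{k=0}^N \binom{N}{k}\, q_{r,K}(k/N)^{\alpha N}\,,
\end{equation*}
so that, after Stirling, $\E[\mathcal Z_r^2] / (\E\mathcal Z_r)^2$ is governed by $\max_\beta \bigl[ F_{r,K,\alpha}(\beta) - F_{r,K,\alpha}(1/2)\bigr]$. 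Here I have used $q_{r,K}(1/2) = p_{r,K}^2$ and the fact that $F_{r,K,\alpha}(1/2) = \log 2 + 2\alpha \log p_{r,K}$ is exactly half the exponent of $(\E\mathcal Z_r)^2$.

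The heart of the argument is to show that $F_{r,K,\alpha}$ attains its global maximum on $[0,1]$ at $\beta = 1/2$. I would first record the symmetry $F_{r,K,\alpha}(\beta) = F_{r,K,\alpha}(1-\beta)$ (from $q_{r,K}(\beta)=q_{r,K}(1-\beta)$ and $H(\beta)=H(1-\beta)$), which makes $\beta=1/2$ a critical point and lets me restrict to $[1/2,1]$. The endpoint value $F_{r,K,\alpha}(1) = \alpha\log p_{r,K}$ satisfies $F_{r,K,\alpha}(1/2) > F_{r,K,\alpha}(1)$ precisely when $\alpha < -\log(2)/\log(p_{r,K}) = \alpha_a^r(K)$, which is exactly our hypothesis on $\alpha$. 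Next I would verify by direct computation that $F''_{r,K,\alpha}(1/2) < 0$ for all $\alpha < \alpha_a^r$ (using $H''(1/2)=-4$ and the explicit $q_K$); for the $u$-function the same holds provided $K < K^*$, and here $K^*$ enters precisely as the value at which the curvature condition and the endpoint condition coincide, i.e.\ through eq.~\eqref{AT_crossover_RS}. Given $F''(1/2)<0$, Hypothesis~\ref{hypo} supplies a \emph{unique} critical point $\beta^\ast \in (1/2,1)$; since $F'$ is negative just to the right of $1/2$ and vanishes only at $\beta^\ast$, that point must be a local minimum and $F$ is increasing on $(\beta^\ast,1)$. Hence $\max_{[1/2,1]} F = \max\{F(1/2), F(1)\} = F(1/2)$.

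Finally, with the global maximum pinned at $\beta=1/2$ and nondegenerate, a standard Laplace estimate of the sum shows that the Gaussian width of the peak ($\sim\sqrt N$) cancels the $1/\sqrt N$ Stirling prefactor, giving $\E[\mathcal Z_r^2] = (1+o(1))\,\tfrac{2}{\sqrt{-F''_{r,K,\alpha}(1/2)}}\,(\E\mathcal Z_r)^2$; in particular the ratio stays bounded and $\liminf_N \bbP[\cE^r(N,\alpha N)] \ge \tfrac12\sqrt{-F''_{r,K,\alpha}(1/2)} > 0$. The $u$-function case is identical with $q_{u,K}$ in place of $q_{r,K}$, the restriction $K<K^*$ being used only to guarantee $F''_{u,K,\alpha}(1/2)<0$ throughout $\alpha<\alpha_a^u$. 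I expect the main obstacle to be controlling the global shape of $F$ away from $\beta=1/2$ --- ruling out a competing interior maximum --- which is exactly the content that Hypothesis~\ref{hypo} is invoked to supply; the curvature computation at $\beta=1/2$ and its link to $K^*$ is the other place where genuine work is required, though it is a finite calculation rather than a conceptual difficulty.
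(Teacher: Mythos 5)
Your proposal is correct and follows essentially the same route as the paper: Paley--Zygmund, the overlap decomposition $\E[\mathcal Z^2] = 2^N\sum_l \binom{N}{l} q_{t,K}(l/N)^{\alpha N}$, the Laplace/Achlioptas--Moore estimate pinned at $\beta=1/2$, the endpoint condition $F(1/2)>F(1)$, the curvature computation $F''(1/2)<0$ (with $K^*$ arising exactly as you describe for the $u$-function), and Hypothesis~\ref{hypo} to exclude a competing interior maximum. The only cosmetic difference is that you work out the explicit Laplace constant $2/\sqrt{-F''(1/2)}$ where the paper is content with the unspecified constants $c_1,c_2$ of Lemma~\ref{lemLaplace}.
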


To prove Proposition~\ref{prop2ndMoment}  we will apply the second-moment method in a similar fashion to Achlioptas and Moore~\cite{achlioptas2002asymptotic} who determined the satisfiability threshold of random $k$-SAT to within a factor $2$ by considering not-all-equal satisfying assignments (not-all-equal satisfiability (NAE-SAT) constraints are symmetric in the same way the rectangle and $u$-function constraints are symmetric). 
Recall the Paley-Zygmund inequality.
\begin{lemma}
Let $X$ be a non-negative random variable.  Then
\begin{align*}
\bbP[X >0 ] &\ge \frac{\E[X]^2 }{\E[X^2]} \,.
\end{align*}
\end{lemma}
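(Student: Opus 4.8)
The plan is to obtain the bound from a single application of the Cauchy--Schwarz inequality, after rewriting $\E[X]$ so that the indicator of the event $\{X>0\}$ appears explicitly. First I would exploit non-negativity: since $X \ge 0$, it vanishes on $\{X=0\}$, so $X = X\,\mathbbm{1}_{\{X>0\}}$ holds pointwise, and taking expectations gives the identity $\E[X] = \E[X\,\mathbbm{1}_{\{X>0\}}]$. This is the only place where the hypothesis $X\ge 0$ is used — without it one would pick up (possibly cancelling) contributions from $\{X<0\}$ and this identity would fail.

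Next I would apply Cauchy--Schwarz to the pair of random variables $X$ and $\mathbbm{1}_{\{X>0\}}$:
\[
  \E\!\left[X\,\mathbbm{1}_{\{X>0\}}\right]^2 \;\le\; \E[X^2]\,\E\!\left[\mathbbm{1}_{\{X>0\}}^2\right].
\]
Since the indicator takes values in $\{0,1\}$ we have $\mathbbm{1}_{\{X>0\}}^2 = \mathbbm{1}_{\{X>0\}}$, whose expectation is precisely $\bbP[X>0]$. Combining this with the identity from the previous step yields $\E[X]^2 \le \E[X^2]\,\bbP[X>0]$, and dividing through by $\E[X^2]$ gives exactly the stated inequality.

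There is no substantive obstacle in this argument: the entire content is the single Cauchy--Schwarz step, and the only point needing a word of care is the degenerate case $\E[X^2]=0$. In that case $X=0$ almost surely, so $\E[X]=0$ as well and both sides of the claimed inequality equal zero (reading the right-hand ratio as $0$), so the statement holds trivially. Otherwise $\E[X^2]>0$ and the final division is legitimate, which completes the proof. The upshot worth flagging for later use is that the bound is an \emph{equality} precisely when $X$ and $\mathbbm{1}_{\{X>0\}}$ are proportional, i.e. when $X$ is constant on $\{X>0\}$; this is what makes the ratio $\E[X]^2/\E[X^2]$ the natural quantity to control in the second-moment application that follows.
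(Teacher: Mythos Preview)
Your proof is correct and is the standard Cauchy--Schwarz argument for this inequality. The paper itself does not give a proof: it simply states the lemma as ``Recall the Paley--Zygmund inequality'' and moves on, so there is nothing to compare against beyond noting that you have supplied exactly the classical justification the paper omits.
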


We will also use the following application of Laplace's method from Achlioptas and Moore~\cite{achlioptas2002asymptotic}.
\begin{lemma}
\label{lemLaplace}
Let $g(\beta)$ be a real analytic function on $[0,1]$ and let 
\begin{align*}
G(\beta) &=  \frac{g(\beta)}{\beta^{\beta} (1-\beta)^{1-\beta} } \, .
\end{align*}
If $G(1/2) > G(\beta)$ for all $\beta \ne 1/2$ and $G''(1/2) <0$, then there exists constants $c_1, c_2$ so that for all sufficiently large $N$
\begin{align*}
c_1 G(1/2)^N \le \sum_{l = 0}^N  \binom{N}{l} g(l/N )^N  \le c_2 G(1/2)^N \,.
\end{align*}
\end{lemma}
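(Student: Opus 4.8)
The plan is to prove Lemma~\ref{lemLaplace} by the classical Laplace (saddle-point) method for sums, exploiting that the summand is sharply peaked at $l = N/2$. First I would rewrite the summand in a form that exposes $G$. Using the identity $\beta^{\beta}(1-\beta)^{1-\beta} = e^{-H(\beta)}$, so that $G(\beta) = g(\beta)\,e^{H(\beta)}$, together with the two-sided Stirling bound, valid with absolute constants $0 < b_1 \le b_2$ for all $1 \le l \le N-1$ and $\beta = l/N$,
\[
\frac{b_1}{\sqrt{N\beta(1-\beta)}}\,e^{NH(\beta)} \;\le\; \binom{N}{l} \;\le\; \frac{b_2}{\sqrt{N\beta(1-\beta)}}\,e^{NH(\beta)},
\]
each term is comparable to $\frac{1}{\sqrt{N\beta(1-\beta)}}\,G(\beta)^N$. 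Throughout I assume, as holds in every application in this paper, that $g \ge 0$ on $[0,1]$ and $g(1/2) > 0$ (so $G(1/2) > 0$); this is what makes the asserted positive constants $c_1, c_2$ meaningful. The strategy is then to split the sum into a central window $\{|l/N - 1/2| \le \delta\}$ and its complement, and to show that the window already produces both bounds while the complement is exponentially negligible.

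For the central window I would pass to $L(\beta) := \log G(\beta) = \log g(\beta) + H(\beta)$, which is real analytic near $1/2$ because $g(1/2) > 0$. Since $1/2$ is an interior global maximum of $G$ we have $G'(1/2) = 0$, hence $L'(1/2) = 0$ and $L''(1/2) = G''(1/2)/G(1/2) < 0$ by the hypothesis $G''(1/2) < 0$. A Taylor expansion then gives constants $0 < a_1 < a_2$ and $\delta > 0$ with
\[
L(1/2) - \tfrac{a_2}{2}(\beta-1/2)^2 \;\le\; L(\beta) \;\le\; L(1/2) - \tfrac{a_1}{2}(\beta-1/2)^2 \qquad (|\beta-1/2| \le \delta),
\]
so that $G(\beta)^N$ is squeezed between $G(1/2)^N e^{-\frac{a_\pm}{2N}(l - N/2)^2}$. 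Combining this with the Stirling bounds (whose prefactor is $\Theta(1/\sqrt N)$ on the window, since $\beta(1-\beta)$ is bounded away from $0$ there) and summing the resulting discrete Gaussian, which contributes $\sum_l e^{-\frac{a}{2N}(l-N/2)^2} = \Theta(\sqrt N)$ by comparison with $\int e^{-\frac{a}{2N}x^2}\,dx = \sqrt{2\pi N/a}$, the two powers of $\sqrt N$ cancel and the window contributes $\Theta\!\big(G(1/2)^N\big)$. This cancellation of polynomial factors is precisely where the non-degeneracy $G''(1/2) < 0$ is used, and since every term of the full sum is nonnegative it yields both the lower bound $c_1 G(1/2)^N$ (from the window alone) and the window part of the upper bound.

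It remains to bound the tail $\{|l/N - 1/2| > \delta\}$ and show it is negligible relative to $G(1/2)^N$. Here I would use that $G$ extends continuously to all of $[0,1]$ — since $\beta^{\beta}(1-\beta)^{1-\beta} \to 1$ as $\beta \to 0,1$ and $g$ is continuous there — so that on the compact set $\{\beta \in [0,1] : |\beta - 1/2| \ge \delta\}$ the strict maximality hypothesis $G(1/2) > G(\beta)$ for $\beta \ne 1/2$ together with compactness gives some $\eta > 0$ with $0 \le G(\beta) \le (1-\eta)\,G(1/2)$. Applying only the crude bound $\binom{N}{l} \le e^{NH(\beta)}$ (which also covers the endpoints $l = 0, N$), each tail term is at most $G(\beta)^N \le \big((1-\eta)G(1/2)\big)^N$, so the entire tail is at most $(N+1)(1-\eta)^N G(1/2)^N = o\!\big(G(1/2)^N\big)$. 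Adding the window and tail estimates gives the upper bound $c_2 G(1/2)^N$, completing the proof.

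I expect the main obstacle to be making the window estimate rigorous with uniform control: one needs the Stirling approximation to hold uniformly for $\beta$ in a neighbourhood of $1/2$ (a local-CLT type statement), together with a uniform Taylor remainder bound, so that the discrete Gaussian sum can be compared cleanly to its integral. Keeping track of the exact cancellation between the $1/\sqrt N$ Stirling prefactor and the $\sqrt N$ arising from the Gaussian sum — which is what removes any stray polynomial factor from the final constants — is the delicate bookkeeping step; everything else (the tail bound and the endpoint cases) is routine given the strict maximality and continuity of $G$.
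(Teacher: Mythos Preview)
Your proposal is correct and is precisely the standard Laplace-method argument for sums of this type. Note that the paper does not actually prove this lemma: it is quoted verbatim from Achlioptas and Moore~\cite{achlioptas2002asymptotic} and used as a black box, so there is no ``paper's own proof'' to compare against beyond that citation. Your outline --- Stirling to expose $G(\beta)^N$, quadratic Taylor expansion of $\log G$ in a window around $1/2$ where the $1/\sqrt{N}$ Stirling prefactor cancels the $\sqrt{N}$ from the discrete Gaussian, and a compactness argument for the tail --- is exactly how the Achlioptas--Moore lemma is proved, and your added observation that one must implicitly assume $g\ge 0$ and $g(1/2)>0$ for the statement to be meaningful is a fair point that the lemma as stated leaves tacit.
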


\subsubsection{Rectangle binary perceptron}

We calculate 
\begin{align*}
\E [\mathcal Z_r (\tbf{X})^2   ] &=  \sum_{\tbf{w_1},\tbf{w_2}\in \{ \pm 1\}^N} \bbP[\tbf{w_1},\tbf{w_2}  \textrm{ satisfying}] = 2^N \sum_{\tbf{w} \in \{ \pm 1\}^N} \bbP[\tbf{1}, \tbf{w}  \textrm{ satisfying}]  = 2^N \sum_{l=0}^N \binom{N}{l} q_{r,K}(l/N)^{\alpha N} \, , 
\end{align*}
where we recall $q_{r,K}$ from eq.~\eqref{qK}.
Define 
\begin{align}
G_{r,K,\alpha}(\beta) &\equiv \exp( F_{r,K,\alpha}(\beta)) =  \frac{ q_{r,K}(\beta)^{\alpha} }{ \beta^{\beta} (1-\beta)^{1-\beta}  } \, ,
\label{eqGtoF}
\end{align}

If we can show that $G_{r,K,\alpha}(1/2) > G_{r,K,\alpha}(\beta)$ for all $\beta \ne 1/2$ and $G_{r,K,\alpha}''(1/2) <0$, then by Lemma~\ref{lemLaplace}, we have  
\begin{align*}
\E [\mathcal Z_r (\tbf{X})^2   ] & \le c_2 4^N q_{r,K}(1/2)^{\alpha N} \\
&=c_2 4^N  p_{r,K}^{2 \alpha N} \,.
\end{align*}


Then since $\mathcal Z_r (\tbf{X})$ is integer valued, we have 
\begin{align*}
\bbP[ \mathcal Z_r (\tbf{X}) \ge 1] &\ge \frac { \E [ \mathcal Z_r (\tbf{X})]^2 }{ \E [\mathcal Z_r (\tbf{X})^2   ]   }  =  \frac { ( 2^N  p_{r,K}^{\alpha N} )^2 }{ \E [\mathcal Z_r (\tbf{X})^2   ]   }   \\
&\ge  \frac{ ( 2^N  p_{r,K}^{\alpha N} )^2 }{c_2 4^N  p_{r,K}^{2 \alpha N}}  = 1/c_2 >0 \,.
\end{align*}

It remains to show that when $\alpha < \frac{-\log(2)}{\log(p_{r,K}) } $, then $G_{r,K,\alpha}(1/2) > G_{r,K,\alpha}(\beta)$ for all $\beta \ne 1/2$ and $G_{r,K,\alpha}''(1/2) <0$.  By eq.~\eqref{eqGtoF} and the fact that $G_{r,K,\alpha}'(1/2)=0$, it is enough to show the same for $F_{r,K,\alpha}$. 

Certainly one necessary condition is that  $F_{r,K,\alpha}(1/2) > F_{r,K,\alpha}(1)$.  This reduces to the condition $2 p_{r,K}^{2\alpha} > p_{r,K}^{\alpha}$ or $\alpha < \frac{-\log (2)}{\log( p_{r,K})}$  which is exactly the condition of Proposition~\ref{prop2ndMoment}. Next consider $F_{r,K,\alpha}''(1/2)$. 


A calculation shows that
\begin{align*}
F_{r,K,\alpha}''(1/2) &= 4 \left( -1 + \frac{2}{\pi} \frac{\alpha K^2 e^{-K^2} }{ p_{r,K}^2}  \right ) \,.
\end{align*}
In particular, $F_{r,K,\alpha}''(1/2) <0$ if and only if
\begin{align*}
\alpha &< \frac{\pi}{2} \frac{ p_{r,K}^2}{ K^2 e^{-K^2} } \,.
\end{align*}
But a calculation also shows that 
\begin{align*}
- \frac{\log(2) }{\log (p_{r,K})} <  \frac{\pi}{2} \frac{ p_{r,K}^2}{ K^2 e^{-K^2} }  
\end{align*}
for all $K>0$ and so the condition of Proposition~\ref{prop2ndMoment} implies that $F_{r,K,\alpha}''(1/2) <0$.  

Moreover, since $F_{r,K,\alpha}(\beta)$ is symmetric around $\beta=1/2$ and it has a local maximum at $\beta=1/2$, Hypothesis~\ref{hypo} implies that the global maximum of $F_{r,K,\alpha}(\beta)$ occurs at either $1/2$ or $1$, and since $F_{r,K,\alpha}(1/2) > F_{r,K,\alpha}(1)$, we have that  $F_{r,K,\alpha}(1/2) > F_{r,K,\alpha}(\beta)$ for all $\beta \ne 1/2$, completing the proof of Proposition~\ref{prop2ndMoment} for the rectangle binary perceptron.

\subsubsection{$u$-function binary perceptron}
\label{proof:second_moment_u}
The proof for the $u$-function is similar.  We can calculate
\begin{align*}
\E [\mathcal Z_u (\tbf{X})^2   ] &=   2^N \sum_{l=0}^N \binom{N}{l} q_{u,K}(l/N)^{\alpha N} = \exp\left( N (\log(2) +  F_{u,K,\alpha}(\beta) )  \right) \, ,
\end{align*}
where we recall $q_{u,K}$ from eq.~\eqref{qK}. 
Using Lemma~\ref{lemLaplace} and Hypothesis~\ref{hypo} again, it suffices to show that for $0 < K< K^*$ and $\alpha< \frac{-\log(2)}{\log ( p_{u,K})}$ we have $F_{u,K,\alpha}(1/2) > F_{u,K,\alpha}(1)$ and $F_{u,K,\alpha}''(1/2) <0$.  The first follows immediately from the fact that $\alpha< \frac{-\log(2)}{\log ( p_{u,K})}$.  For the second, we have
\begin{align*}
F_{u,K,\alpha}''(1/2) &= 4 \left (-1 +  \frac{2}{\pi} \frac{\alpha K^2 e^{-K^2} }{ p_{u,K}^2}   \right) 
\end{align*}
and so $F_{u,K,\alpha}''(1/2) < 0$ if and only if 
\begin{align*}
\alpha &< \frac{\pi}{2} \frac{ p_{u,K}^2}{ K^2 e^{-K^2} } \,. 
\end{align*}
Unlike with the rectangle function it is not true that 
\begin{align}
- \frac{\log(2)}{\log (p_{u,K})} < \frac{\pi}{2} \frac{ p_{u,K}^2}{ K^2 e^{-K^2} }	
\label{main:AT_second_moment}
\end{align}
 for all $K$: the left and right sides of the inequality cross at $K= K^*$, which  implicitly defines $K^*$.
 Thus for $K<K^*$ and $\alpha < - \frac{\log(2)}{\log (p_{u,K})}$ we have $F_{u,K,\alpha}''(1/2) < 0$, which completes the proof of Proposition~\ref{prop2ndMoment} for the $u$-function binary perceptron.

\begin{figure}[htb!]
\centering
\includegraphics[scale=0.32]{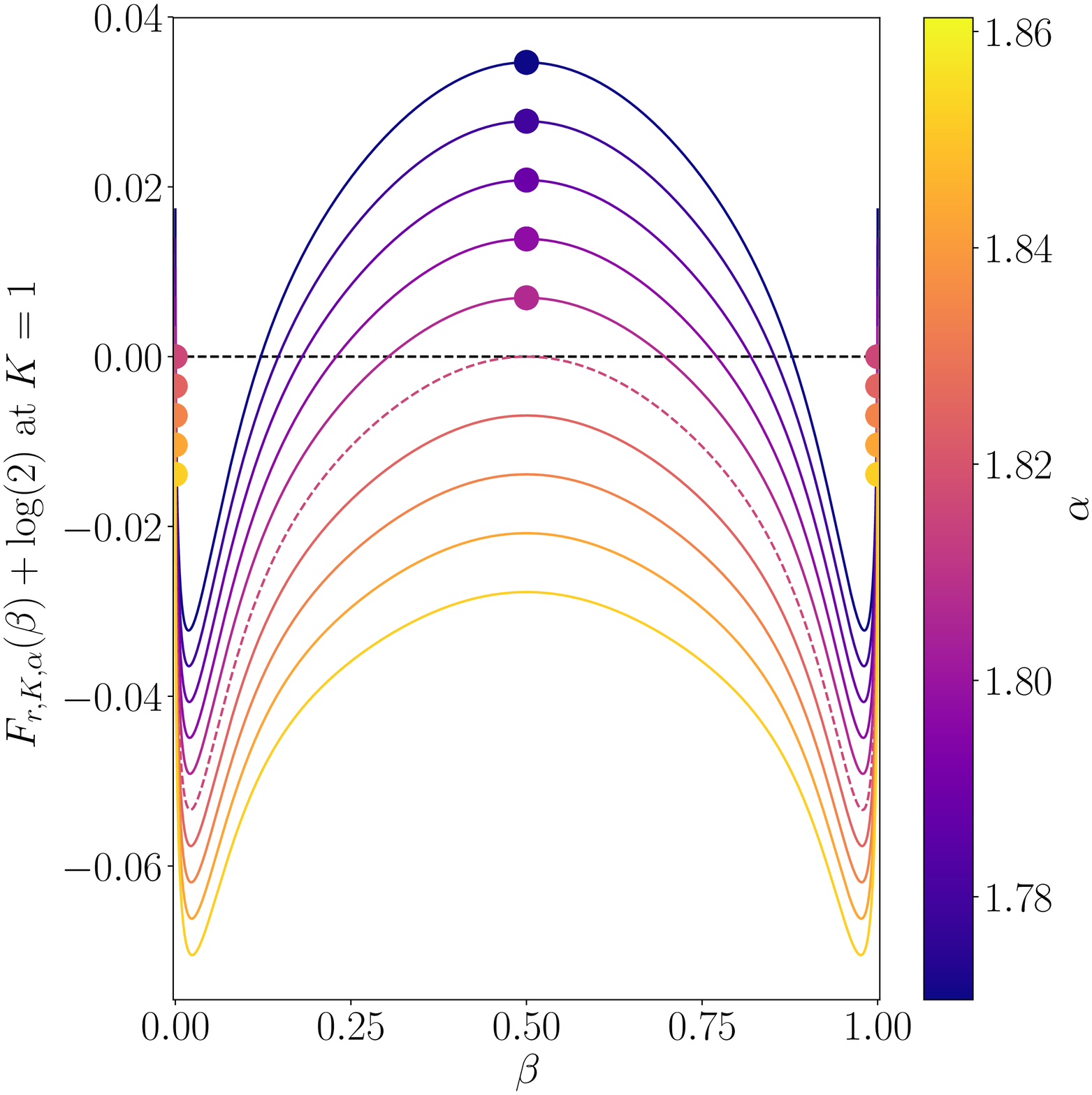}
\hspace{0.2cm}
\includegraphics[scale=0.32]{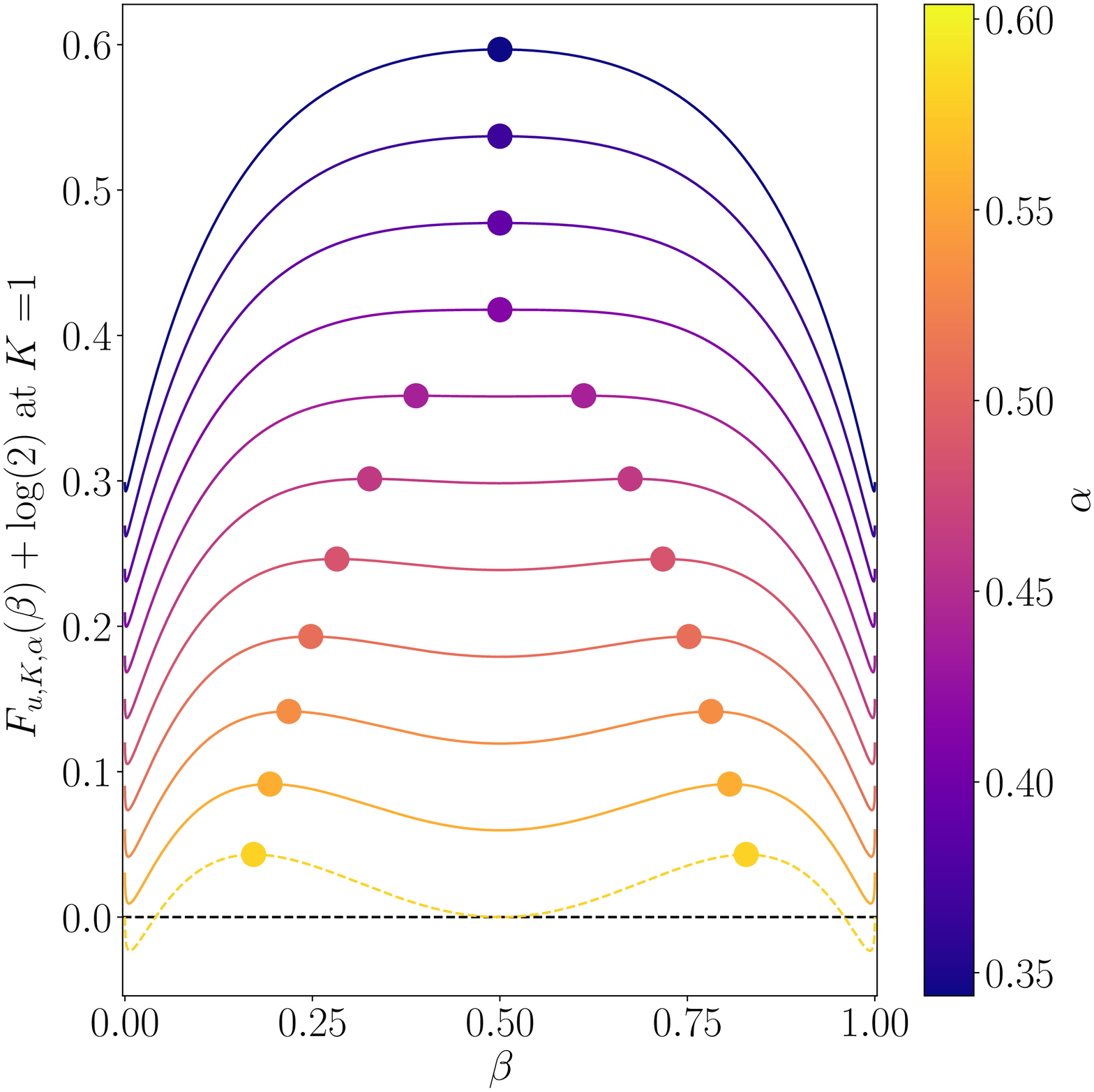}
\caption{Second moment entropy densities. \textbf{a)}: the rectangle binary perceptron for $\alpha \leq \alpha_a^r
  =1.816$ (dashed pink), $\beta=\frac{1}{2}$ is the global
  maximizer. For $\alpha \geq \alpha_a^r$, $\beta=0$ and $\beta=1$ are
  the maximizers. \textbf{b)}: the $u$-function binary perceptron for $\alpha \leq \alpha^*=0.430$, $\beta=\frac{1}{2}$ is the maximizer while for $\alpha^*\leq \alpha \leq \alpha_a^u = 0.604$ (dashed yellow), the maximizer is non-trivial $\beta \ne 0$.}
\label{main:plot_second_moment_rectangle_symstep}
\end{figure}
\FloatBarrier

\subsubsection{Illustration}
As an illustration, we plot the second moment entropy density $\lim_{N \to \infty} \frac{1}{N} \log \EE [ \mathcal{Z}_t^2 ] = \log(2) + F_{t,K,\alpha} $ for $t\in\{r,u\}$ at $K=1>K^*$ in fig.~\ref{main:plot_second_moment_rectangle_symstep}. For the rectangle function (\textbf{a}), the second moment is tight: the maximum is reached for $\beta=1/2$ for all $\alpha$ smaller than the first moment $\alpha_a^r$ (dashed pink). Exactly the same happens for the $u-$function with $K<K^*$. However for $K>K^*$, the second moment method fails (\textbf{b}): $\beta=1/2$ becomes a minimum and the maximum is obtained for non trivial values $\beta \ne 1/2$ for constraint density smaller than the first moment $\alpha_a^u$  (dashed yellow).

\section{Frozen-1RSB structure of solutions in binary perceptrons}
\label{section:frozen}

One of the most striking properties of the canonical step-function perceptron is
the predicted frozen-1RSB \cite{2} nature of the space of solutions. This means
that the dominant (measure tending to one) part of the space of solutions splits
into well separated clusters each of which has vanishing entropy
density at any $\alpha>0$.
This frozen-1RSB scenario and quantitative properties of the solution space were studied in detail recently \cite{16,huang2014origin}. Following up on conjectures
that such a frozen structure of solutions implies computational hardness in
diluted constraint satisfaction problems
\cite{zdeborova2008constraint}, it was argued that finding a satisfying assignment in the binary
perceptron should also be algorithmically hard since its solution space is dominated by
clusters of vanishing entropy density \cite{huang2014origin}. Yet this
conjecture contradicted empirical results of
\cite{braunstein2006learning}. This paradox was resolved in \cite{baldassi2015subdominant} where the authors identified that there
are subdominant parts (i.e. parts of measure converging to zero as
the system size diverges) of the solution space that form extended
clusters with large local entropy and all the algorithms that work well
always find a solution belonging to one of those  large-local-entropy
clusters. These sub-dominant clusters are not frozen and somewhat strangely are not captured in the canonical 1RSB calculation
\cite{baldassi2015subdominant}.   It was argued that existence of these
large-local-entropy clusters bears more general consequences on the
dynamics of learning algorithms in neural networks,
see e.g. \cite{baldassi2016unreasonable}. 

While frozen-1RSB structure has also been identified in constraint satisfaction
problems on sparse graphs \cite{zdeborova2008locked,zdeborova2011quiet}, we want to note
that its nature in the binary perceptron is of a rather different
nature. In sparse systems a simple argument using expansion
properties of the underlying graph and properties of the constraints show that each cluster with high
probability contains only one solution. In the perceptron model, which has a
fully connected bipartite interaction graph, this argument from sparse
models does not apply.

In the present paper, we deduce from the second moment
calculation of the previous section that the space of solutions in the symmetric binary
perceptrons is also of the frozen-1RSB type and this property moreover extends
to any finite temperature (with energy being defined as the number of
unsatisfied constraints). This is different from the locked constraint
satisfaction problems of \cite{zdeborova2008constraint,zdeborova2011quiet} living on
diluted hypergraphs, where the solution-clusters have extensive
entropy at any non-zero temperature. Another difference is that
whereas in the locked constraint satisfaction problems the size of
each cluster is one with high probability, in the binary perceptron
there are still many solutions in the clusters, it is only their entropy
density (i.e. logarithm of their number per variable) that vanishes
as $N\to \infty$. 

Investigation of the large local
entropy clusters and their implications for learning in the symmetric perceptrons is also
of great interest, but left for future work. Clearly since mathematically
the symmetric perceptrons are simpler than the step-function one, they
should also be the proper playground to deepen our understanding of
the large local entropy clusters and their relation to learning and generalization.

We present the frozen-1RSB scenario as a conjecture and then below indicate how the second moment calculation gives evidence for this conjecture.  Given an instance $\tbf{X}$ and a solution $\mathbf{w}$, let $\Gamma(\mathbf{w},d)$ denote the set of solutions $\mathbf{w}'$ with Hamming distance at most $d$ from $\mathbf{w}$.   

\begin{conjecture}
\label{conjFrozen}
For every $K > 0$ and every $\alpha \in (0,\alpha_c^r(K))$ there exists $d_{\text{min}}>0$ so that with high probability over the choice of the random instance $\tbf{X}$ from the RBP, the following property holds: for almost every solution $\mathbf{w}$, 
\begin{align*}
\frac{1}{N} \log | \Gamma(\mathbf{w},d_{\text{min}})| \to 0
\end{align*}
as $N \to \infty$.  The same holds for the UBP for all $K \leq K^*$.  
\end{conjecture}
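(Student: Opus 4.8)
The plan is to read the clustering structure directly off the overlap-resolved second moment computed in Section~\ref{section:proof}. For the rectangle case, let $\mathcal N(\beta)$ denote the expected number of ordered pairs of distinct solutions whose fraction of agreeing coordinates is $\beta = l/N$. By the symmetry reduction behind Proposition~\ref{prop2ndMoment},
\begin{align*}
\mathcal N(\beta) = 2^N \binom{N}{\beta N} q_{r,K}(\beta)^{\alpha N} = \exp\left( N \left[ \log 2 + F_{r,K,\alpha}(\beta) \right] + o(N) \right),
\end{align*}
whereas $\E[\mathcal Z_r] = 2^N p_{r,K}^{\alpha N} = \exp\left( N \left[ \log 2 + F_{r,K,\alpha}(1) \right] \right)$, using $F_{r,K,\alpha}(1) = \alpha \log p_{r,K}$ and $H(1)=0$. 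Thus the exponential rate of the \emph{conditional} number of solutions seen at overlap $\beta$ from a typical solution is governed by the sign of $F_{r,K,\alpha}(\beta) - F_{r,K,\alpha}(1)$, and the frozen-1RSB picture is precisely the statement that this difference is strictly negative on an entire overlap window $(\beta_{\mathrm{gap}},1)$ abutting the diagonal $\beta=1$.

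The analytic heart of the argument is the behaviour of $F_{r,K,\alpha}$ as $\beta \to 1$. Setting $\eps = 1-\beta$, a boundary-layer expansion of the Gaussian orthant probability yields the square-root singularity
\begin{align*}
q_{r,K}(\beta) = p_{r,K} - \frac{2 e^{-K^2/2}}{\pi} \sqrt{\eps} + o(\sqrt{\eps}),
\end{align*}
so that $\alpha\left[\log q_{r,K}(\beta) - \log p_{r,K}\right] = -a_K \sqrt{\eps} + o(\sqrt{\eps})$ with $a_K>0$, while the Shannon term contributes only $H(\beta) = \eps \log(1/\eps) + O(\eps)$. Since $\sqrt{\eps}$ dominates $\eps\log(1/\eps)$, we obtain $F_{r,K,\alpha}(\beta) - F_{r,K,\alpha}(1) \sim -a_K\sqrt{\eps} < 0$ for $\beta$ near $1$. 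For $\alpha < \alpha_c^r(K)$ we have $F''_{r,K,\alpha}(1/2)<0$, as established in the proof of Proposition~\ref{prop2ndMoment}, so Hypothesis~\ref{hypo} applies and $F_{r,K,\alpha}$ has a unique critical point $\beta_{\mathrm{gap}} \in (1/2,1)$. Together with the local maximum at $1/2$ and the increase of $F$ toward $\beta=1$ just established, $\beta_{\mathrm{gap}}$ must be a local minimum and $F_{r,K,\alpha}$ is strictly increasing on $(\beta_{\mathrm{gap}},1)$; hence $F_{r,K,\alpha}(\beta) < F_{r,K,\alpha}(1)$ for all $\beta \in [\beta_{\mathrm{gap}},1)$. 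We then set $d_{\mathrm{min}} = 1-\beta_{\mathrm{gap}}$ (any smaller positive value also works).

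With $d_{\mathrm{min}}$ fixed, summing $\mathcal N(\beta)$ over the at most $N+1$ values $\beta \in [1-d_{\mathrm{min}},1)$ and bounding each term by $\E[\mathcal Z_r]\,e^{o(N)}$ gives
\begin{align*}
\E\Big[ \sum_{\mathbf{w}\text{ sol}} \big( |\Gamma(\mathbf{w},d_{\mathrm{min}})| - 1 \big) \Big] \le e^{o(N)}\, \E[\mathcal Z_r].
\end{align*}
Fix $\eta>0$ and call a solution $\mathbf{w}$ \emph{bad} if $|\Gamma(\mathbf{w},d_{\mathrm{min}})| \ge e^{\eta N}$; writing $B$ for the number of bad solutions, Markov gives $\E[B] \le e^{(o(1)-\eta)N}\E[\mathcal Z_r]$. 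On the other hand, the bound $\E[\mathcal Z_r^2] \le c_2\,\E[\mathcal Z_r]^2$ from Proposition~\ref{prop2ndMoment} and the Paley--Zygmund inequality give $\bbP[\mathcal Z_r \ge \tfrac12\E[\mathcal Z_r]] \ge 1/(4c_2)>0$. Intersecting this event with $\{B < \delta\E[\mathcal Z_r]\}$ (whose complement has probability at most $e^{(o(1)-\eta)N}/\delta \to 0$), we find that with probability bounded away from zero the fraction of bad solutions is below $2\delta$; letting $\delta\to0$ and $\eta\to0$ yields $\tfrac1N\log|\Gamma(\mathbf{w},d_{\mathrm{min}})| \to 0$ for almost every solution.

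The main obstacle, and the reason the statement is offered as Conjecture~\ref{conjFrozen} rather than a theorem, is upgrading ``with probability bounded away from zero'' to ``with high probability'': this requires concentration of $\mathcal Z_r$ around its mean up to sub-exponential factors, which is exactly the quenched control that is currently out of reach and is tightly linked to the open sharp-threshold statement of Conjecture~\ref{conjSharp}. Everything else transfers verbatim to the UBP for $0<K<K^*$: the identity $q_{u,K}(\beta) = p_{u,K} - \tfrac{2}{\pi}e^{-K^2/2}\sqrt{1-\beta} + o(\sqrt{1-\beta})$ exhibits the same square-root singularity, and $F''_{u,K,\alpha}(1/2)<0$ holds throughout $\alpha<\alpha_c^u(K)$ precisely when $K<K^*$, so Hypothesis~\ref{hypo} again supplies the local minimum $\beta_{\mathrm{gap}}$ and the isolating window $(\beta_{\mathrm{gap}},1)$ needed to run the same argument.
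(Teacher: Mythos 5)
Your argument is correct as far as it goes, and its analytic core is identical to the paper's: both hinge on the square-root singularity of $q_{t,K}(\beta)$ at $\beta=1$ (your expansion $q_{r,K}(\beta)=p_{r,K}-\tfrac{2}{\pi}e^{-K^2/2}\sqrt{1-\beta}+o(\sqrt{1-\beta})$ is exactly the integrated form of the paper's statement in section~\ref{frozen_2nd} that $\partial_\beta F_{t,K,\alpha}\to+\infty$ as $\beta\to1$), and both invoke Hypothesis~\ref{hypo} together with $F''(1/2)<0$ to isolate a single interior critical point $\beta_{\mathrm{gap}}$ and hence a full window $[\beta_{\mathrm{gap}},1)$ on which $F(\beta)<F(1)$. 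Where you genuinely diverge is in the probabilistic packaging. The paper passes to the \emph{planted} ensemble, identifies $\omega_t(\beta)=F_{t,K,\alpha}(\beta)-\alpha\log p_{t,K}$ in eq.~\eqref{omega} as the annealed entropy of solutions at overlap $\beta$ from the planted (hence, by contiguity, a typical) solution, and names contiguity between the planted and random ensembles as the single missing ingredient. You instead stay in the random ensemble, bound the expected number of solution pairs in the forbidden window by $e^{o(N)}\E[\mathcal Z_r]$, and combine Markov with the Paley--Zygmund lower bound already used in Proposition~\ref{prop2ndMoment}; this buys you an unconditional (given Hypothesis~\ref{hypo}) \emph{positive-probability} version of the conjecture, which is slightly more than the paper explicitly extracts, at the cost of relocating the gap to concentration of $\mathcal Z_r$ up to subexponential factors. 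The two missing ingredients are essentially the same obstruction viewed from different sides (contiguity is what such concentration would deliver), and both you and the authors are honest that this is why the statement remains Conjecture~\ref{conjFrozen} rather than a theorem. One small caveat: your final step "letting $\delta\to0$ and $\eta\to0$" should be implemented with slowly decaying sequences $\eta_N,\delta_N\to0$ to keep the order of quantifiers in "for almost every solution" correct, but this is routine.
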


\subsection{The link between the second-moment entropy and size of
  clusters} 

In this section we use $t \in\{r,u\}$ and note that the form of the second moment entropy density
$\frac{1}{N} \log \EE [ \mathcal{Z}_t^2 ]$ has very direct implications on the structure of
solutions in the corresponding models. 
As we defined it above, the second moment entropy is the normalized logarithm of
the expected number of
pairs of solutions of overlap $\beta$. 

For problems such as the symmetric binary perceptrons where the quenched
and annealed entropies are equal in leading order, there is a
striking relation between the planted and the random ensemble of the
model \cite{achlioptas2008algorithmic,krzakala2009hiding}. The random
ensemble is the problem we have considered so far, while the planted ensemble is
defined by starting with a configuration of the weights (a solution) and then
including only constraints that are satisfied by this {\it planted}
configuration. As long as the quenched and annealed entropies of the
random ensemble are equal in
leading order the planted and random ensembles should be  contiguous, meaning 
that high-probability properties that hold in one ensemble also hold in
the other. Moreover the planted configuration in the planted ensemble
has all the properties of a configuration sampled uniformly at random
in the random ensemble. These properties follow on the heuristic
level from the cavity method reasoning \cite{krzakala2009hiding}. They
were established fully rigorously in a range of models, see
e.g. \cite{achlioptas2008algorithmic,mossel2015reconstruction,coja2018information}. 
In the present case of symmetric binary perceptrons we have not yet managed
to prove contiguity between the random and the planted
ensemble, and so we leave a rigorous mathematical result for future work.  (In fact the missing ingredient is a version of Friedgut's sharp threshold result~\cite{friedgut1999sharp} suitable for perceptrons; such a result combined with Theorem~\ref{thmMain} would also prove Conjecture~\ref{conjSharp}).   We hence rely on the above heuristic argument and assume it
holds in what follows.

Given a planted solution $\mathbf{w}$ and a configuration $\mathbf{w}_\beta$  that agrees with $\mathbf{w}$  on $\beta N$ coordinates, the probability that $\mathbf{w}_\beta$ is a solution in the planted model is $( q_{t,K}(\beta)/ p_{t,K})^M$, and thus the expected number of solutions at Hamming distance $\beta N$ from the planted solution in the planted ensemble is 
\begin{align*}
 \EE[\mathcal Z_\beta ]= \binom{N}{\beta N} ( q_{t,K}(\beta)/ p_{t,K})^M \,,
\end{align*}
and its entropy density is 
\begin{align}
	\omega_t(\beta) \equiv \lim_{N \to \infty} \frac{1}{N} \log \EE[\mathcal{Z}_\beta]  = F_{t,K,\alpha}(\beta)- \alpha \log{p_{t,K}} \textrm{ for } t \in\{r,u\} \,.
\label{omega}
\end{align}


Recalling that contiguity  implies that the planted
solution has the properties of a uniformly chosen solution in the random ensemble then this entropy gives
us direct access to properties of the solution space in the random ensemble at equilibrium. Most
notably we notice (see derivation in section~\ref{frozen_2nd} below) that the derivative of
$\omega_t(\beta)$ at $\beta=1$ is $+\infty$ thus implying that $\forall
\epsilon>0$ with high probability there are no solutions at overlap $\beta \in [d_{\rm min}(\alpha,K),
(1-\epsilon)]$. In turn, this means that the dominant (measure
converging to one as $N\to \infty$) part
of the solution space splits into clusters each of which has
vanishing entropy density (i.e. logarithm of the number of solutions
in the cluster divided by $N$ goes to zero as $N\to \infty$).   The missing ingredient in a full proof of Conjecture~\ref{conjFrozen} is a proof of the contiguity statement.

\subsection{Form of the 2nd moment entropy implying frozen-1RSB}
\label{frozen_2nd}

\begin{figure}[htb!]
		    \centering
	   		\includegraphics[scale=0.32]{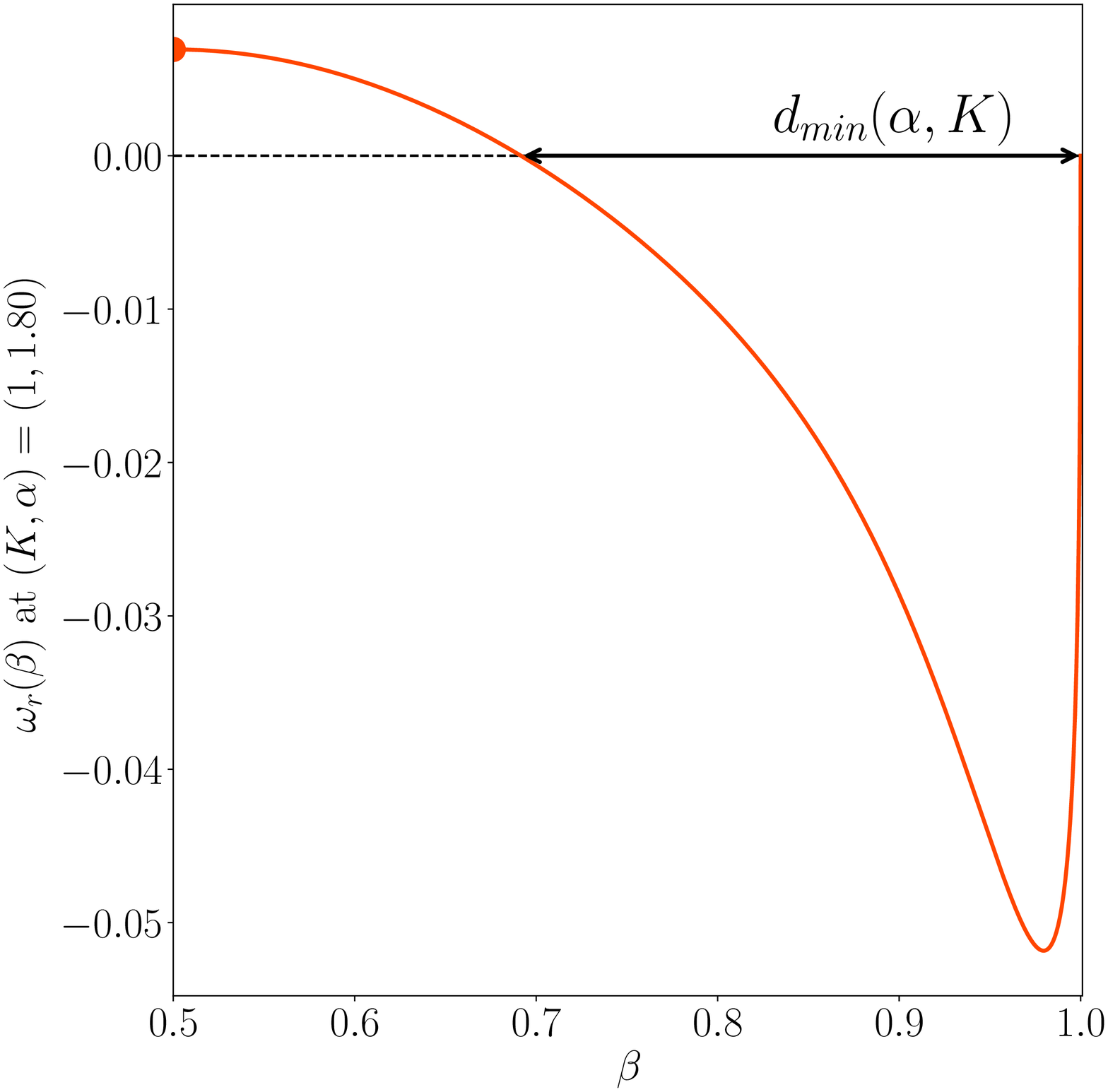}
	   		\hspace{0.2cm}
	   		\includegraphics[scale=0.32]{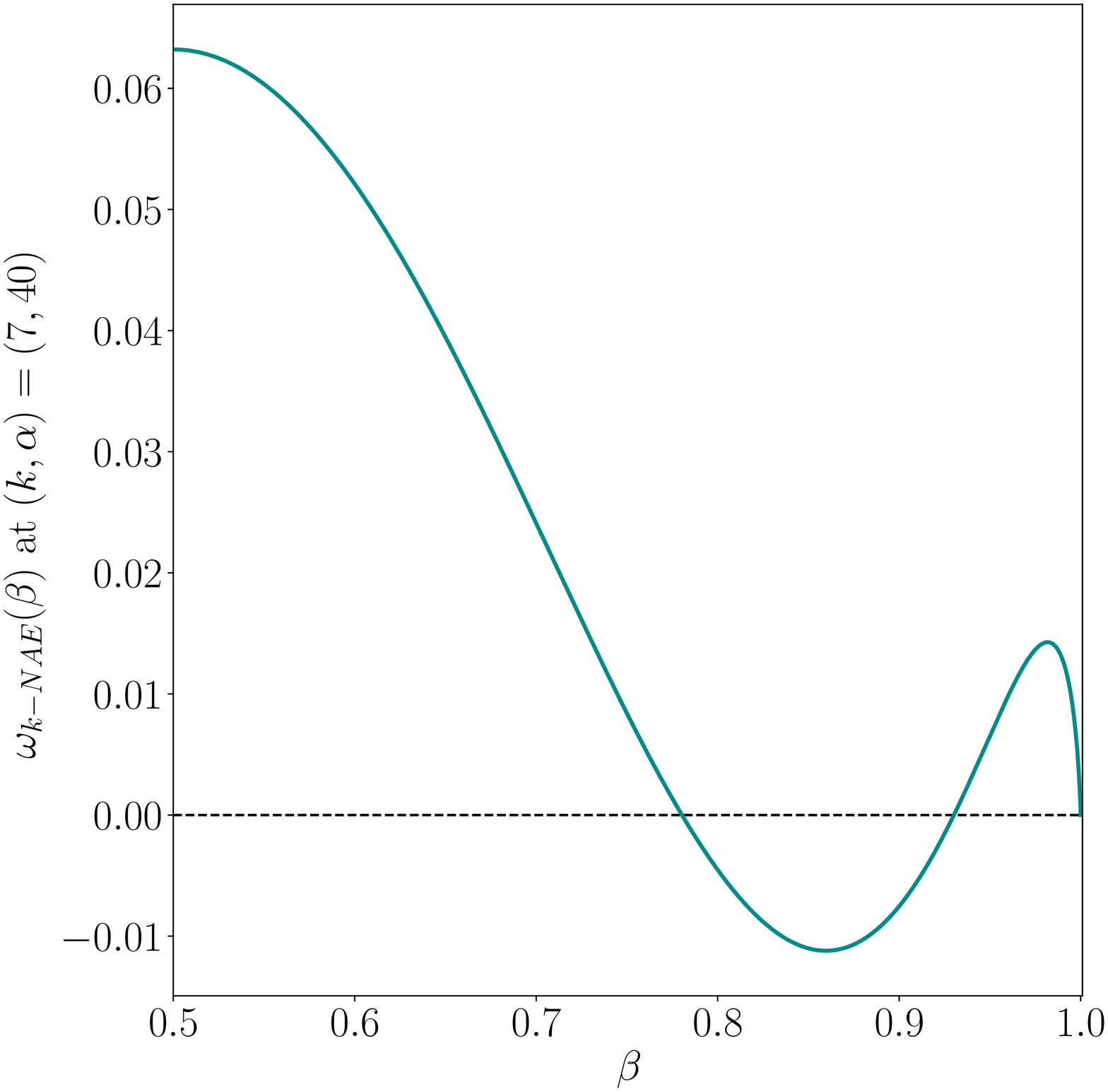}
\caption{\textbf{a)} Density of the annealed entropy of solutions at overlap $\beta$ from a
  random solution in the rectangle binary
  perceptron at $K=1$, $\alpha = 1.80 \le \alpha_c^{r}(K=1)$. We see there are no solution in an interval of overlaps $(1-d_{\rm min}, 1-\epsilon)$. This curve is
  obtained from the second moment entropy and contiguity between
  the random and planted ensembles. It implies the frozen-1RSB nature of the space of solutions. The same holds for the $u-$function. \textbf{b)} To compare we plot the density of the annealed entropy of solutions at overlap $\beta$ from a random solution in the $kˆ'$-NAE SAT model~\cite{achlioptas2002asymptotic} at $k = 7$, $\alpha = 40$. We see the density is positive in a large region close to $\beta = 1$, showing the absence of frozen-1RSB structure in this problem. 
  } 
 \label{fig_frozen}
\end{figure}

In fig.~\ref{fig_frozen}a we plot $\omega_r(\beta)$ for the rectangle
binary perceptron, at $K=1$,
$\alpha=1.80\le \alpha_c^{r}(K=1)$. Thanks to the contiguity between the planted and random ensembles that holds as long as the second moment entropy
density is twice the first moment entropy density, this curve
represents also the annealed entropy of solutions at overlap $\beta$
with a random reference solution. We see notably that there is an
interval of distances in which no solutions are present. Analytically
we can see from the properties of the functions $F_{t,K,\alpha} (\beta)$ and
$\log{p_{t,K}}$ that $F_{t,K,\alpha}(1) = \alpha \log{p_{t,K}}$ and the derivative of
$F_{t,K,\alpha}(\beta) \to \infty$. This is in contrast with, for instance, the
satisfiability problems studied in \cite{achlioptas2002asymptotic},
where the function corresponding to $F_{t,K,\alpha}(\beta)$ would have a negative
derivative in $\beta=1$ (see fig.~\ref{fig_frozen}b). There could still be an interval of {\it
  forbidden} distance, but the bump in entropy for $\beta \approx 1$
corresponds to the size of the clusters to which typical solutions
belong and those would be extensive.


\subsubsection{Frozen 1RSB in rectangle binary perceptron}

In the rectangle binary perceptron, the random and planted ensembles are conjectured to be contiguous for
all $K >0$ and $\alpha \in (0, \alpha^r_c(K))$. Using eq.~\eqref{FK}, the first derivative of $\omega_r(\beta)$, eq.~\eqref{omega}, is given by (see Appendix \ref{moments_finiteT})
\begin{equation*}
	 \frac{\partial \omega_r }{\partial \beta} = \frac{\partial F_{r,K,\alpha}}{\partial \beta} =  \log\left (\frac{1-\beta}{\beta} \right) +  \frac{\alpha}{ q_{r,K,T}(\beta)} \frac{1}{\pi \sqrt{\beta(1-\beta)}} \left( e^{-\frac{K^2}{2(1-\beta)}} \left( e^{\frac{(2\beta-1)K^2}{2(1-\beta)\beta}} -1  \right)  \right)\,,
\end{equation*}
and it diverges for all $K \in \mathbb{R}^+$, $\alpha>0$ in the limit $\beta \to 1$: 
\begin{equation}
	\frac{\partial \omega_r }{\partial \beta} \xrightarrow[\beta \to 1]{} +\infty \, . 
\end{equation}
This implies vanishing entropy density of clusters to which typical
solutions belong.


\subsubsection{Frozen 1RSB in the $u$-function binary perceptron}

In the $u$-function binary perceptron, the random and planted ensembles are conjectured to be contiguous for
all $0 < K \le K^*$ and $\alpha \in (0, \alpha^u_c(K))$. Using eq.~\eqref{FK}, the first derivative of $\omega_u(\beta)$ eq.~\eqref{omega}, is given by
\begin{align*}
	 \frac{\partial \omega_u }{\partial \beta} &= \frac{\partial F_{u,K,\alpha}}{\partial \beta} =  \log\left (\frac{1-\beta}{\beta} \right) +  \frac{\alpha}{ q_{u,K,T}(\beta)} \frac{1}{\pi \sqrt{\beta(1-\beta)}} \left( e^{-\frac{K^2}{2(1-\beta)}} \left( e^{\frac{(2\beta-1)K^2}{2(1-\beta)\beta}} -1  \right)  \right)\\
	 & \underset{\beta \to 1}{\longrightarrow} +\infty \, ,
\end{align*}
thus reaching the same conclusion on presence of frozen-1RSB.

In appendix \ref{moments_finiteT} we extend the second moment calculation to finite
temperature (for both the rectangle and $u-$function case). This means that we define energy of a configuration
${\cal E}(\tbf{w})$ as
the number of constraints that are violated by this
configurations. Then the corresponding partition function is defined
${\cal Z}(T) = \sum_{\tbf{w}} e^{-{\cal E}(\tbf{w})/T}$. There is a
one-to-one mapping between the temperature $T$ and energy density
$e={\cal E}/N$, consequently 
the corresponding finite-temperature second moment entropy density
counts the number of pairs of solutions at overlap $\beta$ and energy
density $e$. In appendix \ref{moments_finiteT} we apply the same
argument as here
connecting the random and planted ensemble, and deduce that  the
finite-temperature solution space of the models is of also of the frozen-1RSB type
for any $T<\infty$. 

\subsection{Frozen-1RSB as derived from the replica analysis}

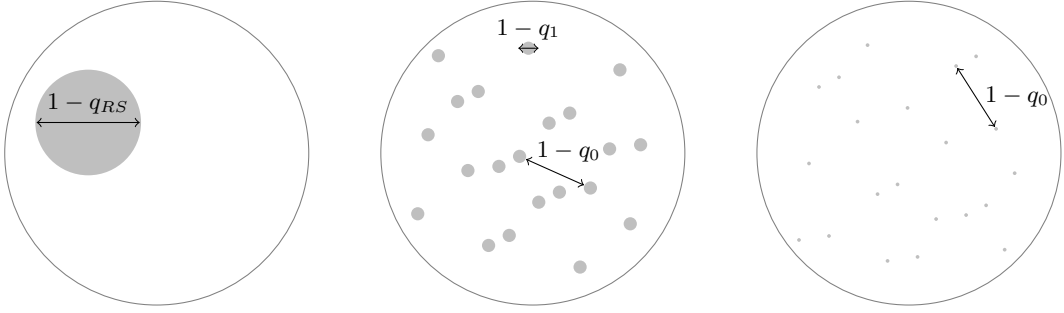
\begin{figure}[htb!]
	\begin{tikzpicture}
    \tikzstyle{factor}=[rectangle,fill=black,minimum size=7pt,inner sep=1pt]
    \tikzstyle{annot} = [text width=2.5cm, text centered]
    \tikzstyle{RSCluster} = [circle,fill=gray,minimum size=40pt,inner sep=1pt,fill opacity=0.5]]
    \tikzstyle{RSBCluster} = [circle,fill=gray,minimum size=5pt,inner sep=1pt,fill opacity=0.5]]
    \tikzstyle{fRSBCluster} = [circle,fill=gray,minimum size=0.5pt,inner sep=0.5pt,fill opacity=0.5]
    \tikzstyle{configSpace} = [circle,minimum size=115pt,inner sep=1pt,draw=gray, fill opacity=1.,fill=white]
    
    \node[configSpace] (C-1) at (-5,0) {};
    \node[configSpace] (C-1) at (0,0) {};
    \node[configSpace] (C-1) at (5,0) {};
    
    \pgfmathsetseed{3}
    \coordinate (mycenterpoint) at ($ (-5,0)+{rand}*(1,0)+ {rand}*(0,1)$);
    \node[RSCluster] (RS-1) at (mycenterpoint) {};
    \node (C) at ([shift=({-0.8cm,0cm})]RS-1) {};
	\node (D) at ([shift=({0.8cm,0cm})]RS-1) {};
	\draw [<->,black] (C) -- (D);
	\path [<->] (C) -- node [above] {$1-q_{RS}$} (D) ;
    
    \foreach \l in {10,...,30}
      {	\pgfmathsetseed{\l}
          \coordinate (mycenterpoint) at (${rand}*(1.55,0)+ {rand}*(0,1.55)$);
          \node[RSBCluster] (RSB-\l) at (mycenterpoint) {};
      }
    
    \node (A) at ([shift=({-0.25cm,0cm})]RSB-10) {};
	\node (B) at ([shift=({0.25cm,0cm})]RSB-10) {};
	\draw [<->,black] (A) -- (B);
    \path [<->] (A) -- node [above] {$1-q_{1}$} (B) ;

	\draw [<->,black] (RSB-28) -- (RSB-30);
    \path [<->] (RSB-28) -- node [above] { $1-q_{0}$} ++(-0.5,0.4) (RSB-30) ;
    
    \foreach \l in {40,...,60}
      {	\pgfmathsetseed{\l}
          \coordinate (mycenterpoint) at ($(5,0)+ {rand}*(1.5,0)+ {rand}*(0,1.5)$);
          \node[fRSBCluster] (fRSB-\l) at (mycenterpoint) {};
      }

    \draw [<->,black] (fRSB-48) -- (fRSB-60);
    \path [<->] (fRSB-48) -- node [above] { $1-q_{0}$} ++(1.6,-1.3) (fRSB-60) ;

    \end{tikzpicture}
	\caption{Illustration of the configuration space for the different phases: \textbf{a)}: RS - solutions are concentrated in a single cluster of typical size $1-q_{\rm RS}$. \textbf{b)}: 1RSB -  solutions form clusters of size $1-q_1$ at a distance $1-q_0$ from each other. \textbf{c)}: f1RSB - clusters are point-like ($1-q_1\simeq 0$) at a distance $1-q_0 = 1-q_{RS}$ from each other.}
	\label{configurationSpace}
\end{figure}

We stress that we derived the frozen-1RSB nature of the space of
solutions without the use of replicas. For completeness we summarize
here how this translates to the properties of the
one-step-replica-symmetry breaking solution. This is the way this
phenomena was originally discovered and described in
\cite{2,martin2004frozen,16}. For readers not familiar with the
replica method this section should be read after reading section~\ref{section:replicas}.

In general, three kinds of fixed points of the 1RSB equations are possible: 
\begin{itemize}
	\item The replica symmetric (RS) solution $q_0= q_1 = q_{\rm RS}<1$\,,
	\item The frozen-1RSB solution (f1RSB) $(q_0,q_1)=(q_{\rm RS},1)$\,,
	\item The 1RSB solution  $(q_0,q_1)$ with $q_1\ne 1$\,.
\end{itemize}

The frozen-1RSB is characterized by
an inner-cluster overlap $q_1=1$ and an inter-cluster overlap
$q_0=q_{\rm RS}$, which means that clusters have vanishing entropy density and
remain far from each other.
Mathematically RS and f1RSB solutions are equivalent in the sense that
these solutions have the same free energy eq.~\eqref{main:phi_1RSB} $\Phi_{\rm 1RSB}\{q_0 = q_{\rm RS}, q_1 = q_{\rm RS}\}=\Phi_{\rm 1RSB}\{q_0 = q_{\rm RS},q_1 = 1\}$, and the complexity
of the f1RSB solution equals the RS entropy $\Sigma(\phi=0) =
\phi_{\rm RS}$ eq.~(\ref{main:complexity}, \ref{main:phi_RS}). However, RS and f1RSB do
not share the same configuration space. The RS
phase is associated to a single cluster of solution with typical size
$1-q_{\rm RS}$, while the f1RSB configuration space is composed of many
point-like solutions of size $q_1\simeq 1$ and at distance $1-q_0 =
1-q_{\rm RS}$ of each other, see fig.~\ref{configurationSpace}. From this point of view f1RSB is the correct
description of the phase space.

\section{Replica calculation of the storage capacity}
\label{section:replicas}

In this section we recall the replica calculation leading to the expression
of the storage capacity in the step-function binary perceptron. We show that
in the symmetric binary perceptrons the annealed calculation is
reproduced by the replica symmetric result. For the $u-$function binary
perceptron we show that $K^*$ coincides with the onset of replica
symmetry breaking and we evaluate the 1RSB capacity for $K>K^*$.

\subsection{Replica calculation}

For the purpose of the calculations, we introduce the constraint function $\mathcal{C}(\textbf{z})$ that returns $1$ if  $\tbf{w}$ satisfies all the constraints \{$\varphi(z_\mu)\}_{\mu=1}^M$ and $0$ otherwise
\begin{align*}
\mathcal{C}(\tbf{z}) = \displaystyle \prod_{\mu=1}^M
        \varphi(z_{\mu})  \vspace{0.2cm} \textrm{ with }\vspace{0.2cm} z_{\mu}= \tbf{X}_{\mu}
        \tbf{w} \,.	
\end{align*}
Recall the partition function $\mathcal{Z}$ is the number of satisfying vectors $\tbf{w}$, with prior distribution $P_w(\tbf{w})$, for a given matrix $\tbf{X}$
\begin{equation*}
	\mathcal{Z}(\tbf{X}) = \displaystyle \sum_{\tbf{w} \in \{\pm 1\}^N} \prod_{\mu=1}^M  \varphi\left(\tbf{X}_{\mu}
        \tbf{w} \right)=
        \int d\tbf{w} P_w(\tbf{w})\int d\tbf{z} \,
        \mathcal{C}(\tbf{z})\delta(\tbf{z}-\tbf{X}\tbf{w}) \,.
\end{equation*}

The replica method allows one to compute explicitly the quenched average
$\mathbb{E}_{\tbf{X}}[\log(\mathcal{Z(\tbf{X})})]$ \cite{3}. More precisely, using the replica trick, the average of the logarithm can be expressed as the limit $n\to0$ of the derivative with respect to $n$ of the average of the $n$-th moment of the partition function. Finally the free entropy reads:
\begin{equation}
	\phi(\alpha) \equiv  \lim_{N\rightarrow +\infty} \frac{1}{N}
        \mathbb{E}_{\tbf{X}}[\log(\mathcal{Z(\tbf{X})})] =
        \lim_{N\rightarrow +\infty} \lim_{n\rightarrow 0} \frac{1}{N
          n}\frac{ \partial \log
          \left(\mathbb{E}_{\tbf{X}}[\mathcal{Z}(\tbf{X})^n]\right)}{\partial
          n} \, .
\label{main:free_entropy_trick}
\end{equation}

Computing the $n$-th moment of the partition function $\mathcal{Z}$,
for $n\in \mathbbm{N}$, is equivalent to considering $n$ copies, also
called replicas, of the initial system. For a given disorder, these $n$ replicas are non-interacting and $\mathcal{Z}^n$ can be computed easily. However, averaging over the "disorder" with distribution $P_X$ makes the replicas interacting: replicated weight-vectors $\tbf{w}^a$ and $\tbf{w}^b$, for $a,b \in [1:n]$, are correlated by the overlap matrix $\tbf{Q}=\left(Q_{ab}\right)_{a,b=1}^n=\left(\frac{1}{N} \sum_{i=1}^N w_i^a w_i^b\right)_{a,b=1}^n$.
 
We start averaging over the distribution $P_X$ and then use an analytical continuation for $n \in \mathbbm{R}$ and reverse the limits $N \to \infty$ and $n \to 0$. The exchange of limits $n \to 0$ and $N \to \infty$ is a key and classical ingredient for replica calculations, rendering the replica method heuristic and not rigorously justified. Using this later point, we show in Appendix \ref{appendix:general_replica_calculation} that the free entropy $\phi$ eq.~\eqref{main:free_entropy_trick} can finally be expressed as a saddle point equation over $n \times n $ symmetric matrices $\tbf{Q}$ and $\tbf{\hat{Q}}$
\begin{equation}
	\phi (\alpha) = -\text{SP}_{ \tbf{Q}, \tbf{\hat{Q}} }
        \left\{\lim_{n\rightarrow 0} \frac{\partial
            S_n(\bold{Q},\bold{\hat{Q}})}{\partial  n} \right\}\, ,
	\label{main:replica_general}
\end{equation}
where $\tbf{\hat{Q}}$ is a parameter involved in the change of variable between $\{\tbf{w}^a,\tbf{w}^b\}$ and $Q_{ab}$ and with
\begin{equation*}
     \begin{cases}
     S_n (\tbf{Q},\tbf{\hat{Q}} ) = \frac{1}{2}\Tr(\tbf{Q\hat{Q}})
     -\log(\mathcal{I}_w^n (\tbf{\hat{Q}}))-\alpha\log
     \left(\mathcal{I}_z^n(\tbf{Q}) \right)\, ,
      \vspace{0.2cm} \\
        \mathcal{I}_w^n (\tbf{\hat{Q}}) =  \int_{\mathbbm{R}^n}
        d\tbf{\tilde{w}} P_{\tilde{w}}(\tbf{\tilde{w}})  e^{
          \frac{1}{2}\tbf{\tilde{w}}^{\intercal} \tbf{\hat{Q}}
          \tbf{\tilde{w}} }  \hspace{0.3cm}  \textrm{ where }
        \tbf{\tilde{w}} \in \mathbbm{R}^n \textrm{ and } \displaystyle
        P_{\tilde{w}}(\tbf{\tilde{w}}) = \prod_{a=1}^n
        [\delta(\tilde{w}_a-1) + \delta(\tilde{w}_a+1)]  \,
        ,\vspace{0.2cm} \\
        \mathcal{I}_z^n(\tbf{Q}) =   \int_{\mathbbm{R}^n}
        d\tbf{\tilde{z}} P_{\tilde{z}}(\tbf{\tilde{z}})
        \mathcal{C}(\tbf{\tilde{z}}) \hspace{1.25cm}  \textrm{ where }
        \tbf{\tilde{z}} \in \mathbbm{R}^n \textrm{ and } P_{\tilde{z}} \triangleq
        \mathcal{N}\left( \tbf{0}, \tbf{Q} \right) \, .
     \end{cases}
\end{equation*}
In order to be able to compute the derivative of $S_n$ with respect to $n$ eq.~\eqref{main:replica_general}, we need an analytical formulation of $\tbf{Q}$ and $\tbf{\hat{Q}}$ as a function of $n$. 

\subsection{RS entropy}
The simplest ansatz is to assume that the overlap matrix $\tbf{Q}$ is Replica Symmetric (RS), which means that all replicas play the same role: the correlation between two arbitrary, but different, replicas is denoted $q_0$, and therefore the RS ansatz reads: 
\begin{equation*}
\forall (a,b) \in [1:n]\times [1:n], \hspace{0.1cm} \frac{1}{N} (\tbf{w}^a\cdot\tbf{w}^b) = 
	\begin{cases}
		q_0  \textrm{ if } a\ne b\, ,\\
		Q = 1 \textrm{ if } a = b \, .
	\end{cases}	
\end{equation*}
It enforces the matrix $\tbf{\hat{Q}}$ to present the same symmetry, respectively with parameters $\hat{q}_0$ and $\hat{Q}=1$. Using this ansatz and the $n\to 0$ limit, the Replica Symmetric (RS) entropy can be expressed as a set of saddle point equations over scalar parameters $q_0$ and $\hat{q}_0$, evaluated at the saddle point (Appendix \ref{appendix:replica_rs}): 
\begin{equation}
	\phi_{\rm RS}(\alpha)=  \textrm{extr}_{q_0,\hat{q}_0} \left\{
          -\frac{1}{2} + \frac{1}{2}(q_0\hat{q}_0-1) +  \mathcal{I}^w_{\rm
            RS}(\hat{q}_0)   +\alpha \mathcal{I}^{z}_{\rm RS}(q_0)
        \right\}\, ,
        \label{main:phi_RS}
\end{equation}

\begin{equation}
\textrm{ with }
	\begin{cases}
		\mathcal{I}^w_{\rm RS}(\hat{q}_0) \equiv  \int Dt\log\left( g_0^w (t,\hat{q}_0)\right) \, ,\vspace{0.5cm} \\
		\mathcal{I}^{z}_{\rm RS}(q_0) \equiv   \int Dt  \log\left( f_0^z (t,q_0)  \right)\, ,
	\end{cases}
	\textrm{ and for $i\in \mathbbm{N}$ }
		\begin{cases}
		g_i^w (t,\hat{q}_0) \equiv \displaystyle \int dw \, w^i P_w(w) \exp\left(
                  \frac{(1- \hat{q}_0  )}{2} w^2 + t\sqrt{\hat{q}_0} w
                \right)\, ,
			 \vspace{0.3cm} \\
		f_i^z(t,q_0) \equiv \displaystyle \int Dz \,  z^i \varphi(\sqrt{q_0} t +
                \sqrt{1-q_0} z)\, .
 	\end{cases}
 	\label{main:f_z_g_w_rs}
\end{equation}

Note that above and in what follows $Dt = \frac{e^{-t^2/2}}{\sqrt{2\pi}} dt$. In
  the binary perceptron case, the function $P_w$ is defined as
  $P_w(w)= [ \delta(w-1) + \delta(w+1) ]$ (note that this is not a
  probability distribution because of the normalization), and recall
  $\varphi(z)$ is the indicator function, checking that a constraint
  on the argument is satisfied (e.g in the step case, $\varphi^s(z) = 1$ if $z>K$).

While in the step binary perceptron (SBP) the fixed point solution
$(q_0,\hat{q}_0)$ is non-trivial, the symmetry of the activation
function in the RBP and UBP cases enforces the configuration space to
be symmetric and the fixed point $(q_0,\hat{q}_0)= (0,0)$ to exist. If
this symmetric fixed point is stable and has the lowest free energy,
the RS free entropy matches the annealed entropy
$\phi_a^t (\alpha) = \log(2)+\alpha \log(p_{t,K}) = \frac{1}{N} \log \EE_{\tbf{X}} [ \mathcal{Z}_t ( \tbf{X} ) ]$ from section~\ref{proof:first_moment} with $t \in \{r,u\}$. 

\subsubsection{Rectangle }
Solving numerically the corresponding saddle point equations leads to the single symmetric fixed point $(q_0,\hat{q}_0)=~(0,0)$. Hence the RS entropy saturates the first moment bound: 
\begin{align*}
	\phi_{\rm RS}^{r} (\alpha) =   \log(2) + \alpha \log\left( p_{r,K} \right)  = \phi_a^{r}(\alpha)\, ,
\end{align*}
and the RS capacity equals the annealed capacity eq.~\eqref{proof:first_moment}:
\begin{align*}
	\alpha_{\rm RS}^{r}(K) = \alpha_a^{r}(K) = \frac{-\log(2)}{\log\left( p_{r,K} \right)} \, .
\end{align*}

\subsubsection{$U$-function}
\begin{itemize}
		\item For $K  \leq K^*$, only the symmetric fixed point $(q_0,\hat{q}_0)= (0,0)$ exists, which leads again to the annealed free entropy:
		\begin{align*}
			\phi_{\rm RS}^{u} (\alpha) =   \log(2)
                                + \alpha \log\left( p_{u,K}
                               \right)  =  \phi_a^{u}(\alpha) \, ,
		\end{align*}
		and annealed capacity eq.~\eqref{proof:first_moment}:
		\begin{align*}
			\alpha_{\rm RS}^{u}(K) = \alpha_a^{u}(K) = \frac{-\log(2)}{\log\left( p_{u,K} \right)} \, .
		\end{align*}
		\item For $K >  K^*$, the RS entropy does not match
                  the annealed entropy because the fixed point
                  $(q_0,\hat{q}_0)\neq(0,0)$ corresponds to a lower
                  free energy than the symmetric fixed point
                  $(0,0)$. The symmetric fixed point becomes unstable for
                  $K>K^*$, where $K^*$ is remarkably given by the same value as in the independent section~\ref{proof:second_moment_u}. Hence it naturally verifies eq.~\eqref{AT_crossover_RS} even though its definition derives from the stability of the RS solution, that we study in the next section.
\end{itemize}

\subsection{Stability}
The local stability of the RS solution can be studied using de Almeida and Thouless (AT) method \textbf{\cite{22}}, based on the positivity of the Hessian of $S_n (\tbf{Q},\tbf{\hat{Q}} )$. The replica symmetric AT-line $\alpha_{\rm AT}$ is given by the solution of the following implicit equation (Appendix \ref{appendix:AT_stability}): 
\begin{equation*}
\frac{1}{\alpha} = \frac{1}{(1-q_0(\alpha))^2} \int Dt
\frac{\left(f_0^{z}(f_0^{z}-f_2^{z}) + (f_1^{z})^2
  \right)^2}{(f_0^{z})^4}(t,q_0(\alpha)) \int Dt \frac{
  \left(g_0^{w}g_2^{w} -(g_1^{w})^2  \right)^2
}{(g_0^{w})^4}(t,\hat{q}_0(\alpha))\,.
 \end{equation*}

As illustrated above, for the rectangle and $u-$function, the symmetry of the weights $P_w$ and the constraint $\varphi$ imposes the existence of the symmetric fixed point $(q_0,\hat{q}_0)=(0,0)$. This simplifies the previous condition and becomes equivalent to the linear stability condition of the symmetric fixed point $(q_0,\hat{q}_0)=(0,0)$ (see Appendix \ref{appendix:AT_stability}):  
\begin{equation*}
	\frac{1}{\alpha_{\rm AT}}=
        \left(\frac{\tilde{f}_2^{z}-\tilde{f}_0^{z}}{\tilde{f}_0^{z}}\right)^2
        \left(\frac{\tilde{g}_2^{w} }{\tilde{g}_0^{w}}\right)^2\, , \textrm{  where for $i\in \mathbbm{N}$: }
\begin{cases}
\tilde{g}_i^{w} = \displaystyle \int dw w^i P_w(w) e^{ \frac{w^2}{2}} \, ,\vspace{0.2cm}  \\
\tilde{f}_i^{z} = \displaystyle \int Dz z^i \varphi(z) \, .
\end{cases}
\end{equation*}

We plotted the annealed capacity, the replica symmetric capacity and
the AT-line for the step, rectangle and $u$-function binary
perceptrons as functions of $K$ in
fig.~\ref{main:plot_RS_capacity_step}, \ref{main:plot_RS_capacity_rectangle},
\ref{main:plot_RS_capacity_symstep}.
 
\subsubsection{Step binary perceptron}
We note that for the step binary perceptron the RS solution is always stable towards 1RSB, even for negative threshold $K<0$. This is interesting in the view of recent work on the spherical perceptron with negative threshold where the replica symmetry breaks for all $K<0$, and full-step RSB is needed to evaluate the storage capacity \cite{20}.

\subsubsection{Rectangle}
As the RS capacity $\alpha_{\rm RS}^{r}$ is always below the AT line $\alpha_{\rm AT}^{r}$, the RS solution is always locally stable.

\subsubsection{$u$-function }
There is a crossing between the values of the RS capacity $\alpha_{\rm RS}^{u}$ and the AT-line $\alpha_{\rm AT}^{u}$, which defines implicitly the value $K^*\simeq 0.817$, and matches the equality in eq.~\eqref{main:AT_second_moment}: 
\vspace{-0.3cm}
\begin{equation}
	 \frac{-\log\left(2 \right)}{ \log\left(p_{u,K^{*}}  \right)} =
         \frac{\pi}{2}\frac{\left( p_{u,K^{*}} \right)^2}{e^{-(K^*)^2}(K^*)^2}\, .
	 \label{main:AT_crossover_RS}
\end{equation}
For $K\leq K^*$, the RS solution is locally stable, while for $K>K^*$ the RS solution becomes unstable, and a symmetry breaking solution appears.

\begin{figure}[htb!]
\centering
\includegraphics[scale=0.32]{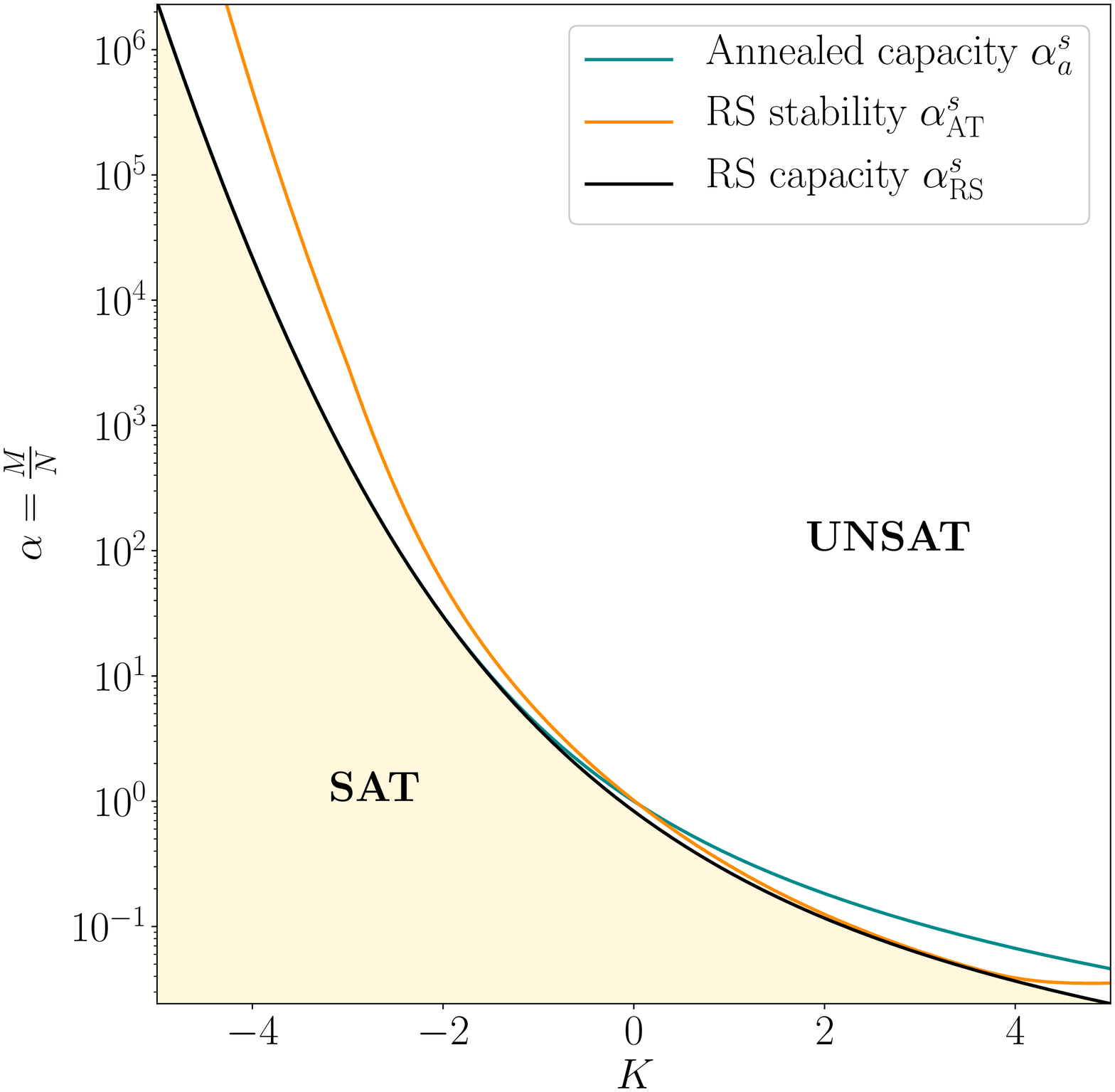}
\hspace{0.2cm}
\includegraphics[scale=0.32]{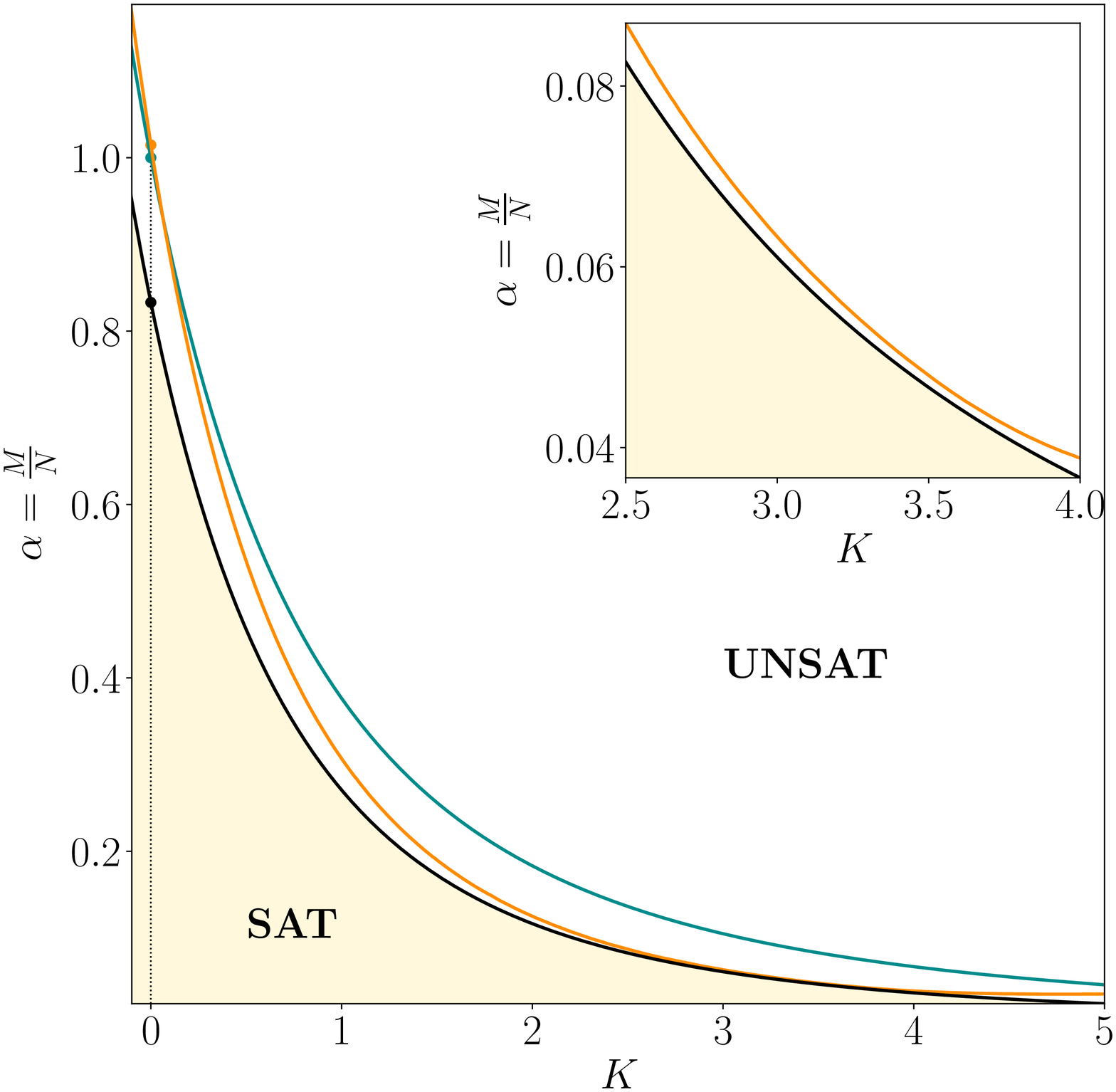}
\caption{Step binary perceptron (SBP): the RS capacity $\alpha_{\rm RS}^s$
  (black) does not match the annealed capacity $\alpha_a^s$ (blue) and
  is always below the AT-line $\alpha_{\rm AT}^s$ (orange). The AT-line is
  closest to the annealed capacity for $K_{\rm min} \simeq 3.62$ where the
  difference $\alpha_{\rm AT}^s - \alpha_a^s \simeq 0.0012$. For $K=0$, we retrieve well known results \cite{2}:
  $\alpha_{\rm RS}^r \simeq 0.833$, $\alpha_{\rm AT}^r \simeq 1.015$ and
  $\alpha_a^r = 1$. The left and right hand sides, and the inset, represent the same
  data on different scales. The satisfiable (SAT) phase is represented by the beige shaded area and is located below the RS capacity, while the unsatisfiable (UNSAT) starts at the capacity (black line) and extends for a larger number of constraints.}
	\label{main:plot_RS_capacity_step}
\end{figure}
\FloatBarrier

\begin{figure}[htb!]
\centering
\includegraphics[scale=0.32]{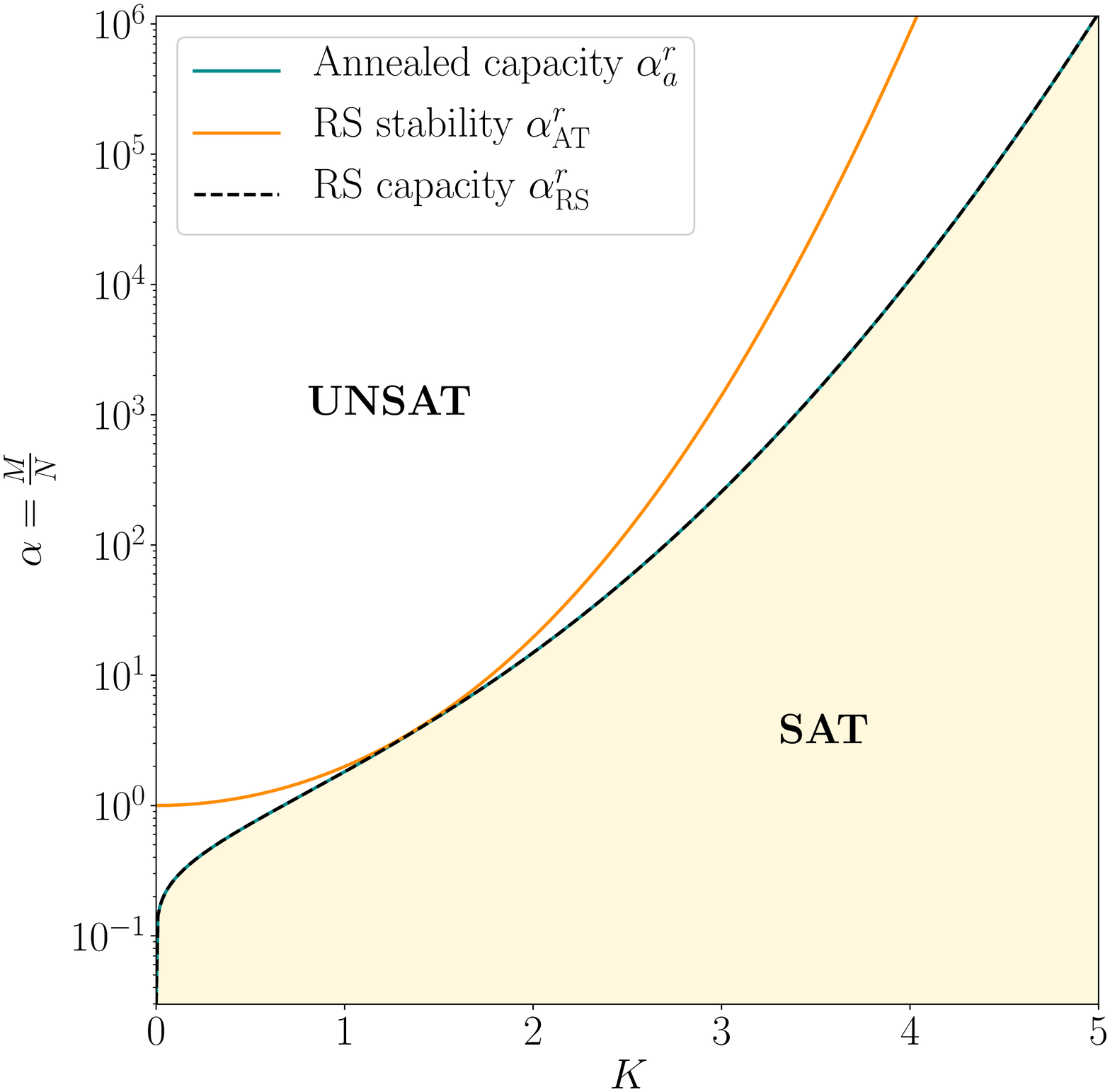}
\hspace{0.2cm}
\includegraphics[scale=0.32]{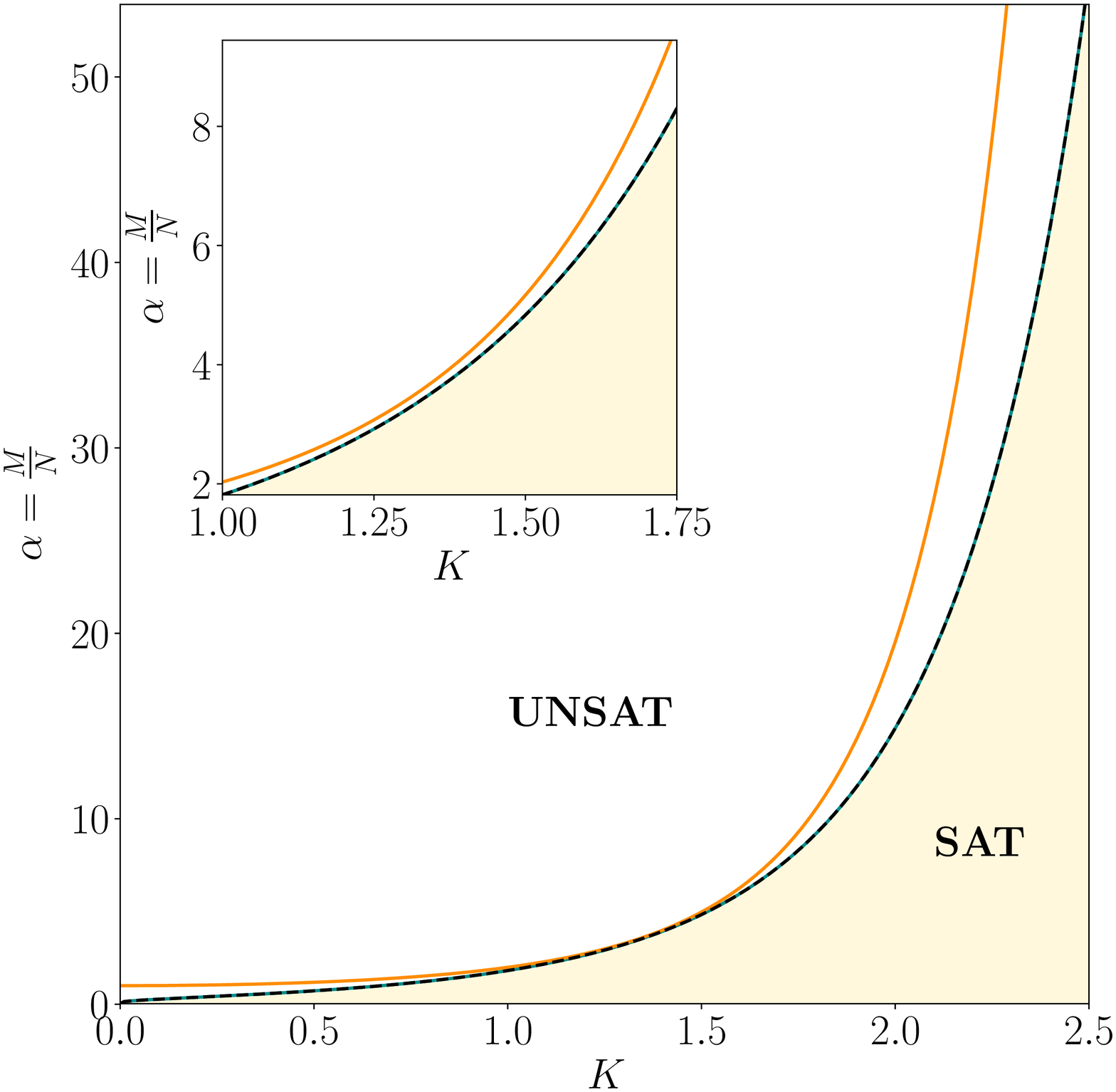}
\caption{Rectangle binary perceptron (RBP): the RS capacity
  $\alpha_{\rm RS}^{r}$ (black) matches the annealed bound $\alpha_{a}^{r}$ (blue), and
  the RS solution is locally stable for all $K$:
  $\alpha_{\rm RS}^{r}<\alpha_{\rm AT}^{r}$. The AT-line (orange) is
  closest to the annealed capacity for $K_{\rm min} \simeq 1.24$ where the
  difference $\alpha_{\rm AT}^s - \alpha_a^s \simeq 0.15$. The left and right hand sides, and the inset, represent the same
  data on different scales. The satisfiable (SAT) phase is represented by the beige shaded area and is located below the RS capacity, while the unsatisfiable (UNSAT) starts at the capacity (black line) and extends for a larger number of constraints.}
	\label{main:plot_RS_capacity_rectangle}
\end{figure}
\FloatBarrier

\begin{figure}[htb!]
\centering
\includegraphics[scale=0.32]{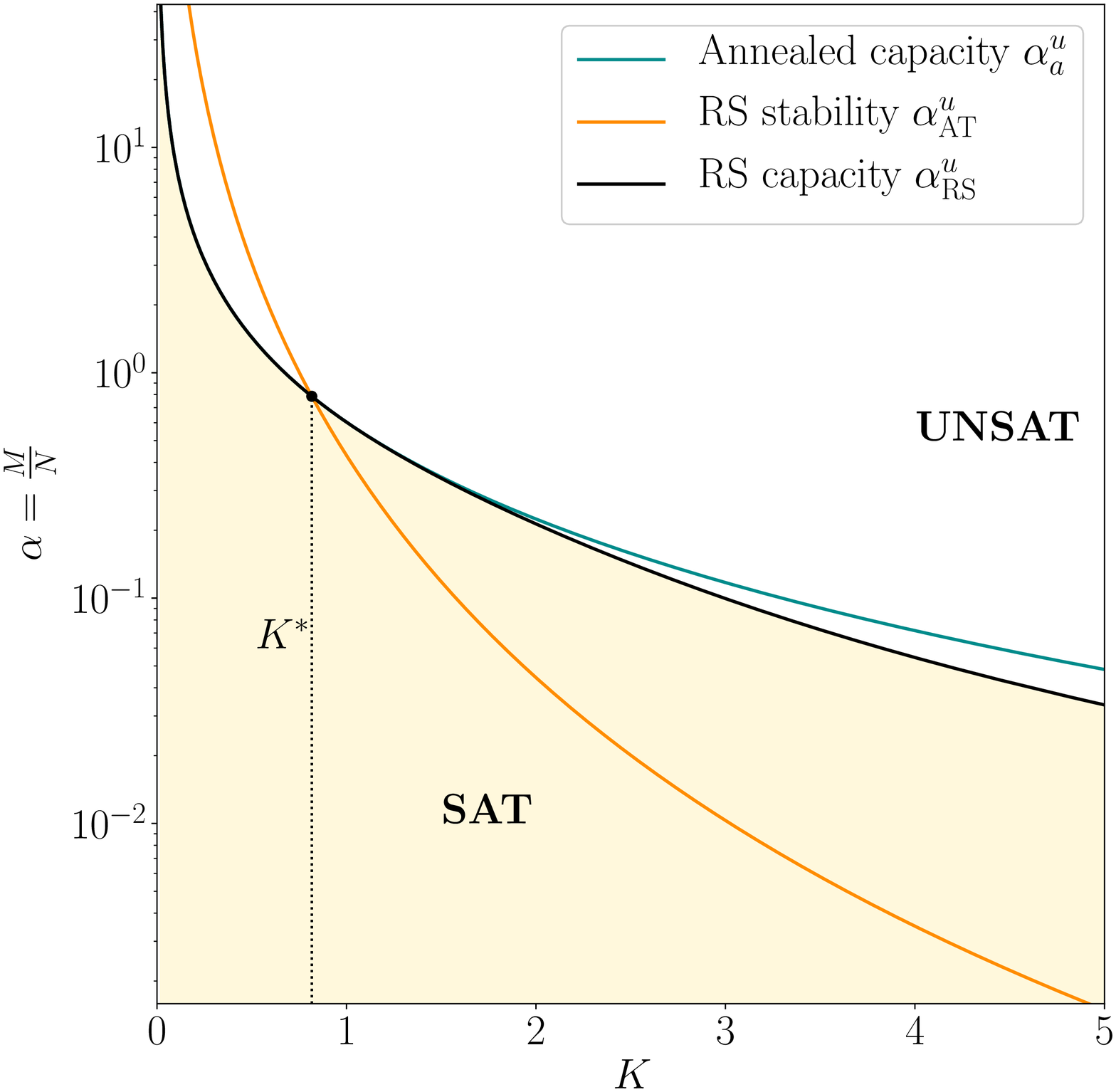}
\hspace{0.2cm}
\includegraphics[scale=0.32]{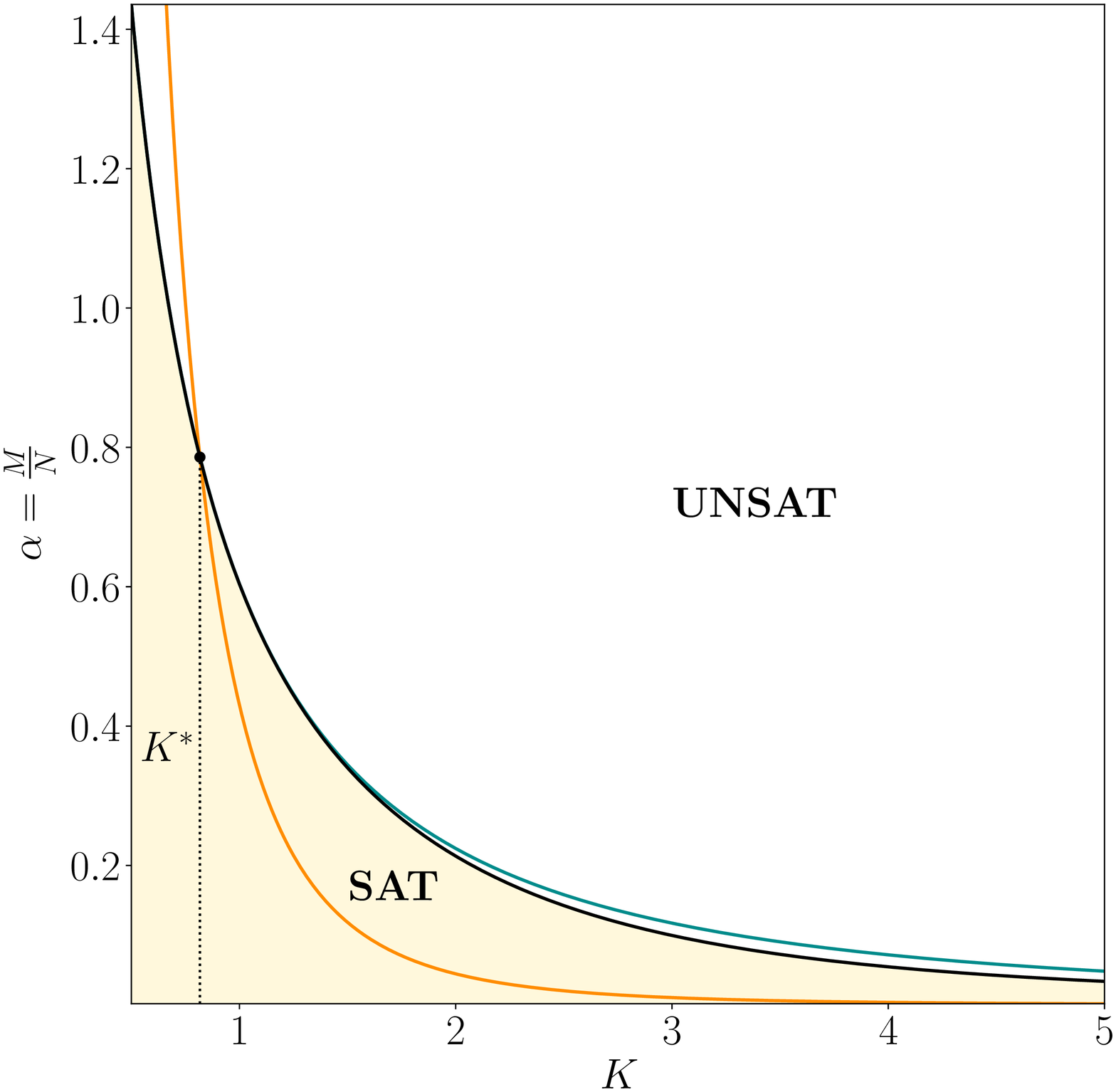}
\caption{$U-$function binary perceptron (UBP): the RS capacity §black) matches the
  annealed bound (blue) for $K<K^*$. At $K=K^*$, the RS capacity crosses the
  AT-line (orange). For $K>K^*$, the RS solution is unstable and the RS
  capacity deviates from the annealed capacity. The left and right hand sides, and the inset, represent the same
  data on different scales. The satisfiable (SAT) phase is represented by the beige shaded area and is located below the RS capacity, while the unsatisfiable (UNSAT) starts at the capacity (black line) and extends for a larger number of constraints.}
	\label{main:plot_RS_capacity_symstep}
\end{figure}
\FloatBarrier

\subsection{1RSB calculation}

In the previous section we concluded that the replica symmetric
solution is unstable in the $u-$function binary perceptron for $K>K^*$,
we analyze therefore the first-step of replica symmetry breaking (1RSB)
ansatz in this section. This ansatz and calculations is due to seminal
works of G. Parisi and is classic in the field of disordered systems and well presented in the
literature \cite{13,parisi1979infinite,parisi1980sequence,parisi1980order}, we thus mainly give the key formulas and defer the
details into the Appendix \ref{appendix:replica_1RSB}. 

The 1RSB ansatz assumes that the space of configurations
splits into states. Consequently replicas are not symmetric anymore and instead $n$ replicas are organized in $\frac{n}{m}$ groups containing $m$ replicas each:
\begin{equation}
\forall (a,b) \in [1:n]\times [1:n], \hspace{0.1cm} \frac{1}{N}(\tbf{w}^a\cdot\tbf{w}^b) = 
	\begin{cases}
		q_1  \textrm{ if $a$,$b$ belong to the same state,}\\
		q_0 \textrm{ if $a$,$b$ do not belong to the same state,}\\
		Q = 1 \textrm{ if } a = b \, .
	\end{cases}	
	\label{main:1RSB_ansatz}
\end{equation}

Following \textbf{\cite{25}}, the partition function $\mathcal{Z}_m$
associated to $m$ replicas falling in the same state is expressed as a sum over all possible
states $\Psi$ weighted by their corresponding free entropy $\phi$:
\begin{equation*}
	\mathcal{Z}_m = \sum_{\{\Psi\}} \exp(N m \phi(\Psi))
        =\sum_{\{\phi\}} {\cal N}_{\phi} \exp(N m
        \phi)=\sum_{\{\phi\}}  \exp(N \Sigma(\phi)) \exp(N m \phi)
        \sim \int d\phi \exp(N(m\phi+\Sigma(\phi))  \, ,
\end{equation*}
where we introduced the number of states at a given free
entropy $\phi$: ${\cal N}_{\phi}\equiv\exp(N \Sigma(\phi))$ and the
complexity $ \Sigma(\phi)$, also called the configurational entropy.

Using the saddle point method in the $N \to \infty$ limit, the 1RSB
replicated free entropy $\Phi_{\rm 1RSB}$ is written as a function of the Parisi parameter $m$, the free entropy $\phi$ and the complexity $\Sigma(\phi)$:
\begin{equation}
	\Phi_{\rm 1RSB}(m,\alpha) \equiv  \lim_{N \to \infty} \frac{1}{ N}
        \mathbb{E}_{\tbf{X}}\left[\log( \mathcal{Z}_m(\tbf{X}) )\right]  = m
        \phi + \Sigma(\phi) \, .
        \label{main:phi_1RSB_phi_m_Sigma}
\end{equation}

Injecting the 1RSB ansatz eq.~\eqref{main:1RSB_ansatz} in the replica
derivation eq.~\eqref{main:replica_general}, the 1RSB replicated free
entropy $\Phi_{\rm 1RSB}$ is written as a saddle point equation over
$\tbf{q}=(q_0,q_1)$ and $\tbf{\hat{q}}=(\hat{q}_0,\hat{q}_1)$ (see Appendix \ref{appendix:replica_1RSB}):
\begin{equation}
\Phi_{\rm 1RSB}(m,\alpha) =  \underset{\tbf{q} , \tbf{\hat{q}}}{\textrm{extr}} \left\{  \frac{m}{2} \left(  q_1\hat{q}_1 - 1 \right) + \frac{m^2}{2} \left(q_0\hat{q}_0 - q_1\hat{q}_1 \right)   + m \mathcal{I}^w_{\rm 1RSB}(\tbf{\hat{q}})   +\alpha m \mathcal{I}^{z}_{\rm 1RSB}(\tbf{q})    \right\}
\label{main:phi_1RSB}
\end{equation}

\begin{equation*}\textrm{ with }
	\begin{cases}
		\mathcal{I}^w_{\rm 1RSB}(\tbf{\hat{q}}) = \frac{1}{m}
                \int Dt_0 \log\left(\int Dt_1 g_0^w\left(\tbf{t},\tbf{\hat{q}} \right )^{m}
                \right)\, ,
		 \vspace{0.3cm} \\
		\mathcal{I}^{z}_{\rm 1RSB}(\tbf{q}) = \frac{1}{m}
                \int Dt_0  \log\left(  \int Dt_1  f_0^z\left(\tbf{t},\tbf{q}\right)^m
                    \right) \, ,
	\end{cases}
\end{equation*}

\begin{equation}\textrm{denoting $\tbf{t}=(t_0,t_1)$, and for $i\in \mathbbm{N}$: }
	\begin{cases}
		g_i^w (\tbf{t},\tbf{\hat{q}})  = \int dw \, w^i P_w(w) \exp\left(
                  \frac{(1- \hat{q}_1  )}{2} w^2 +
                  \left(\sqrt{\hat{q}_0}t_0+\sqrt{\hat{q}_1-\hat{q}_0}t_1
                  \right)w \right)\, ,
			 \vspace{0.3cm} \\
		f_i^z(\tbf{t},\tbf{q}) = \int Dz  \, z^i \varphi(\sqrt{q_0} t_0 +
                \sqrt{q_1-q_0} t_1  + \sqrt{1-q_1} z)\, .
 	\end{cases}
 	\label{main:f_i_g_i}
\end{equation}

Taking the derivative of $\Phi_{\rm 1RSB}$ with respect to $m$, the free
entropy $\phi$ and complexity $\Sigma$ can be written as: 
\begin{equation}
	\begin{cases}
		\phi(\alpha) = \frac{\partial \Phi_{\rm 1RSB}(m,\alpha)  }{\partial m }  = \underset{\tbf{q} , \tbf{\hat{q}}}{\textrm{extr}} \left\{ \frac{1}{2} (  q_1\hat{q}_1 - 1 ) +   m \left(q_0\hat{q}_0 - q_1\hat{q}_1 \right)
	+\mathcal{J}^w_{\rm 1RSB}(\tbf{\hat{q}} ) + \alpha
        \mathcal{J}^{z}_{\rm 1RSB}(\tbf{q}) \right\} \, ,\vspace{0.3cm} \\
	\Sigma (\phi) =  \Phi_{\rm 1RSB}- m \phi = \underset{\tbf{q} ,
          \tbf{\hat{q}}}{\textrm{extr}} \left\{
          \frac{m^2}{2}(q_1\hat{q}_1 - q_0\hat{q}_0) +
          m(\mathcal{I}^w_{\rm 1RSB} - \mathcal{J}^w_{\rm 1RSB}
          )(\tbf{\hat{q}}) + m \alpha (\mathcal{I}^{z}_{\rm 1RSB}
          -\mathcal{J}^{z}_{\rm 1RSB})(\tbf{q})\right\}\, ,
	\end{cases}
	\label{main:complexity}
\end{equation}

\begin{equation*} \textrm{ with }
	\begin{cases}
		\mathcal{J}^w_{\rm 1RSB}(\tbf{\hat{q}}) =
                \frac{\partial \left(m \mathcal{I}^w_{\rm
                      1RSB}\right)}{\partial m} = \int Dt_0 \frac{\int
                  Dt_1 \log\left(g_0^w(\tbf{t},\tbf{\hat{q}})\right)
                  g_0^w(\tbf{t},\tbf{\hat{q}})^m}{\int Dt_1
                  g_0^w(\tbf{t},\tbf{\hat{q}})^m} \, ,\vspace{0.2cm} \\ 
		\mathcal{J}^{z}_{\rm 1RSB}(\tbf{q}) = \frac{\partial
                  \left(m \mathcal{I}^{z}_{\rm 1RSB}\right) }{\partial
                  m} = \int Dt_0 \frac{\int Dt_1
                  \log\left(f_0^z(\tbf{t},\tbf{q})\right) f_0^z(\tbf{t},\tbf{q})^m}{\int
                  Dt_1 f_0^z(\tbf{t},\tbf{q})^m} \, .\vspace{0.2cm} \\ 
	\end{cases}
\end{equation*}

\subsection{1RSB results for UBP}
From now on, we only consider the $u-$function binary perceptron, whose RS solution is unstable for $K>K^*$.
To describe the equilibrium of the system in the SAT phase, we need to find the value of the Parisi parameter at equilibrium $m_{\rm eq}$. The complexity $\Sigma(\phi)$ is the entropy of clusters having internal entropy $\phi$. In order to capture clusters that carry almost all configurations, we need to maximize the total entropy $\phi_{\rm tot}  = \Sigma(\phi) + \phi$ under the constraint that the free entropy and complexity are both positive $\phi \geq 0$ and $\Sigma(\phi)\geq 0$. Hence from eq.~\eqref{main:phi_1RSB_phi_m_Sigma}, the equilibrium Parisi parameter $m_{\rm eq}$ verifies 
\begin{equation*}
	\phi_{\rm eq}=\underset{\phi \geq 0,\Sigma \geq
          0}{\textrm{argmax }}{ \{ \phi + \Sigma(\phi) \} } \hspace{1cm} \textrm{and} \hspace{1cm}
    m_{\rm eq} =    \left. - \frac{d \Sigma}{d \phi} \right |_{\phi_{\rm eq}}\, .
\end{equation*}



Using the expressions eq.~\eqref{main:complexity} and varying the Parisi
parameter $m\in[0;1]$, we obtain the curve of the complexity
$\Sigma(\phi)$ as shown in fig.~\ref{main:complexity_curves}. At
$m=1$, the complexity is negative. Decreasing $m$, the complexity
increases and becomes positive at the value $m_{\rm eq}$. Besides for
small values of $m$, an unphysical (convex) branch appears, as commonly
observed in other systems solved by the replica method.

We note that at $\alpha$ increases both the equilibrium complexity and
free entropy decrease. In constraint satisfaction problems such as
K-satisfiability or random graph coloring the mechanism in which the
satisfiability threshold appears is that the maximum of the complexity
becomes negative. In the present UBP problem it is actually both the
free entropy and the complexity that vanish together, as illustrated
in fig.~\ref{main:complexity_curves}.

\begin{figure}
\centering
\includegraphics[scale=0.30]{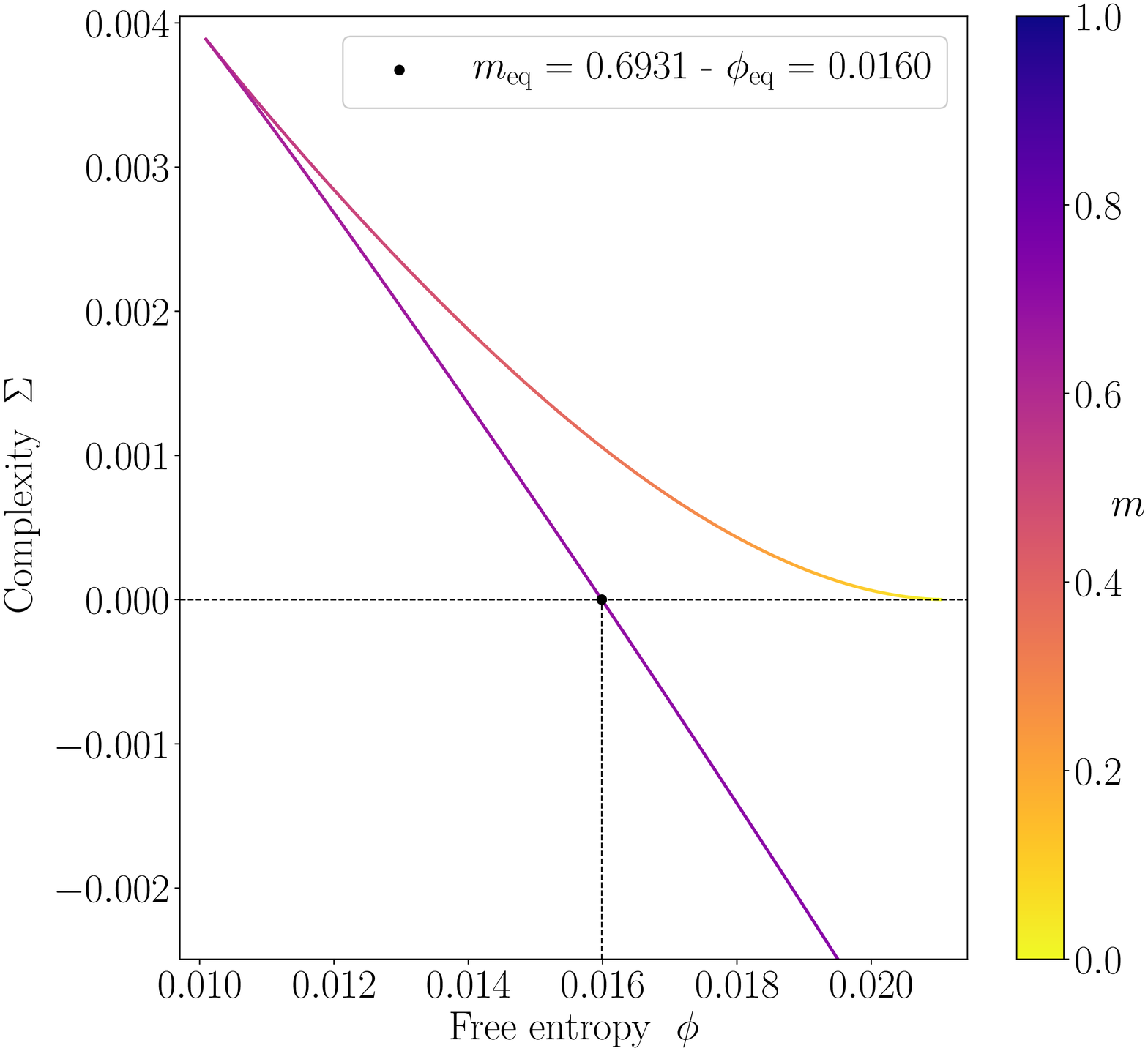}
	\hspace{0.2cm}
	\includegraphics[scale=0.310]{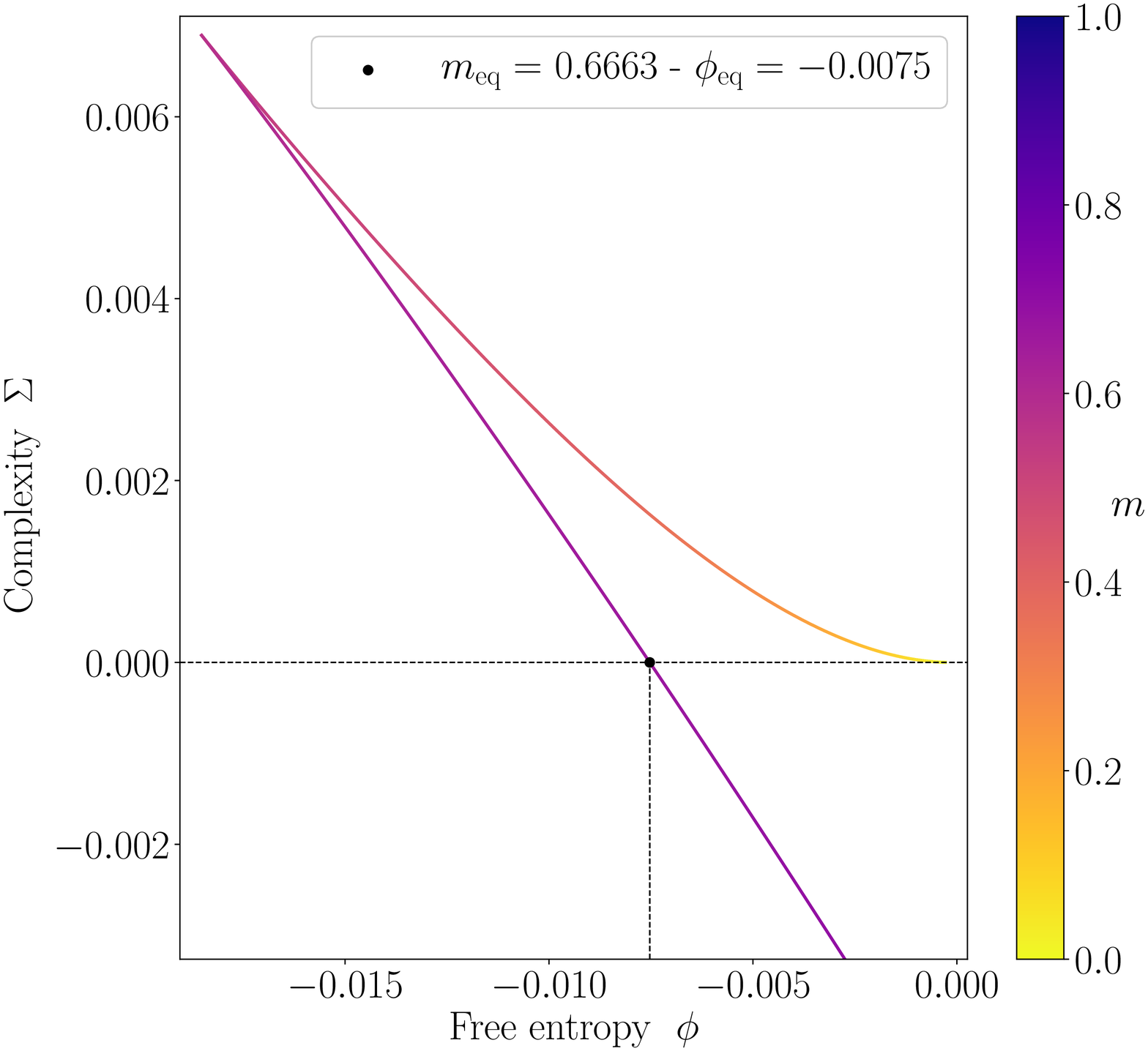}
	\caption{Complexity $\Sigma(\phi)$ as a function of the free
          entropy $\phi$ for the $u-$function binary perceptron  at
          $K=1.5>K^*$. Complexity reaches $\Sigma=0$ (black dot) at
          $m_{\rm eq}$. For $K=1.5$ and $\alpha=0.33$ \textbf{a)} the free-entropy
          corresponding to $m_{\rm eq}$ is positive $\phi_{\rm eq} >0$, whereas for
          $\alpha=0.34$ \textbf{b)} the free entropy at $m_{\rm eq}$ is negative $\phi_{\rm eq} < 0$ and therefore there is no part of the curve where both
          complexity and free entropy are positive: thus this value
          of $\alpha$ is beyond the 1RSB storage capacity, and the capacity is in the interval $[0.33;0.34]$.}
	\label{main:complexity_curves}
	\end{figure} 
\FloatBarrier

Computing the equilibrium value $m_{\rm eq}(\alpha)$, we have access to
the corresponding equilibrium overlaps $q_0^*$ and $q_1^*$, that we
may compare with the RS solution $q_{\rm RS}$. All these are depicted
in fig.~\ref{main:plot_1RSBmeq}. 
The function $m_{\rm eq} ( \alpha )$ shows a non monotonic behaviour as it has been previously observed, e.g. in the Sherrington-Kirkpatrick model as a function of temperature \cite{Mezard1987}.

 We also compute the 1RSB entropy that verifies $\phi_{\rm 1RSB}^{u} \leq \phi_{\rm RS}^{u}$ and which vanishes at the 1RSB capacity $\alpha_{\rm 1RSB}^{u}$ as
depicted in fig.~\ref{main:plot_annealed_minus_replicas}a. We note that the above inequality is as predicted by Parisi's replica theory \cite{Mezard1987}, taking into account that we are working at strictly zero energy, where the entropy becomes minus the free energy. 

The 1RSB solution provides a small correction to the RS result for storage capacity, as illustrated in
fig.~\ref{main:plot_annealed_minus_replicas}b, where we plotted the
difference between the annealed upper bound and the capacity for the RS
and 1RSB solutions: $\alpha_a^{u}-\alpha_{\rm RS}^{u}$ and
$\alpha_a^{u}-\alpha_{\rm 1RSB}^u$. 

\begin{figure}[htb!]
\centering
	\includegraphics[scale=0.32]{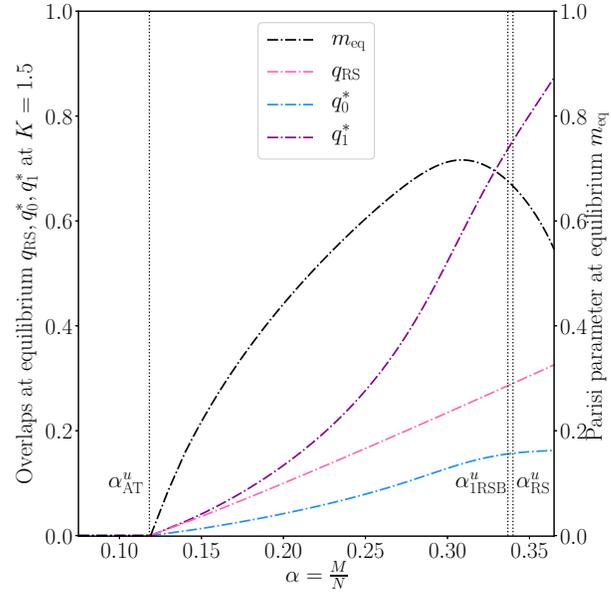}
	\caption{Equilibrium values of the overlap
          $q_0^*\ne q_{\rm RS}$, $q_1^*$ and the Parisi parameter $m_{\rm eq}$ for the UBP at
          $K=1.5$. For $K<K^*$, the RS solution is stable and the only fixed point is $q_0=q_1=q_{RS}=0$. }
	\label{main:plot_1RSBmeq}
	\label{main:plot_entropy_1RSB}
\end{figure} 
\FloatBarrier

\begin{figure}[htb!]
		    \centering
		    \includegraphics[scale=0.31]{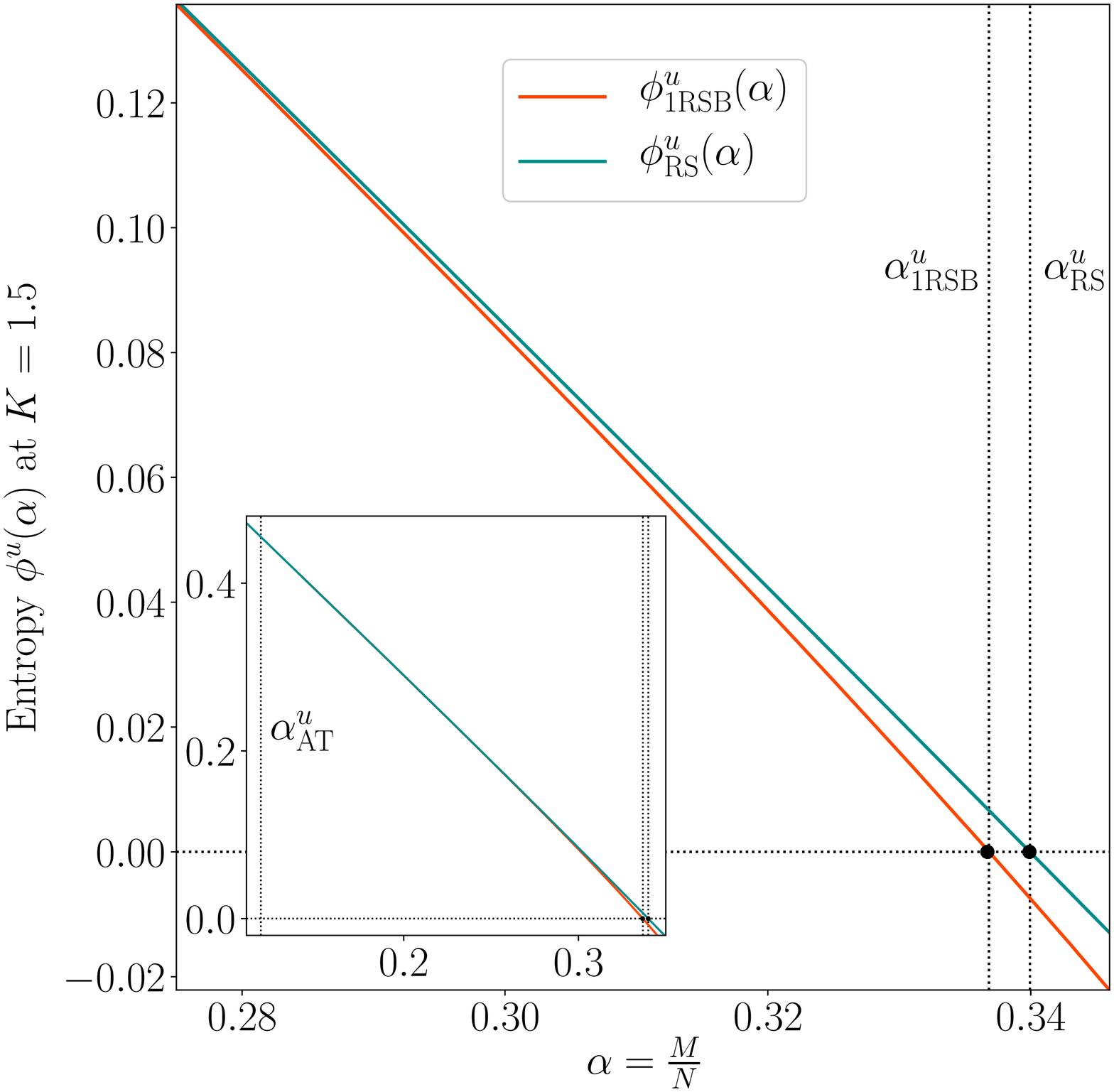}
		    \hspace{0.2cm}
	   		\includegraphics[scale=0.31]{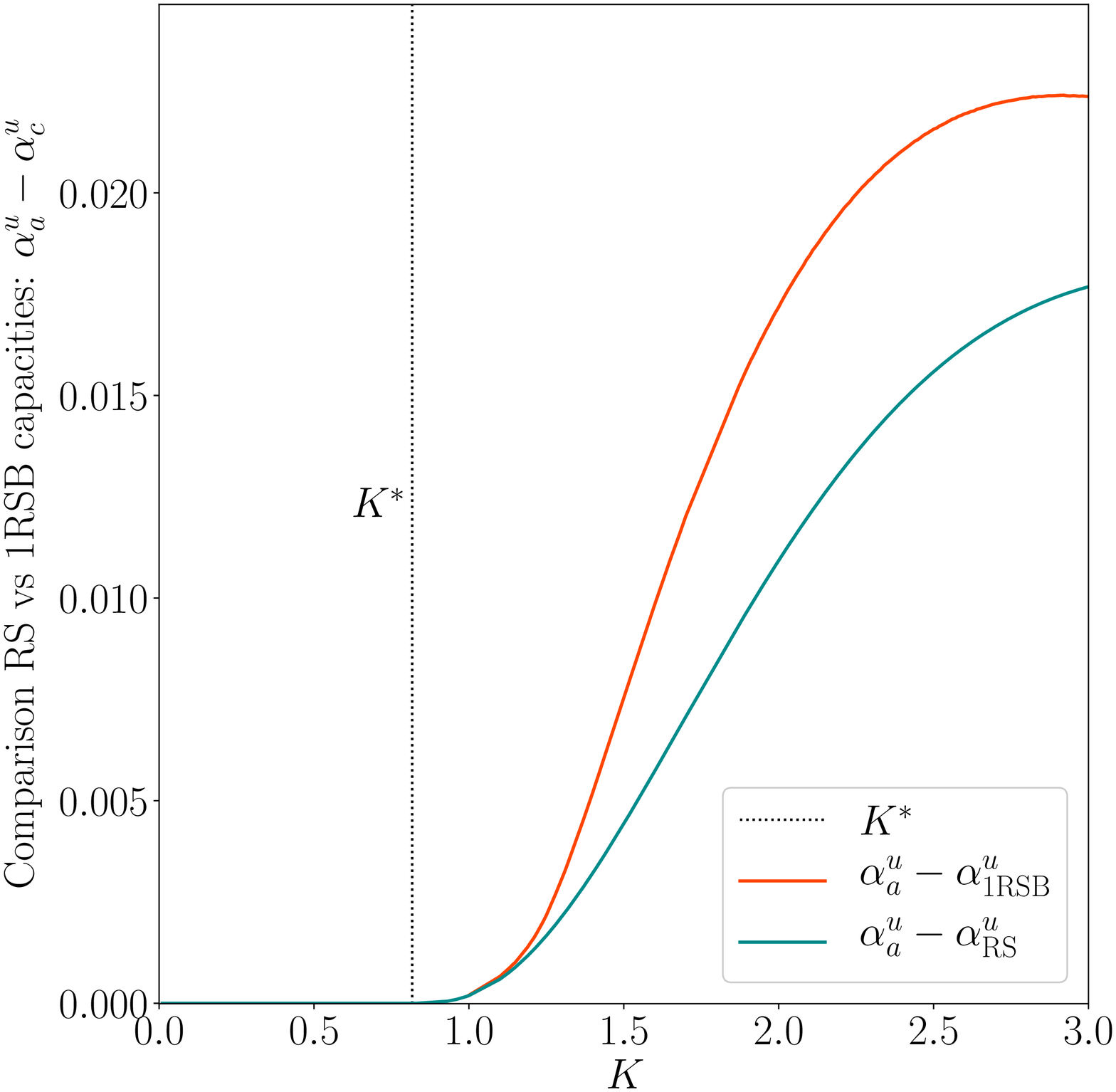}
	   		\caption{\textbf{a)}: Comparison of the RS (blue) and 1RSB (orange) entropy
          for the UBP at $K=1.5$. For $\alpha<\alpha_{\rm AT}\simeq 0.118 $, RS and
          1RSB entropies are equalled. For $\alpha>\alpha_{\rm AT}$,
          1RSB entropy deviates slightly of the RS entropy before vanishing respectively at $\alpha_{\rm 1RSB}^{u}\simeq 0.337 $ and $\alpha_{\rm RS}^{u}\simeq 0.334$. The inset represents the same data on a different scale. \textbf{b)}: Difference between the annealed upper
                          bound and the 1RSB capacity
                          $\alpha_a^{u}-\alpha_{\rm 1RSB}^{u}$ (orange) and the RS capacity
                          $\alpha_a^{u}-\alpha_{\rm RS}^{u}$ (blue). Below $K^*$ the RS solution is stable: RS and 1RSB entropies match exactly. Above $K^*$, the RS solution is unstable: the 1RSB entropy deviates slightly from the RS solution.}
	   		\label{main:plot_annealed_minus_replicas}
\end{figure}
\FloatBarrier

\subsection{1RSB Stability}

In the previous section we evaluated the 1RSB storage capacity of the
$u-$function binary perceptron for $K>K^*$. In this section we will argue that this
cannot be an exact solution to the problem. 

We could investigate the stability of 1RSB towards further levels of
replica symmetry breaking along the same lines we did for the RS
solution. However, in the present case we do not need to do that to
see that the obtained solution cannot be correct. The explanations
lies in the breaking of the up-down symmetry in the problem. This
symmetry must either be broken explicitly as in the ferromagnet,
where the system would acquire an overall magnetization, but we have
not observed any trace of this in the present problem. Or this up-down 
symmetry must be conserved in the final correct solution. The
conservation of the up-down symmetry is manifested in the value
$q_0=0$ in the replica symmetric phase. The fact that in the 1RSB
solution evaluated above we do not observe $q_0=0$, but instead
$q_0>0$ is a sign of the fact that we are evaluating a wrong
solution. The only possible way to obtain an exact solution we foresee
is to evaluate the full-step replica symmetry breaking with a
continuity of overlaps $q(x)$, the smallest one of them should be $0$
in order to restore the up-down symmetry.  We let the evaluation of
the full-RSB for future work. 
 
Finally let us note that the 1RSB solution obtained in the previous
section can be interpreted as frozen-2RSB. In 2RSB we would have 3
kinds of overlaps, $q_0$, $q_1$ and $q_2$. In frozen 2RSB we would have
$q_2=1$, $q_1=q_1^{\rm 1RSB}$, $q_0=q_0^{\rm 1RSB}$. 

\section{ Conclusion}

The step-function binary perceptron has thus far eluded a
rigorous establishment of the conjectured storage capacity, eq.~\eqref{RS_capacity}. This prediction
is expected to be exact because of the frozen-1RSB nature of the
problem \cite{2,16}. At the same time the work of
\cite{baldassi2015subdominant} sheds light on the fact that the
structure of the space of solutions is not fully described by the
frozen-1RSB picture, and that rare dense and unfrozen regions exist and
in fact are amenable to dynamical procedures searching for solutions. It
remains to be understood how is it possible that the 1RSB calculation
does not capture these dense unfrozen regions of solutions
\cite{baldassi2015subdominant}. 
They do not dominate the equilibrium, but the RSB calculation is
expected to describe rare events via their large deviations, which in
this case it does not. 

In this paper we focus on two cases of the binary perceptron with
symmetric constraints, the rectangle binary perceptron and the
$u-$function binary perceptron. We prove (up to a numerical assumption) using the second moment method that the
storage capacity agrees in those cases with the annealed upper bound,
except for the $u-$function binary perceptron for $K>K^*$ eq.~\eqref{AT_crossover_RS}. 
We analyze the 1RSB solution in that case and indeed obtain a lower
prediction for the storage capacity. However, we do not expect the
1RSB to provide the exact solution because it does not respect the
up-down symmetry of the problem. Though the precise nature of the satisfiable phase for the $u-$function binary
perceptron for $K>K^*$ remains illusive, we can conjecture it is
full-RSB \cite{parisi1979infinite,parisi1980sequence,parisi1980order}. Establishing this rigorously would provide much deeper
understanding and remains a challenging subject for future work. 

\section*{Acknowledgement}

We thank Florent Krzakala, Joe Neeman, and Pierfrancesco Urbani for useful
discussions. 
We acknowledge funding from the ERC under the European
Unionâ€s Horizon 2020 Research and Innovation Programme Grant
Agreement 714608-SMiLe.   WP was supported in part by EPSRC grant EP/P009913/1.

\bibliographystyle{unsrt}
\bibliography{Bibli.bib}

\begin{thebibliography}{10}

\bibitem{1}
E.~Gardner \&~B. Derrida.
\newblock Optimal storage properties of neural network models.
\newblock {\em J. Phys. A: Math. and Gen}, 1988.

\bibitem{2}
W.~Krauth \&~M. M\'{e}zard.
\newblock Storage capacity of memory networks with binary couplings.
\newblock {\em J. Phys. France}, 1989.

\bibitem{watkin1993statistical}
Timothy~LH Watkin, Albrecht Rau, and Michael Biehl.
\newblock The statistical mechanics of learning a rule.
\newblock {\em Reviews of Modern Physics}, 65(2):499, 1993.

\bibitem{seung1992statistical}
HS~Seung, Haim Sompolinsky, and N~Tishby.
\newblock Statistical mechanics of learning from examples.
\newblock {\em Physical Review A}, 45(8):6056, 1992.

\bibitem{engel2001statistical}
A.~Engel \& C.~Van den Broeck.
\newblock {\em Statistical mechanics of learning}.
\newblock Cambridge university press, 2001.

\bibitem{NishimoriBook01}
H.~Nishimori.
\newblock {\em Statistical Physics of Spin Glasses and Information Processing:
  An Introduction}.
\newblock Oxford University Press, Oxford, UK, 2001.

\bibitem{talagrand2006parisi}
Michel Talagrand.
\newblock The {P}arisi formula.
\newblock {\em Annals of mathematics}, pages 221--263, 2006.

\bibitem{talagrand2003spin}
Michel Talagrand.
\newblock {\em Spin glasses: a challenge for mathematicians: cavity and mean
  field models}, volume~46.
\newblock Springer Science \& Business Media, 2003.

\bibitem{8}
M.~M\'{e}zard \&~A. Montanari.
\newblock {\em Information, Physics, and Computation}.
\newblock Oxford Graduate Texts, 2009.

\bibitem{achlioptas2011solution}
Dimitris Achlioptas, Amin Coja-Oghlan, and Federico Ricci-Tersenghi.
\newblock On the solution-space geometry of random constraint satisfaction
  problems.
\newblock {\em Random Structures \& Algorithms}, 38(3):251--268, 2011.

\bibitem{panchenko2014parisi}
Dmitry Panchenko.
\newblock The {P}arisi formula for mixed $p$-spin models.
\newblock {\em The Annals of Probability}, 42(3):946--958, 2014.

\bibitem{ding2015proof}
Jian Ding, Allan Sly, and Nike Sun.
\newblock Proof of the satisfiability conjecture for large k.
\newblock In {\em Proceedings of the forty-seventh annual ACM symposium on
  Theory of computing}, pages 59--68. ACM, 2015.

\bibitem{kim1998covering}
Jeong~Han Kim and James~R Roche.
\newblock Covering cubes by random half cubes, with applications to binary
  neural networks.
\newblock {\em Journal of Computer and System Sciences}, 56(2):223--252, 1998.

\bibitem{stojnic2013discrete}
Mihailo Stojnic.
\newblock Discrete perceptrons.
\newblock {\em arXiv preprint arXiv:1306.4375}, 2013.

\bibitem{ding2018capacity}
Jian Ding and Nike Sun.
\newblock Capacity lower bound for the {I}sing perceptron.
\newblock {\em arXiv preprint arXiv:1809.07742}, 2018.

\bibitem{BansalSpencer19}
Nikhil Bansal and Joel~H. Spencer.
\newblock On-line balancing of random inputs.
\newblock {\em arXiv preprint arXiv:1903.06898}, 2019.

\bibitem{opper1995statistical}
Manfred Opper.
\newblock Statistical physics estimates for the complexity of feedforward
  neural networks.
\newblock {\em Physical Review E}, 51(4):3613, 1995.

\bibitem{bex1995storage}
Geert~Jan Bex, Roger Serneels, and Christian Van~den Broeck.
\newblock Storage capacity and generalization error for the reversed-wedge
  ising perceptron.
\newblock {\em Physical Review E}, 51(6):6309, 1995.

\bibitem{hosaka2002statistical}
Tadaaki Hosaka, Yoshiyuki Kabashima, and Hidetoshi Nishimori.
\newblock Statistical mechanics of lossy data compression using a nonmonotonic
  perceptron.
\newblock {\em Physical Review E}, 66(6):066126, 2002.

\bibitem{20}
S.~Franz, G.~Parisi, M.~Sevelev, P.~Urbani, and F.~Zamponi.
\newblock Universality of the {SAT-UNSAT} (jamming) threshold in non-convex
  continuous constraint satisfaction problems.
\newblock {\em SciPost Phys}, 2017.

\bibitem{parisi1979infinite}
Giorgio Parisi.
\newblock Infinite number of order parameters for spin-glasses.
\newblock {\em Physical Review Letters}, 43(23):1754, 1979.

\bibitem{parisi1980sequence}
Giorgio Parisi.
\newblock A sequence of approximated solutions to the sk model for spin
  glasses.
\newblock {\em Journal of Physics A: Mathematical and General}, 13(4):L115,
  1980.

\bibitem{parisi1980order}
Giorgio Parisi.
\newblock The order parameter for spin glasses: a function on the interval 0-1.
\newblock {\em Journal of Physics A: Mathematical and General}, 13(3):1101,
  1980.

\bibitem{wendel1962problem}
James~G Wendel.
\newblock A problem in geometric probability.
\newblock {\em Math. Scand}, 11:109--111, 1962.

\bibitem{cover1965geometrical}
Thomas~M Cover.
\newblock Geometrical and statistical properties of systems of linear
  inequalities with applications in pattern recognition.
\newblock {\em IEEE transactions on electronic computers}, (3):326--334, 1965.

\bibitem{shcherbina2003rigorous}
Mariya Shcherbina and Brunello Tirozzi.
\newblock Rigorous solution of the {G}ardner problem.
\newblock {\em Communications in mathematical physics}, 234(3):383--422, 2003.

\bibitem{stojnic2013another}
Mihailo Stojnic.
\newblock Another look at the {G}ardner problem.
\newblock {\em arXiv preprint arXiv:1306.3979}, 2013.

\bibitem{franz2016simplest}
Silvio Franz and Giorgio Parisi.
\newblock The simplest model of jamming.
\newblock {\em Journal of Physics A: Mathematical and Theoretical},
  49(14):145001, 2016.

\bibitem{stojnic2013negative}
Mihailo Stojnic.
\newblock Negative spherical perceptron.
\newblock {\em arXiv preprint arXiv:1306.3980}, 2013.

\bibitem{achlioptas2002asymptotic}
Dimitris Achlioptas and Cristopher Moore.
\newblock The asymptotic order of the random k-{SAT} threshold.
\newblock In {\em Foundations of Computer Science, 2002. Proceedings. The 43rd
  Annual IEEE Symposium on}, pages 779--788. IEEE, 2002.

\bibitem{16}
K.Y.M Wong \& Y.~Kabashima H.~Huang.
\newblock Entropy landscape of solutions in the binary perceptron problem.
\newblock {\em Journal of Physics A: Mathematical and Theoretical}, 2013.

\bibitem{huang2014origin}
Haiping Huang and Yoshiyuki Kabashima.
\newblock Origin of the computational hardness for learning with binary
  synapses.
\newblock {\em Physical Review E}, 90(5):052813, 2014.

\bibitem{zdeborova2008constraint}
Lenka Zdeborov{\'a} and Marc M{\'e}zard.
\newblock Constraint satisfaction problems with isolated solutions are hard.
\newblock {\em Journal of Statistical Mechanics: Theory and Experiment},
  2008(12):P12004, 2008.

\bibitem{braunstein2006learning}
Alfredo Braunstein and Riccardo Zecchina.
\newblock Learning by message passing in networks of discrete synapses.
\newblock {\em Physical review letters}, 96(3):030201, 2006.

\bibitem{baldassi2015subdominant}
Carlo Baldassi, Alessandro Ingrosso, Carlo Lucibello, Luca Saglietti, and
  Riccardo Zecchina.
\newblock Subdominant dense clusters allow for simple learning and high
  computational performance in neural networks with discrete synapses.
\newblock {\em Physical review letters}, 115(12):128101, 2015.

\bibitem{baldassi2016unreasonable}
Carlo Baldassi, Christian Borgs, Jennifer~T Chayes, Alessandro Ingrosso, Carlo
  Lucibello, Luca Saglietti, and Riccardo Zecchina.
\newblock Unreasonable effectiveness of learning neural networks: From
  accessible states and robust ensembles to basic algorithmic schemes.
\newblock {\em Proceedings of the National Academy of Sciences},
  113(48):E7655--E7662, 2016.

\bibitem{zdeborova2008locked}
Lenka Zdeborov{\'a} and Marc M{\'e}zard.
\newblock Locked constraint satisfaction problems.
\newblock {\em Physical review letters}, 101(7):078702, 2008.

\bibitem{zdeborova2011quiet}
Lenka Zdeborov{\'a} and Florent Krzakala.
\newblock Quiet planting in the locked constraint satisfaction problems.
\newblock {\em SIAM Journal on Discrete Mathematics}, 25(2):750--770, 2011.

\bibitem{achlioptas2008algorithmic}
Dimitris Achlioptas and Amin Coja-Oghlan.
\newblock Algorithmic barriers from phase transitions.
\newblock In {\em Foundations of Computer Science, 2008. FOCS'08. IEEE 49th
  Annual IEEE Symposium on}, pages 793--802. IEEE, 2008.

\bibitem{krzakala2009hiding}
Florent Krzakala and Lenka Zdeborov{\'a}.
\newblock Hiding quiet solutions in random constraint satisfaction problems.
\newblock {\em Physical review letters}, 102(23):238701, 2009.

\bibitem{mossel2015reconstruction}
Elchanan Mossel, Joe Neeman, and Allan Sly.
\newblock Reconstruction and estimation in the planted partition model.
\newblock {\em Probability Theory and Related Fields}, 162(3-4):431--461, 2015.

\bibitem{coja2018information}
Amin Coja-Oghlan, Florent Krzakala, Will Perkins, and Lenka Zdeborov\'{a}.
\newblock Information-theoretic thresholds from the cavity method.
\newblock {\em Advances in Mathematics}, 333:694--795, 2018.

\bibitem{friedgut1999sharp}
Ehud Friedgut.
\newblock Sharp thresholds of graph properties, and the k-{SAT} problem.
\newblock {\em Journal of the American mathematical Society}, 12(4):1017--1054,
  1999.

\bibitem{martin2004frozen}
OC~Martin, M~M{\'e}zard, and O~Rivoire.
\newblock Frozen glass phase in the multi-index matching problem.
\newblock {\em Physical review letters}, 93(21):217205, 2004.

\bibitem{3}
C.~Sch{\"u}lke.
\newblock {\em Statistical physics of linear and bilinear inference problems}.
\newblock PhD thesis, Universit{\'e} Paris Diderot - La Sapienza, 2016.

\bibitem{22}
J.R.L de~Almeida and D.J Thouless.
\newblock Stability of the {S}herrington-{K}irkpatrick solution of a spin glass
  model.
\newblock {\em J. Phys. A: Math. Gen}, 1978.

\bibitem{13}
\&~M.Virasoro M.M\'{e}zard, G.Parisi.
\newblock {\em Spin glasses and beyond}.
\newblock World Science, Singapore, 1987.

\bibitem{25}
R.~Monasson.
\newblock Structural glass transition and the entropy of the metastable states.
\newblock {\em Physical Review Letter}, (75 2847), 1995.

\bibitem{Mezard1987}
Marc Mezard, Giorgio Parisi, and Miguel~Angel Virasoro.
\newblock {Spin Glass Theory and Beyond}, 1987.

\end{thebibliography}

\section{Appendices}
%

\subsection{General replica calculation}
\label{appendix:general_replica_calculation}
We present here the replica computation for general prior distribution $P_w$ and constraint function $\varphi$. In order to compute the quenched average of the free entropy, we consider the partition function of $n \in \mathbb{N}$ identical copies of the initial system. Using the replica trick, and an analytical continuation, the averaged free entropy $\phi$ of the initial system  reads:
\begin{align}
	\phi(\alpha ) &\equiv  \lim_{N\rightarrow +\infty} \frac{1}{N} \mathbb{E}_{\tbf{X}}[\log(\mathcal{Z(\tbf{X})})] = \lim_{N\rightarrow +\infty} \lim_{n\rightarrow 0} \frac{1}{N}\frac{ \partial \log \left(\mathbb{E}_{\tbf{X}}[\mathcal{Z}(\tbf{X})^n]\right)}{\partial n}\, ,
\label{appendix:free_entropy}
\end{align} 
where the replicated partition function can be written as 
\begin{align}
\mathbb{E}_{\tbf{X}}[\mathcal{Z}(\tbf{X})^n] &=  \int d\tbf{X}P_X(\tbf{X}) \mathcal{Z}(\tbf{X})^n = \int d\tbf{X}P_X(\tbf{X})  \prod_{a=1}^n  \int d\tbf{w}^a P_w(\tbf{w}^a)\int d\tbf{z}^a  \mathcal{C}(\tbf{z}^a)\delta(\tbf{z}^a-\tbf{X}\tbf{w}^a)\,,
\label{appendix:average_Zn}
\end{align}
with the global constraint function $\mathcal{C}(\tbf{z}) = \displaystyle \prod_{\mu=1}^M \varphi(z_{\mu})$.

 
We suppose that inputs are $iid$ distributed from $P_X \triangleq \mathcal{N}\left(0,\frac{1}{N}\right)$. More precisely, for $i,j\in [1:N]$, $\mu,\nu \in [1:M]$, $\mathbb{E}_\tbf{X}[X_{i\mu} X_{j \nu}] = \frac{1}{N} \delta_{\mu\nu} \delta_{ij}$. Hence $z_{\mu}^a =\sum_{i=1}^N X_{i\mu}w_i^a$ is the sum of $iid$ random variables. The central limit theorem insures that $z_{\mu}^a \sim \mathcal{N}\left(\mathbb{E}_{\tbf{X}}[z_\mu^a]  ,\mathbb{E}_{\tbf{X}}[z_\mu^a z_\mu^b] \right)$, with two first moments:
\begin{equation}
	\begin{cases}
		\mathbb{E}_{\tbf{X}}[z_\mu^a] =\sum_{i=1}^N \mathbb{E}_{\tbf{X}}[X_{i\mu}] w_i^a =0\\
		\mathbb{E}_{\tbf{X}}[z_\mu^a z_\mu^b] = \sum_{ij} \mathbb{E}_{\tbf{X}}[X_{i\mu}X_{j\mu}] w_i^a w_j^b  = \frac{1}{N}\sum_{ij}  \delta_{ij} w_i^a w_j^b=\frac{1}{N} \sum_{i=1}^N w_i^a w_i^b  \, .
	\end{cases}
\end{equation}

In the following we introduce the symmetric overlap matrix $\tbf{Q}\equiv(\frac{1}{N} \sum_{i=1}^N w_i^a w_i^b)_{a,b=1..n}$. Define $\tbf{\tilde{z}}_{\mu} \equiv (z^a_{\mu})_{a=1..n}$ and $\tbf{\tilde{w}}_i \equiv (w_i^a)_{a=1..n}$.  $\tbf{\tilde{z}}_{\mu}$ follows a multivariate gaussian distribution $\tbf{\tilde{z}}_{\mu} \sim P_{\tilde{z}} \triangleq \mathcal{N}( \tbf{0}, \tbf{Q})$ and $
        P_{\tilde{w}}(\tbf{\tilde{w}}) = \prod_{a=1}^n
        [\delta(\tilde{w}_a-1) + \delta(\tilde{w}_a+1)]$. Introducing the change of variable and the Fourier representation of the $\delta$-Dirac function that involves a new parameter $\tbf{\hat{Q}}$:
$$1 = \int d\tbf{Q} \prod_{a \leq b} \delta \left(NQ_{ab}-\sum_{i=1}^N w_i^a w_i^b \right) = \int d\tbf{Q} \int d\tbf{\hat{Q}} \exp \left(-\frac{N}{2}\\Tr(\tbf{Q\hat{Q}} )  \right)  \exp\left( \frac{1}{2}\sum_{i=1}^N \tbf{\tilde{w}}_i^{\intercal} \tbf{\hat{Q}} \tbf{\tilde{w}}_i\right),$$
the replicated partition function becomes an integral over the matrix parameters $\tbf{Q}$ and $\tbf{\hat{Q}}$, that can be evaluated using Laplace method in the $N \to \infty$ limit,
\begin{align}
	\mathbb{E}_{\tbf{X}}[\mathcal{Z}(\tbf{X})^n] &= \int d \tbf{Q} d\tbf{\hat{Q}} e^{-N \left( \frac{1}{2}\Tr(\tbf{Q\hat{Q}} ) - \log\left(\int d\tbf{\tilde{w}} P_{\tilde{w}}(\tbf{\tilde{w}})  e^{ \frac{1}{2}\tbf{\tilde{w}}^{\intercal} \tbf{\hat{Q}} \tbf{\tilde{w}} }  \right) - \alpha \log\left( \int  d\tbf{\tilde{z}} P_{\tilde{z}}(\tbf{\tilde{z}}) \mathcal{C}(\tbf{\tilde{z}}) \right)  \right) }\\
	&= \int d \tbf{Q} d\tbf{\hat{Q}} e^{-N S_n (\tbf{Q},\tbf{\hat{Q}} ) } \underset{N \to \infty}{\simeq} e^{-N \cdot  \text{SP}_{\tbf{Q}, \tbf{\hat{Q}}} \left\{ S_n(\tbf{Q},\tbf{\hat{Q}}) \right\}}, 
	\label{appendix:expectation_Zn}
\end{align}
where $\textrm{SP}$ states for saddle point and we defined
\begin{equation}
     \begin{cases}
     S_n (\tbf{Q},\tbf{\hat{Q}}) = \frac{1}{2}\Tr(\tbf{Q\hat{Q}}) -\log(\mathcal{I}^w_n (\tbf{\hat{Q}}))-\alpha\log \left(\mathcal{I}^z_n(\tbf{Q}) \right)
      \vspace{0.2cm} \\
        \mathcal{I}^w_n (\tbf{\hat{Q}}) =  \int_{\mathbb{R}^n} d\tbf{\tilde{w}} P_{\tilde{w}}(\tbf{\tilde{w}})  e^{ \frac{1}{2}\tbf{\tilde{w}}^{\intercal} \tbf{\hat{Q}} \tbf{\tilde{w}} }    \vspace{0.2cm} \\
        \mathcal{I}^z_n(\tbf{Q}) =   \int_{\mathbb{R}^n}  d\tbf{\tilde{z}} P_{\tilde{z}}(\tbf{\tilde{z}}) \mathcal{C}(\tbf{\tilde{z}}).
     \end{cases}
     \label{appendix:S_n}
\end{equation}

Finally, using eq.~\eqref{appendix:free_entropy} and switching the two limits $n\to 0$ and $N \to \infty$, the quenched free entropy $\phi$ simplifies as a saddle point equation
\begin{equation}
\phi (\alpha) = - \text{SP}_{ \tbf{Q}, \tbf{\hat{Q}} } \left\{\lim_{n\rightarrow 0} \frac{\partial S_n(\bold{Q},\bold{\hat{Q}})}{\partial  n} \right\} ,
\label{appendix:free_entropy2}
\end{equation}
over general symmetric matrices $\tbf{Q}$ and $\tbf{\hat{Q}}$. In the following we will assume simple ansatz for these matrices that allows to get analytic expressions in $n$ in order to take the derivative.

\subsection{RS entropy}
\label{appendix:replica_rs}

Let's compute the functional $S_n(\tbf{Q},\tbf{\hat{Q}})$ appearing in the free entropy eq.~\eqref{appendix:free_entropy2} in the simplest ansatz: the Replica Symmetric ansatz. This later assumes that all replica remain equivalent with a common overlap $q_0 = \frac{1}{N} \sum_{i=1}^N w_i^a w_i^b$ for $a\ne b$ and a norm $Q= \frac{1}{N} \sum_{i=1}^N w_i^a w_i^a$, leading to the following expressions of the matrices $\tbf{Q}$ and $\tbf{\hat{Q}} \in \mathbb{R}^{n\times n}$:
\begin{equation}
\begin{aligned}[c]
\tbf{Q} =\begin{pmatrix} 
 Q & q_0 & ... & q_0 \\
 q_0 & Q & ... & ...  \\
 ... &... & ... & q_0  \\
 q_0 &... & q_0 & Q    \\
\end{pmatrix}
\end{aligned}
\hspace{0.5cm}
\textrm{ and } 
\hspace{0.5cm}
\begin{aligned}[c]
\tbf{\hat{Q}}=\begin{pmatrix} 
 \hat{Q} & \hat{q}_0 & ... & \hat{q}_0\\
\hat{q}_0 &\hat{Q} & ... & ...  \\
 ... &... & ... & \hat{q}_0  \\
\hat{q}_0 &... & \hat{q}_0 & \hat{Q}\\  
\end{pmatrix} .
\end{aligned}
\end{equation}

Let's compute separately the terms involved in the functional $S_n(\tbf{Q},\tbf{\hat{Q}})$ eq.~\eqref{appendix:S_n}: the first is a trace term, the second a term of prior $\mathcal{I}^w_n$ and finally the third a term depending on the constraint $\mathcal{I}^z_n$.

\paragraph{Trace term} 
The trace term can be easily computed and takes the following form:
\begin{equation}
	\left.\frac{1}{2}\Tr(\tbf{Q\hat{Q}}) \right|_{\rm RS} =\frac{1}{2} \left( n Q\hat{Q} + n(n-1) q_0\hat{q}_0 \right).
\end{equation}

\paragraph{Prior integral} Evaluated at the RS fixed point, and using a gaussian identity also known as a Hubbard-Stratonovich transformation, the prior integral can be further simplified 
\begin{align}
	\left.\mathcal{I}^w_n (\tbf{\hat{Q}})\right|_{\rm RS} &= \int d\tbf{\tilde{w}} P_{\tilde{w}}(\tbf{\tilde{w}})  e^{ \frac{1}{2}\tbf{\tilde{w}}^{\intercal} \tbf{\hat{Q}} \tbf{\tilde{w}} } =   \int d\tbf{\tilde{w}} P_{\tilde{w}}(\tbf{\tilde{w}})  \exp{ \left( {\frac{(\hat{Q}- \hat{q}_0  )}{2}\sum_{a=1}^n (\tilde{w}^a)^2}\right)} \exp{ \left(\hat{q}_0 \left(  \sum_{a=1}^n  \tilde{w}^a  \right)^2 \right)}\\
	&= \int Dt \left [ \int d w P_w(w)  \exp{ \left( {\frac{(\hat{Q}- \hat{q}_0  )}{2}  w^2}+ t\sqrt{\hat{q}_0} w \right)}  \right]^n . 
\end{align}

\paragraph{Constraint integral} 
Recall the vector $\tbf{\tilde{z}}\sim P_{\tilde{z}} \triangleq \mathcal{N}(\tbf{0},\tbf{Q})$ follows a gaussian distribution with zero mean and covariance matrix $\tbf{Q}$. In the RS ansatz, the covariance can be rewritten as a linear combination of the identity $\tbf{I}$ and $\tbf{J}$ the matrix with all ones entries of size $n \times n $: $\left. \tbf{Q} \right|_{\rm RS} = (Q-q_0) \tbf{I} + q_0 \tbf{J}$, that allows to split the variable $z^a = \sqrt{q_0} t + \sqrt{Q-q_0} u^a  $ with  $t \sim \mathcal{N}(0,1)$ and $ \forall a,~ u_a \sim \mathcal{N}(0,1)$.  Finally, the constraint integral reads:
\begin{align}
\left. \mathcal{I}^z_n(\tbf{Q}) \right|_{\rm RS} &= \int  d\tbf{\tilde{z}} P_{\tilde{z}}(\tbf{\tilde{z}})  \mathcal{C} (\tbf{\tilde{z}}) = \int Dt    \int  \prod_{a=1}^n Du^a  \varphi\left(\sqrt{q_0} t + \sqrt{Q-q_0} u^a \right)\\
	 &=  \int Dt    \left[ \int  Du  \varphi\left(\sqrt{q_0} t + \sqrt{Q-q_0} u \right) \right]^n . 
\end{align}

\paragraph{Summary and RS free entropy $\phi_{\rm RS}$}
Finally putting pieces together, the functional $S_n$ taken at the RS fixed point has an explicit formula and dependency in $n$:
 
\begin{align}
	\left. S_n(\tbf{Q},\tbf{\hat{Q}}) \right|_{\rm RS} &= \left.\frac{1}{2}\Tr(\tbf{Q\hat{Q}}) -\log(\mathcal{I}_w^n (\tbf{\hat{Q}}))-\alpha\log \left(\mathcal{I}_z^n(\tbf{Q}) \right) \right|_{\rm RS}\\
	&\underset{n\to 0}{\simeq} \frac{1}{2} \left( n Q\hat{Q} + n(n-1) q_0\hat{q}_0 \right) - n \int Dt \log\left( \int dw P_w(w)  \exp{ \left( {\frac{(\hat{Q}- \hat{q}_0  )}{2}  w^2}+ t\sqrt{\hat{q}_0} w \right)}   \right) \\
	& -n\alpha  \int Dt  \log\left(   \int  Du  \varphi\left(y,\sqrt{q_0} t + \sqrt{Q-q_0} u  \right)  \right).
\end{align}
Finally taking the derivative with respect to $n$ and the $n\to 0$ limit, the RS free entropy has a simple expression
\begin{equation}
	\phi_{\rm RS}(\alpha) = \textrm{SP}_{q_0,\hat{q}_0} \left\{   -\frac{1}{2}Q\hat{Q} + \frac{1}{2}q_0\hat{q}_0+  \mathcal{I}^w_{\rm RS}(\hat{q}_0)   +\alpha  \mathcal{I}^z_{\rm RS}(q_0)    \right\},
\end{equation}
with $Q=\hat{Q}=1$ and the following notations,
\begin{equation}
	\begin{cases}
		\mathcal{I}^w_{\rm RS}(\hat{q}_0) \equiv  \int Dt\log\left(  \int dw P_w(w)  \exp{ \left( {\frac{(\hat{Q}- \hat{q}_0  )}{2}  w^2}+ t\sqrt{\hat{q}_0} w \right)}   \right) \vspace{0.5cm} \\
		\mathcal{I}^z_{\rm RS}(q_0) \equiv  \int Dt  \log\left(    \int  Dz  \varphi\left(\sqrt{q_0} t + \sqrt{Q-q_0} z  \right)  \right)
	\end{cases}.
\end{equation}

\subsection{1RSB entropy}
\label{appendix:replica_1RSB}
The free entropy eq.~\eqref{appendix:free_entropy} can also be evaluated at the simplest non trivial fixed point: the one step Replica Symmetry Breaking ansatz (1RSB). Instead assuming that replicas are equivalent, it assumes that the symmetry between replica is broken and that replicas are clustered in different states, with inner overlap $q_1$ and outer overlap $q_0$. Translating this in a matrix formulation, the matrices can be expressed as
\begin{equation}
		\tbf{Q} = q_0 \tbf{J}_n + \left( q_1 - q_0 \right) \tbf{I}_{\frac{n}{m}} \otimes \tbf{J}_{m} +  \left( Q - q_1 \right) \tbf{I}_n 
		\hspace{0.2cm}
		\textrm{  and  }\hspace{0.2cm}
		\tbf{\hat{Q}} = \hat{q}_0 \tbf{J}_n + \left( \hat{q}_1 - \hat{q}_0 \right) \tbf{I}_{\frac{n}{m}} \otimes \tbf{J}_{m} +  \left( \hat{Q} - \hat{q}_1 \right) \tbf{I}_n \,.
\end{equation}

\paragraph{Trace term} 
Again, the trace term can be easily computed
\begin{equation}
	\left.\frac{1}{2}\Tr(\tbf{Q\hat{Q}}) \right|_{\rm 1RSB} =\frac{1}{2} \left( n Q\hat{Q} + n(m-1)q_1\hat{q}_1 + n(n-m)q_0\hat{q}_0 \right).
	\label{appendix:1RSB_Tr}
\end{equation}

\paragraph{Prior integral}
Separating replicas with different overlaps, the prior integral can be written as
\begin{align}
	\left.\mathcal{I}^w_n (\tbf{\hat{Q}})\right|_{\rm 1RSB} &=  \int d\tbf{\tilde{w}} P_{\tilde{w}}(\tbf{\tilde{w}})  e^{  \frac{(\hat{Q}- \hat{q}_1  )}{2}\sum_{a=1}^n (\tilde{w}^a)^2 + \frac{(\hat{q}_1-\hat{q}_0)}{2} \sum_{k=1}^{\frac{n}{m}} \sum_{a,b=(k-1)m + 1 }^{km} \tilde{w}^a \tilde{w}^b + \frac{\hat{q}_0}{2} \left(\sum_{a=1}^n \tilde{w}^a \right)^2  } \\
	&= \int Dt_0 \left[\int Dt_1\left[ \int dw P_w(w) \exp\left( \frac{(\hat{Q}- \hat{q}_1  )}{2} w^2 + \left(\sqrt{\hat{q}_0}t_0+\sqrt{\hat{q}_1-\hat{q}_0}t_1 \right)w \right) \right ]^{m} \right]^{\frac{n}{m}}
	\label{appendix:1RSB_Iw}
\end{align}

\paragraph{Constraint integral} Again the vector $\tbf{\tilde{z}} \sim P_{\tilde{z}} \triangleq \mathcal{N}(\tbf{0},\tbf{Q})$  follows a gaussian vector with zero mean and covariance $ \left. \tbf{Q} \right|_{\rm 1RSB} = q_0 \tbf{J}_n + \left( q_1 - q_0 \right) \tbf{I}_{\frac{n}{m}} \otimes \tbf{J}_{m} +  \left( Q - q_1 \right) \tbf{I}_n $. The gaussian vector of covariance $\left. \tbf{Q} \right|_{\rm 1RSB}$ can be decomposed in a sum of normal gaussian vectors $t_0 \sim \mathcal{N}(0,1)$, $ \forall k \in [1: \frac{n}{m}], t_k \sim \mathcal{N}(0,1)  $ and $ \forall a \in [(k-1) m +1 : km ]$, $u_a \sim \mathcal{N}(0,1)$: $ z^a = \sqrt{q_0} t_0 + \sqrt{q_1-q_0} t_k  + \sqrt{Q-q_1} u_{a}  $. Finally the constraint integral reads
\begin{align}
\left.  \mathcal{I}^z_n(\tbf{Q}) \right|_{\rm 1RSB}&= \int Dt_0  \int \prod_{k=1}^{\frac{n}{m}}  Dt_k \int \prod_{a=(k-1)m+1}^{km}  Du_a   \varphi(\sqrt{q_0} t_0 + \sqrt{q_1-q_0} t_k  + \sqrt{Q-q_1} u_{a})\\
&= \int Dt_0  \left[ \int Dt_1  \left[ \int Du   \varphi(\sqrt{q_0} t_0 + \sqrt{q_1-q_0} t_1  + \sqrt{Q-q_1} u)\right]^m \right]^{\frac{n}{m}}.
\label{appendix:1RSB_Iz}
\end{align}
 
\paragraph{Summary and 1RSB free entropy $\phi_{\rm 1RSB}$}
Gathering the previous computations eq.~(\ref{appendix:1RSB_Tr}, \ref{appendix:1RSB_Iw}, \ref{appendix:1RSB_Iz}), the functional $S_n$ evaluated at the 1RSB fixed point reads:
\begin{align}
	\left. S_n(\tbf{Q},\tbf{\hat{Q}}) \right|_{\rm 1RSB} &= \left.\frac{1}{2}\Tr(\tbf{Q\hat{Q}}) -\log(\mathcal{I}_w^n (\tbf{\hat{Q}}))-\alpha\log \left(\mathcal{I}_z^n(\tbf{Q}) \right) \right|_{\rm 1RSB}\\
	&\underset{n\to 0}{\simeq}
	\frac{1}{2} \left( n Q\hat{Q} + n(m-1)q_1\hat{q}_1 + n(n-m)q_0\hat{q}_0 \right) \\
	&-\frac{n}{m} \int Dt_0 \log\left(\int Dt_1\left[ \int d\tilde{w} P_w(\tilde{w}) \exp\left( \frac{(\hat{Q}- \hat{q}_1  )}{2} \tilde{w}^2 + \left(\sqrt{\hat{q}_0}t_0+\sqrt{\hat{q}_1-\hat{q}_0}t_1 \right)\tilde{w} \right) \right ]^{m} \right) \\
	& -\alpha\frac{n}{m} \int dy \int Dt_0  \log\left(  \int Dt_1  \left[ \int Du   \varphi(y,\sqrt{q_0} t_0 + \sqrt{q_1-q_0} t_1  + \sqrt{Q-q_1} u)\right]^m  \right).
\end{align}

Let's introduce the replicated free entropy following \cite{25}. We consider $m$ reals replicas of the same system and we imagine we put a small field, that allows the $m$ replicas to fall in the same state. The replicated free entropy is the free entropy corresponding to these $m$ uncorrelated copies in the limit of zero coupling. To compute it, we consider $n'=\frac{n}{m}$ replicas. Denoting $\tbf{q}=(q_0,q_1)$ and $\tbf{\hat{q}}=(\hat{q}_0,\hat{q}_1)$, the replicated free entropy reads as $m$ times the free entropy of $n$ replicas with 1RSB structure:
\begin{align}
	\Phi^{\rm 1RSB}(\alpha) :&= \left( \lim_{N \to \infty} \frac{1}{ N} \mathbb{E}_{\tbf{X}}\left[\log( \mathcal{Z}_m(\tbf{X} )\right] \right)  \simeq  \lim_{N \to \infty} \frac{1}{N} \lim_{n' \rightarrow 0} \frac{\partial \log \left( \mathbb{E}_{\tbf{X}} [ \mathcal{Z}^{mn'}(\tbf{X}) ] \right)}{\partial n'} \\
	& = m \left( \lim_{N \to \infty} \lim_{n\to 0} \frac{1}{ N} \frac{ \partial \log\left(\mathbb{E}[ \mathcal{Z}^{n}(\tbf{X}) ]_{\tbf{X}} \right) }{\partial n}\right) = m \left( - \text{SP}_{ \tbf{Q}, \tbf{\hat{Q}} } \left\{\lim_{n\rightarrow 0} \frac{\partial S_n(\bold{Q},\bold{\hat{Q}})}{\partial  n} \right\} \right)\\
	&= \underset{\tbf{q} , \tbf{\hat{q}}}{\textrm{SP}} \left\{  \frac{m}{2} \left(  q_1\hat{q}_1 - Q\hat{Q} \right) + \frac{m^2}{2} \left(q_0\hat{q}_0 - q_1\hat{q}_1 \right)   + m \mathcal{I}^w_{\rm 1RSB}(\tbf{\hat{q}})   +\alpha m \mathcal{I}^{z}_{\rm 1RSB}(\tbf{q})    \right\}\,.
\end{align}
with $\tbf{t}=(t_0,t_1)$, $g_0^w$ and $f_0^z$ defined in eq.~\eqref{main:f_i_g_i} and
\begin{equation}
		\mathcal{I}^w_{\rm 1RSB}(\tbf{\hat{q}}) = \frac{1}{m}
                \int Dt_0 \log\left(\int Dt_1 g_0^w\left(\tbf{t},\tbf{\hat{q}} \right )^{m}\right) \hspace{0.2cm}
		 \textrm{ and }\hspace{0.2cm}
		\mathcal{I}^{z}_{\rm 1RSB}(\tbf{q}) = \frac{1}{m}
                \int Dt_0  \log\left(  \int Dt_1  f_0^z\left(\tbf{t},\tbf{q}\right)^m \right)\,.
\end{equation}

\subsection{RS Stability}
\label{appendix:AT_stability}
\subsubsection{De Almeida Thouless RS Stability}
The stability of a given saddle point ansatz is related to the positivity the hessian of the functional $S_n$. This stability analysis has first been done by de Almeida Thouless and following \cite{22,1,engel2001statistical}, replicons eigenvalues of the RS ansatz $\lambda_3^A$ and $\lambda_3^B$ can be expressed as functions of $\{g_i^w,f_i^z\}_{i=0}^2$ defined in eq.~\eqref{main:f_z_g_w_rs}:
\begin{align}
		\lambda_3^A(q_0)= \frac{1}{(Q-q_0)^2} \int Dt \frac{\left(f_0^{z}(f_0^{z}-f_2^{z}) + (f_1^{z})^2 \right)^2}{(f_0^{z})^4}(t,q_0)\, , \hspace{0.2cm} \textrm{ and } \hspace{0.2cm}
		\lambda_3^B (\hat{q}_0) = \int Dt \frac{ \left(g_0^{w}g_2^{w} -(g_1^{w})^2  \right)^2 }{(g_0^{w})^4}(t,\hat{q}_0) 
		\, .		
\end{align}
The instability AT-line is defined when the determinant of the hessian vanishes that translates as an implicit equation over $\alpha$, where $q_0,\hat{q}_0$ are solution of the saddle point equations eq.~\eqref{main:phi_RS} at $\alpha=\alpha_{AT}$: 
\begin{align}
	\frac{1}{\alpha_{AT}} &=   \lambda_{3}^A \left(q_0(\alpha_{AT}),\beta \right) \lambda_{3}^B \left(\hat{q}_0(\alpha_{AT}) \right) \,.
\end{align}  

However for $\alpha < \alpha_{AT}$, $(q_0,\hat{q}_0)=(0,0)$ is the only solution. Using $\{\tilde{f}^z_i,\tilde{g}^w_i \}_{i=0}^2$ defined eq.~\eqref{appendix:fgtilde}, this expression simplifies because of the symmetry of the prior distribution $P_w$ and the constraints $\varphi$ in the rectangle and $u-$function cases. In fact the symmetry imposes $\tilde{f}_1^{z}=0$ and $\tilde{g}_1^{w} = 0$ and the condition reads:
\begin{align}
	\frac{1}{\alpha_{AT}} &= \left(\frac{\tilde{f}_2^{z}-\tilde{f}_0^{z}}{\tilde{f}_0^{z}}\right)^2 \left(\frac{\tilde{g}_2^{w} }{\tilde{g}_0^{w}}\right)^2 \, .
	\label{appendix:AT_line}
\end{align}  

\subsubsection{Existence and stability of the RS fixed point $(q_0,\hat{q}_0)=(0,0)$} 
We provide an alternative approach to get the instability condition of the RS solution for symmetric prior and constraint. In this symmetric case, the stability can be derived from the existence and stability of the symmetric fixed point $(q_0,\hat{q}_0)=(0,0)$. Let's define
\begin{align}
	\begin{cases}
	F(q_0) \equiv \alpha  \int Dt \frac{(f_1^{z})^2-2t\sqrt{q_0}f_0^{z}f_1^{z} +q_0 t^2 (f_0^{z})^2}{(1-q_0)^2  (f_0^{z})^2}(t,q_0) \,, \vspace{0.3cm} \\
	G( \hat{q}_0) \equiv  \int Dt \frac{g_2^{w} -  t\hat{q}_0^{-1/2}g_1^{w}}{g_0^{w}}(t,\hat{q}_0) \,,
	\end{cases}
	\textrm{ with }  
\begin{cases}
\tilde{f}_i^{z}(y) \equiv \int Dz z^i \varphi(z) \,, \vspace{0.3cm} \\
\tilde{g}_i^{w} \equiv \int dw w^i P_w(w) e^{ \frac{w^2}{2}}\,.
\end{cases}
\label{appendix:fgtilde}
\end{align}

In fact the saddle point equations at the RS fixed point eq.~\eqref{main:phi_RS} can be written using the functions $F, G$, and can be reduced to a single fixed point equation over $q_0$:
\begin{align}
	\begin{cases}
         q_0=G(\hat{q}_0)\,,  \vspace{0.3cm} \\
         \hat{q}_0=F(q_0)\,,
     \end{cases}
     \Rightarrow
     \begin{cases}
         q_0=G \circ F (q_0) \equiv H(q_0)\, . \\
\end{cases}
\label{appendix:H_q0}
\end{align}

As stressed above, the RS stability is equivalent to the existence and stability of the fixed point $q_0=0$. According to that, let's compute the stability of the above fixed point equation eq.~\eqref{appendix:H_q0}. Computing $F, F', G, G'$ in the limit $(q_0,\hat{q}_0) \to (0,0)$, expanding $\{f_i^z$,$g_i^w\}_i$ as functions of $\{\tilde{f}^z_i,\tilde{g}^w_i \}_i$ and finally using the symmetry that implies $\tilde{f}_1^{z}=0$ and $\tilde{g}_1^{w} = 0$:
\begin{equation}
\begin{cases}
		F(q_0) \underset{q_0 \to 0}{=} \alpha    \left [ \left(\frac{\tilde{f}_1^{z}}{\tilde{f}_0^{z}}\right)^2 +q_0\left( \frac{(\tilde{f}_2^{z}-\tilde{f}_0^{z})^2}{(\tilde{f}_0^{z})^2} +3\frac{(\tilde{f}_1^{z})^4}{(\tilde{f}_0^{z})^4} -4 \frac{(\tilde{f}_1^{z})^2(\tilde{f}_2^{z}-\tilde{f}_0^{z})}{(\tilde{f}_0^{z})^3}   \right)  + \mathcal{O}(q_0^2) \right] {\sim} \alpha q_0    \left(\frac{\tilde{f}_2^{z}-\tilde{f}_0^{z}}{\tilde{f}_0^{z}}\right)^2 \underset{q_0 \to 0}{\longrightarrow} 0 \,, \vspace{0.2cm} \\
		\frac{\partial F}{\partial q_0} (q_0) \underset{q_0 \to 0}{=} \alpha  \left[ \left(\frac{\tilde{f}_2^{z}-\tilde{f}_0^{z}}{\tilde{f}_0^{z}}\right)^2 + \left(\frac{\tilde{f}_1^{z}}{\tilde{f}_0^{z}}\right)^2 \left (3\frac{(\tilde{f}_1^{z})^2}{(\tilde{f}_0^{z})^2} -4 \frac{(\tilde{f}_2^{z}-\tilde{f}_0^{z})}{\tilde{f}_0^{z}}\right)   + \mathcal{O}(q_0) \right] \underset{q_0 \to 0}{\longrightarrow}  \alpha    \left(\frac{\tilde{f}_2^{z}-\tilde{f}_0^{z}}{\tilde{f}_0^{z}}\right)^2 \,, \vspace{0.2cm} \\
		G( \hat{q}_0) \underset{\hat{q}_0 \to 0}{=} \left(\frac{\tilde{g}_1^{w} }{\tilde{g}_0^{w}}\right)^2  + \hat{q}_0\left( \left(\frac{\tilde{g}_2^{w} }{\tilde{g}_0^{w}}\right)^2 + \frac{\tilde{g}_1^{w}}{\tilde{g}_0^{w}} \left(3 \left(\frac{\tilde{g}_1^{w} }{\tilde{g}_0^{w}}\right)^3-4\frac{\tilde{g}_1^{w}\tilde{g}_2^{w}}{(\tilde{g}_0^{w})^2} \right) \right) + \mathcal{O}( \hat{q}_0^{3/2}) \underset{ \hat{q}_0 \to 0}{\longrightarrow} 0  \,, \vspace{0.2cm} \\
		\frac{\partial G}{\partial \hat{q}_0}(\hat{q}_0)  \underset{\hat{q}_0 \to 0}{=}  \left(\frac{\tilde{g}_2^{w} }{\tilde{g}_0^{w}}\right)^2 + \frac{\tilde{g}_1^{w}}{\tilde{g}_0^{w}} \left(3 \left(\frac{\tilde{g}_1^{w} }{\tilde{g}_0^{w}}\right)^3-4\frac{\tilde{g}_1^{w}\tilde{g}_2^{w}}{(\tilde{g}_0^{w})^2} \right) + \mathcal{O}(\sqrt{ \hat{q}_0}) \underset{ \hat{q}_0 \to 0}{\longrightarrow} \left(\frac{\tilde{g}_2^{w} }{\tilde{g}_0^{w}}\right)^2 \,.
\end{cases}
\end{equation}
Finally, the existence and stability conditions of the fixed point $(q_0,\hat{q}_0)=(0,0)$  
translate as an explicit condition over $\alpha$ that defines $\alpha_{AT}$
\begin{equation}
	\begin{cases}
		H(q_0) = G \circ F(q_0) \underset{q_0 \to 0}{\to} 0 \vspace{0.3cm} \\
		\left. \frac{\partial H}{\partial q_0}\right|_{q_0=0} =  \left. \frac{\partial G}{\partial \hat{q}_0}\right|_{\hat{q}_0=0}  \left.\frac{\partial F}{\partial q_0}\right|_{q_0=0} \leq 1 \,, 
	\end{cases}
	\Rightarrow 
	\hspace{0.5cm}
	\alpha \leq  \left[   \left(\frac{\tilde{f}_2^{z}-\tilde{f}_0^{z}}{\tilde{f}_0^{z}}\right)^2 \left(\frac{\tilde{g}_2^{w} }{\tilde{g}_0^{w}}\right)^2 \right]^{-1} \equiv \alpha_{AT}\,.
	\label{stability}
\end{equation}

\subsection{Moments at finite temperature}
\label{moments_finiteT}
In this section we generalize the definition of the partition function for any temperature $T$. The energy of a configuration $\tbf{w}$ is defined as the number of unsatisfied constraints and the corresponding partition function is defined by $\mathcal{Z}(\tbf{X},T) =\sum_{\tbf{w}\in \{\pm1\}^N} e^{-\mathcal{E}(\tbf{w})/T} $. In particular for the rectangle and $u-$function constraints, the partition functions at temperature $T$ read
\begin{align}
	\mathcal Z_r(\tbf{X},T) = \displaystyle \sum_{ \textbf{w} \in
  \{\pm 1\}^N} \prod_{\mu = 1}^M   e^{-\frac{1}{T} \left(1 -
  \mathbbm{1}_{\left |\displaystyle z_\mu (\textbf{w}) \right| \le K }
  \right)} \hspace{0.2cm} \textrm{ and } \hspace{0.2cm} \mathcal
  Z_u(\tbf{X},T) = \displaystyle \sum_{ \textbf{w} \in \{\pm 1\}^N}
  \prod_{\mu = 1}^M   e^{-\frac{1}{T} \left(1 -  \mathbbm{1}_{\left
  |\displaystyle z_\mu (\textbf{w}) \right| \ge K } \right)}\, .
\end{align}
We define the probabilities that constraints are satisfied at temperature $T$:
\begin{equation}
\begin{cases}
		p_{r,K,T} \equiv \int Dz e^{-\frac{1}{T} \left(1 -
                    \mathbbm{1}_{\left |\displaystyle z \right| \le K
                    } \right)} =  e^{-\frac{1}{T}} +
                (1-e^{-\frac{1}{T}}) p_{r,K} \, ,\vspace{0.2cm} \\
	p_{u,K,T} \equiv \int Dz e^{-\frac{1}{T} \left(1 -
            \mathbbm{1}_{\left |\displaystyle z \right| \ge K }
          \right)} = e^{-\frac{1}{T}} + (1-e^{-\frac{1}{T}}) p_{u,K}
        \, , \vspace{0.2cm} \\
	p_{s,K,T} \equiv \int Dz e^{-\frac{1}{T} \left(1 -
            \mathbbm{1}_{\displaystyle z \ge K } \right)} =
        e^{-\frac{1}{T}} + (1-e^{-\frac{1}{T}}) p_{s,K} \, .\vspace{0.1cm} \\
\end{cases}	
\end{equation}

\subsubsection{First moment at finite temperature}
 Let  $\cE^r(N,M,T)$ the event that $\mathcal Z_r(\tbf{X},T)\ge1$.
Let's compute the first moment in the rectangle case,
\begin{align}
	\bbP[ \cE^r (N, \alpha N,T ) ] &\le \E [\mathcal{Z}_{r}(\tbf{X}(N,\alpha N),T)] = 2^N \E \left [ \prod_{\mu =1}^{\alpha N} e^{-\frac{1}{T} \left(1 -  \mathbbm{1}_{\left |\displaystyle z_\mu (\textbf{1}) \right| \le K } \right)}  \right] \\
&= 2^N p_{r,K,T}^{\alpha N} = \exp(N (\log(2)+\alpha \log( p_{r,K,T})  )) \,.
\end{align}
and this derivation holds similarly for the step and $u-$function.

\subsubsection{Second moment at finite temperature}
\label{}
Again we show the computation for the rectangle and it can be done similarly for the $u-$function. 

\paragraph{Expression of $F_{r,K,\alpha,T}$}
\begin{align}
\E [\mathcal Z_r (\tbf{X}(N,\alpha N),T)^2   ] &=  \sum_{\tbf{w_1},\tbf{w_2}\in \{ \pm 1\}^N} \E \left [\prod_{\mu=1}^{\alpha N} e^{-\frac{1}{T} \left(1 -  \mathbbm{1}_{\left |\displaystyle z_\mu (\tbf{w_1}) \right| \le K } \right)} e^{-\frac{1}{T} \left(1 -  \mathbbm{1}_{\left |\displaystyle z_\mu (\tbf{w_2}) \right| \le K } \right)} \right] \\
&= 2^N \sum_{\tbf{w}\in \{ \pm 1\}^N} \prod_{\mu=1}^{\alpha N } \E \left [ e^{-\frac{1}{T} \left \{ \left(1 -  \mathbbm{1}_{\left |\displaystyle z_\mu (\tbf{1}) \right| \le K } \right) + \left(1 -  \mathbbm{1}_{\left |\displaystyle z_\mu (\tbf{w}) \right| \le K } \right) \right \} } \right] \\
&= 2^N \sum_{l=0}^N \binom{N}{l} q_{r,K,T}(l/N)^{\alpha N}  \equiv  \exp( N (\log(2) + F_{r,K,\alpha,T} ) ) \, ,
\end{align}
where we defined $q_{r,K,T}$ the probability that two standard Gaussians with correlation $\beta$ are both at most $K$ in absolute value at temperature $T$. Defining $\rho(\beta) = 1- 2 \beta$ and 
\begin{align}
	\mathcal{I}_{\alpha_1,\beta_1}^{\alpha_2,\beta_2} (\rho) \equiv \int_{\alpha_1}^{\beta_1} \int_{\alpha_2}^{\beta_2}  dx dy \frac{e^{-\frac{1}{2} (x^2 + y^2 + 2\rho xy) }}{2\pi \sqrt{1-\rho^2}} = \frac{1}{2\pi} \int_{\alpha_2}^{\beta_2} \int_{\frac{\alpha_1 + \rho y }{\sqrt{1-\rho^2}} }^{\frac{\beta_1 + \rho y }{\sqrt{1-\rho^2}} }  dy dx e^{-\frac{y^2+x^2}{2}}\,,
\end{align}
the function $ F_{r,K,\alpha,T}$ at finite temperature can be written
\begin{align*}
	 F_{r,K,\alpha,T} &= H(\beta)  +  \alpha \log q_{r,K,T}(\beta)\,,
\end{align*}
where
\begin{align}
	&q_{r,K,T}(\beta)\equiv \int_{\mathbbm{R}^2}  dx dy \frac{e^{-\frac{1}{2} (x^2 + y^2 + 2\rho(\beta) xy) }}{2\pi \sqrt{1-\rho(\beta)^2}} e^{-\frac{1}{T} \left( \left(1 -  \mathbbm{1}_{\left |\displaystyle z_\mu (\tbf{1}) \right| \le K } \right) + \left(1 -  \mathbbm{1}_{\left |\displaystyle z_\mu (\tbf{w}) \right| \le K } \right) \right)} \\
	&= \mathcal{I}_{-K,K}^{-K,K}  + e^{-\frac{1}{T}}  \left( \mathcal{I}_{-\infty,-K}^{-K,K} + \mathcal{I}_{K,+\infty}^{-K,K}  + \mathcal{I}_{-K,K}^{-\infty,-K}  + \mathcal{I}_{-K,K}^{K,+\infty} \right) + e^{-\frac{2}{T}}  \left( \mathcal{I}_{-\infty,-K}^{-\infty,-K} + \mathcal{I}_{-\infty,-K}^{K,+\infty}  + \mathcal{I}_{K,+\infty}^{-\infty,-K}  + \mathcal{I}_{K,+\infty}^{K,+\infty}  \right)\,.
	\label{appendix:q_r_T}
\end{align}

\paragraph{Expression of $\partial_{\beta}F_{r,K,\alpha,T}$ \\}

To compute the derivative of $q_{r,K,T}$, we first introduce
\begin{align*}
	\mathcal{G}_{\gamma}^{\alpha_2,\beta_2}(\rho) \equiv \frac{1}{2\pi} \int_{\alpha_2}^{\beta_2} dy e^{-\frac{y^2}{2}} e^{-\frac{1}{2}\frac{(\gamma + \rho y )}{1-\rho^2} } (y + \gamma \rho) \,.
\end{align*}
The derivative of each integral involved in eq.~\eqref{appendix:q_r_T} can be easily computed as
\begin{align}
	\partial_{\beta} \mathcal{I}_{\alpha_1,\beta_1}^{\alpha_2,\beta_2} (\rho(\beta)) = -\frac{1}{4(\beta (1-\beta))^{3/2}} \left( \mathcal{G}_{\beta_1}^{\alpha_2, \beta_2} -\mathcal{G}_{\alpha_1}^{\alpha_2, \beta_2} 	\right)(\rho(\beta))\,.
\end{align}
Hence taking the derivative of each term of the form $\mathcal{I}_{\alpha_1,\beta_1}^{\alpha_2,\beta_2}$ and simplifying it, the probability $q_{r,K,T}$ reads:
\begin{align*}
	q_{r,K,T}(\beta) &= -\frac{1}{4(\beta (1-\beta))^{3/2}} \left( \mathcal{G}_{K}^{-K,K} -\mathcal{G}_{-K}^{-K,K} 	\right)(\rho) (1-e^{-1/T})^2 
	= \frac{(1-e^{-1/T})^2}{\pi \sqrt{\beta(1-\beta)}} \left( e^{-\frac{K^2}{2(1-\beta)}} \left( e^{\frac{(2\beta-1)K^2}{2(1-\beta)\beta}} -1  \right)  \right) \,.
\end{align*}
In the end, the derivative of the second moment can be evaluated for $\beta = 0$ and $\beta=1$ at all temperature $T$ :
\begin{align}
	\frac{\partial F_{r,K,\alpha,T}}{\partial \beta}(\beta) &= \log\left (\frac{1-\beta}{\beta} \right) +  \frac{\alpha}{q_{r,K,T}}\frac{\partial q_{r,K,T}(\beta)}{\partial \beta}  \\
	&=  \log\left (\frac{1-\beta}{\beta} \right) +  \frac{\alpha}{ q_{r,K,T}(\beta)} \frac{(1-e^{-1/T})^2}{\pi \sqrt{\beta(1-\beta)}} \left( e^{-\frac{K^2}{2(1-\beta)}} \left( e^{\frac{(2\beta-1)K^2}{2(1-\beta)\beta}} -1  \right)  \right) \xrightarrow[\beta \to 1/2 \pm 1/2]{} \pm \infty \,.
	\label{appendix:derivative_Fr}
\end{align}
In particular at $T=0$,
\begin{align}
	\frac{\partial F_{r,K,\alpha}}{\partial \beta}(\beta) &=
                                                                \log\left
                                                                (\frac{1-\beta}{\beta}
                                                                \right)
                                                                +
                                                                \frac{\alpha}{
                                                                q_{r,K,T}(\beta)}
                                                                \frac{1}{\pi
                                                                \sqrt{\beta(1-\beta)}}
                                                                \left(
                                                                e^{-\frac{K^2}{2(1-\beta)}}
                                                                \left(
                                                                e^{\frac{(2\beta-1)K^2}{2(1-\beta)\beta}}
                                                                -1
                                                                \right)
                                                                \right)
                                                                \, . 
\end{align}

\paragraph{Expression of $\partial_{\beta}F_{u,K,\alpha,T}$\\}
Adapting the previous steps and using
\begin{align*}
		&q_{u,K,T}(\beta) \equiv \int_{\mathbbm{R}^2}  dx dy \frac{e^{-\frac{1}{2} (x^2 + y^2 + 2\rho(\beta) xy) }}{2\pi \sqrt{1-\rho(\beta)^2}} e^{-\frac{1}{T} \left( \left(1 -  \mathbbm{1}_{\left |\displaystyle z_\mu (\tbf{1}) \right| \le K } \right) + \left(1 -  \mathbbm{1}_{\left |\displaystyle z_\mu (\tbf{w}) \right| \le K } \right) \right)} \\
&= \left( \mathcal{I}_{-\infty,-K}^{-\infty,-K} + \mathcal{I}_{-\infty,-K}^{K,+\infty}  + \mathcal{I}_{K,+\infty}^{-\infty,-K}  + \mathcal{I}_{K,+\infty}^{K,+\infty}  \right)   + e^{-\frac{1}{T}}  \left( \mathcal{I}_{-\infty,-K}^{-K,K} + \mathcal{I}_{K,+\infty}^{-K,K}  + \mathcal{I}_{-K,K}^{-\infty,-K}  + \mathcal{I}_{-K,K}^{K,+\infty} \right) + e^{-\frac{2}{T}}  \left( \mathcal{I}_{-K,K}^{-K,K} \right)\\
		&= q_{r,K,-T}e^{-\frac{2}{T}}\,,
\end{align*}
and eq.~\eqref{appendix:derivative_Fr} the derivative for the $u-$function is straightforward to compute and is given by
\begin{align*}
	\frac{\partial F_{u,K,\alpha,T}}{\partial \beta}(\beta) &= \log\left (\frac{1-\beta}{\beta} \right) +  \frac{\alpha}{q_{u,K,T}(\beta)}\frac{\partial q_{u,K,T}}{\partial \beta}(\beta)  \\
	&=  \log\left (\frac{1-\beta}{\beta} \right)  + \frac{\alpha}{ q_{u,K,T}(\beta)} \frac{(e^{-1/T}-1)^2}{\pi \sqrt{\beta(1-\beta)}} \left( e^{-\frac{K^2}{2(1-\beta)}} \left( e^{\frac{(2\beta-1)K^2}{2(1-\beta)\beta}} -1  \right)  \right)   \\
	& \xrightarrow[\beta \to 1/2 \pm 1/2]{} \pm \infty \,.
\end{align*}

\end{document}